\newcommand{\controlu}{*-=[][F]{\phantom{\bullet}}}
\newtheorem{theorem}{Theorem}[section]
\newtheorem{fact}[theorem]{Fact}
\newtheorem{corollary}[theorem]{Corollary}
\newtheorem{lemma}[theorem]{Lemma}
\newtheorem{definition}[theorem]{Definition}
\newtheorem{example}[theorem]{Example}
\newtheorem{remark}[theorem]{Remark}
\newtheorem*{remark*}{Remark}
\newcommand{\CC}{\mathcal{C}}
\newcommand{\QFT}{\mathrm{QFT}}
\renewcommand{\kappa}{\ell}
\newcommand{\poly}{{\sf poly}}
\newcommand{\warn}[1]{{ \color{red}#1} }
\newcommand{\rank}{\mathrm{rank}}
\newcommand{\sdotfill}{\textcolor[rgb]{0.8,0.8,0.8}{\dotfill}} 
\title{
  Communication Complexity of Distributed Unitary Synthesis
}
\date{}
\author[1]{Longcheng Li \thanks{lilongcheng116@gmail.com} }
\author[1,3]{Xiaoming Sun  \thanks{sunxiaoming@ict.ac.cn } $^{\mathparagraph}$ }
\author[2,3]{Jialin Zhang \thanks{zhangjialin@ict.ac.cn} }
\author[1]{Jiadong Zhu \thanks{zhujiadong2016@163.com} $^{\mathparagraph}$ }
\affil[1]{State Key Lab of Processors, Institute of Computing Technology, Chinese Academy of Sciences, Beijing 100190, China}
\affil[2]{Institute of Computing Technology, Chinese Academy of Sciences, Beijing 100190, China}
\affil[3]{School of Computer Science and Technology, University of Chinese Academy of Sciences, Beijing 100049, China}
\begin{document}
\maketitle

\renewcommand{\thefootnote}{\textparagraph}  
\footnotetext{Corresponding authors.}  
\renewcommand{\thefootnote}{\arabic{footnote}}

\begin{abstract}

We study space-bounded communication complexity for unitary implementation in distributed quantum processors, where we restrict the number of qubits per processor to ensure practical relevance and technical non-triviality. We model distributed quantum processors using distributed quantum circuits with nonlocal two-qubit gates, defining the distributed communication complexity of a unitary as the minimum number of such nonlocal gates required for its realization, up to permutations of data qubit positions.

Our contributions are twofold. First, for general \(n\)-qubit unitaries, we improve upon the trivial \(O(4^n)\) communication bound. With \(k\) pairwise-connected processors (each having \(n/k\) data qubits and \(m\) ancillas), we prove the communication complexity satisfies \(O\left(\max\{4^{(1-1/k)n - m}, n\}\right)\) —e.g., \(O(2^n)\) when \(m=0\) and \(k=2\)—and show this upper bound is tight. We also extend the analysis to approximation and general topology models. Second, for special unitaries, we demonstrate that both Quantum Fourier Transform (QFT) and Clifford circuits admit linear upper bounds on communication complexity within the exact model. This outperforms the trivial quadratic communication complexity that applies to these two unitaries. In the approximation model, QFT’s communication complexity reduces drastically from linear to logarithmic, while Clifford circuits still require at least linear communication. These results provide fundamental insights into optimizing communication for distributed quantum unitary implementation, advancing the feasibility of large-scale DQC systems.
\end{abstract}

\section{Introduction}

Communication Complexity is a foundational research topic in distributed quantum computing (DQC), dedicated to quantifying the minimum number of qubits that need to be transmitted to complete distributed computing tasks \cite{10.1145/1250790.1250866,10.1145/3519935.3519954,10.1145/3717823.3718155,10.1145/3732772.3733509,li2025device}. DQC encompasses two core cost components: intra-processor computing and inter-processor communication \cite{denchev2008distributed,sundaram2024distributing,cuomo2025architectures}. While intra-processor computing is already computationally intensive and physically challenging to implement, inter-processor communication introduces even more severe bottlenecks—positioning the optimization of the communication complexity as a central priority in DQC research \cite{pant2019routing,li2022fidelity,sundaram2022distribution,victora2023entanglement,gong2024optimal}.

In quantum information science, the study of the communication complexity is frequently tied to two core motivations. For classical computing tasks, a key focus is investigating the gap between quantum and classical communication complexity—more precisely, whether qubits can achieve identical computational objectives with lower communication overhead than classical bits  \cite{10.1145/333623.333628,10.1145/1250790.1250866,10.1145/2698587,10.1145/3519270.3538441,10.1145/3625225}. Most research in this domain does not constrain the memory of local processors (in both classical and quantum communication settings)  \cite{10.5555/2634074.2634210,10.1145/3357713.3384243,10.1145/3618260.3649604,10.1145/3717823.3718155}, rendering superlinear communication lower bounds unachievable: a processor could simply transmit its entire input, evading meaningful communication optimization. Here, communication complexity stems from the information gap where one processor lacks access to the inputs of other processors. For inherently quantum tasks, such as remote preparation of quantum states \cite{le2023distributed, chen2024preparing} and distributed implementation of quantum algorithms \cite{yimsiriwattana2004distributed, parekh2021quantum, feng2024distributed}, however, the limitations of current quantum hardware drive the need for distributed solutions. Contemporary quantum processors are restricted by their qubit capacity, making DQC—which integrates multiple quantum processors—a vital strategy for tackling large-scale quantum tasks beyond the capabilities of single-processor systems \cite{boschero2025distributed}. Importantly, aligned with the hardware constraint of limited qubits per processor, this branch of study incorporates a space bound: it seeks not only to minimize communication complexity but also to limit the number of qubits stored in each individual processor. In this line of research, communication complexity primarily arises not from the information gap, but from the intrinsic complexity of space-bounded quantum circuits.

This paper addresses the second motivation of communication complexity research.  
We model DQC using distributed quantum circuits—collections of quantum circuits that support nonlocal two-qubit gates, which act on data qubits belonging to different processors.  
In practice, such nonlocal gates can be implemented either directly~\cite{song2024realization} or indirectly via quantum teleportation~\cite{long2022interfacing,liu2024nonlocal,main2025distributed} or cat-entangler/disentangler constructions~\cite{yimsiriwattana2004distributed}.  
Since these realization methods differ only by a constant factor in communication cost, the distinction is immaterial to the asymptotic analysis.  
Because quantum computation can fundamentally be modeled as the application of a unitary transformation to a quantum system, we study how to implement a given unitary operation in a distributed setting while minimizing the associated communication complexity.  
Specifically, we consider an \(n\)-qubit unitary \(U\) realized across \(k\) processors, each equipped with \(n/k\) data qubits
and \(m\) ancilla qubits ($m \leq n$). We define the \emph{distributed communication complexity} (DCC) \footnote{
    As our model allows choosing the partition of data qubits,
    we name the cost measure as \emph{distributed} communication complexity to
    distinguish it from the conventional communication complexity where the input partition is given as part of the problem definition.
    }
of \(U\) as the minimum number of nonlocal two-qubit gates\footnote{While communication complexity is often measured by the number of exchanged qubits, the number of nonlocal two-qubit gates provides a more natural measure in the circuit model. Both metrics are equivalent up to a constant factor.} required for the \(k\) processors to implement \(U\), where optimization is performed over all permutations of data qubit positions. This formulation differs critically from conventional communication complexity, in which the input partition between parties is fixed a priori and cannot be adjusted. In our model, we do not restrict qubits to fixed processing units; instead, we aim to minimize genuine communication overhead by allowing arbitrary balanced partitions. By contrast, canonical problems in standard communication complexity would become degenerate if arbitrary partitions were permitted. For example, the classic disjointness problem becomes trivial if Alice and Bob may freely rearrange the input bits between them, rather than following a predetermined split. Our definition thus captures the intrinsic communication cost of distributed unitary implementation, independent of artificial constraints on qubit placement (see \Cref{sec:formaldefs} for formal definitions).

First, we study the distributed communication complexity of \emph{general \(n\)-qubit unitaries}, a topic of primary theoretical interest. A fundamental result in quantum circuit synthesis shows that implementing an arbitrary \(n\)-qubit unitary requires \(\Theta(4^n)\) two-qubit gates in the non-distributed setting~\cite{shende2004minimal}. Extending this to distributed quantum systems, the straightforward upper bound on communication complexity remains \(O(4^n)\) when the number of ancilla qubits \(m\) is limited (e.g., when \(m\) is sublinear in \(n\)). The central question is whether this bound can be improved and, more generally, how to characterize the trade-off between communication complexity and the available ancilla resources.

Second, we analyze the distributed communication complexity of two important classes of unitaries: \emph{the Quantum Fourier Transform (QFT) and Clifford circuits}. The QFT is a fundamental component of many quantum algorithms, such as Shor’s factoring algorithm \cite{shor1994algorithms} and quantum phase estimation \cite{kitaev1996quantum}, whereas Clifford circuits form the backbone of quantum error correction \cite{lidar2013quantum}. Understanding their communication cost in distributed settings is therefore crucial for developing large-scale, fault-tolerant quantum computing systems.
Both QFT and Clifford circuits require a quadratic number of two-qubit gates in their standard, non-distributed realizations. Under the constraint of limited ancilla qubits (e.g., when $m$ is constant), directly partitioning these circuits across processors leads to a trivial quadratic communication cost. Our goal is to refine this result to provide a tight characterization of their communication complexity in distributed settings, thereby enabling more efficient implementations in practice.

When analyzing distributed communication complexity, we consider two key model
variations: unitary implementations may be either exact or approximate, and the
connection topology of processors may be either a complete (pairwise-connected)
graph or an arbitrary connected graph. For all unitaries studied, our analysis begins with the
exact implementation model under complete connectivity. Then we further extend
the results to the approximate model, the general topology model, or both.

\subsection{Contribution and Outline}

In response to these research questions, we make two primary contributions to the field, as summarized in \Cref{tab:main}.

\begin{itemize}
    \item Distributed Communication Complexity of General Unitaries: 
    \begin{itemize}
        \item We improve upon the trivial \(O(4^n)\) bound by proving: for pairwise-connected processors (where any pair of processors can execute nonlocal two-qubit gates), 
        any $n$-qubit unitary can be implemented with communication complexity  \(O\left(\max\{4^{(1-1/k)n - m}, n\}\right)\). Notably, we provide a matching lower bound, i.e., there exists an $n$-qubit unitary that 
        requires \(\Omega\left(\max\{4^{(1-1/k)n - m}, n\}\right)\) communication, 
        showing that our construction is tight. To illustrate the improvement over the trivial bound, consider the case when \(m=0\) and \(k=2\): our result reduces the communication complexity to \(\Theta(2^n)\), a quadratic improvement compared with \(O(4^n)\). 
        \item  In the approximate model, to implement any $n$-qubit unitary within error $\epsilon$, we prove that communication complexity is lower bounded by $\Omega\left(\max\left\{4^{(1-1/k)n-m} \frac{\log(1/\epsilon)}{n}, n\right\}\right)$, which differs from that of the exact case by a linear factor. 
        
        \item For the general topology model\footnote{In the general
        topology model, processors are connected by an arbitrary
        connected graph $G$, and a nonlocal two-qubit gate is allowed only
        between two processors connected by an edge of $G$.}, we prove an upper
        bound of $O(\max\{4^{(1-1/k)n-m},Dn\})$, where the interconnect topology
        between processors is described by a graph \(G\) with diameter \(D\). An
        interesting interpretation of this result is that relaxing the topology
        from pairwise-connected to general does not significantly blow up the
        communication: the diameter \(D\) of the topology graph only impacts the
        bound when the typical leading term $4^{(1-1/k)n-m}$ is very small—and
        this scenario arises only in unrealistic cases where \(m\) is nearly as
        large as \(n\). 
    \end{itemize} 
     \item Distributed Communication Complexity of Special Unitaries: 
\begin{itemize}
    \item In the exact model, we improve upon the trivial quadratic bound by showing that both the \(n\)-qubit QFT and Clifford circuits can be implemented with \(O(kn)\) communication complexity among \(k\) pairwise-connected processors when each processor has one ancilla qubit (\(m=1\)). For the two-processor case, this reduces to \(O(n)\). We further establish a matching \(\Omega(n)\) lower bound that holds even with unbounded ancilla qubits, demonstrating that our construction is tight for two processors. 
    \item In the approximate model, we show that allowing a small implementation error \(\epsilon\) in QFT leads to a significant reduction in communication to \(O(k\log (n/\epsilon))\) while maintaining \(m=1\). In the two-processor case, this yields an exponential improvement from \(\Theta(n)\) to \(O(\log (n/\epsilon))\). In contrast, there exists a Clifford circuit which still requires at least linear communication, even when approximation is permitted. 
    \item In the general topology model, the QFT can be implemented over any connected processor graph with \(O(kn)\) communication, incurring no additional topology overhead. In contrast, Clifford circuits require \(O(Dkn)\) communication, where \(D\) is the diameter of the topology graph.
\end{itemize}

\end{itemize}

\begin{table}[t]
\centering
\caption{Summary of communication complexity of different unitaries \label{tab:main}}

\renewcommand{\arraystretch}{1.25}
\setlength{\tabcolsep}{4.5pt}
\small
\begin{tabular}{|c||c||c|c|c|c|}
\hline
\multirow{2}{*}{\makecell[c]{\textbf{Unitary}\\\textbf{Type}}} 
& \multirow{2}{*}{\textbf{Previous Known}} 
& \multicolumn{3}{c|}{\textbf{Pairwise-Connected}} 
& \multirow{2}{*}{\makecell[c]{\textbf{General Topology}\\\textbf{(Exact)}}} \\ 
\cline{3-5}
& & \multicolumn{2}{c|}{\textbf{Exact}} & \textbf{Approximate} & \\ 
\hline

General
& $O(4^n)$ (Trivial) 
& \multicolumn{2}{c|}{$\Theta\!\left(\max\{\mathcal{A},\, n\}\right)$} 
& $\Omega\!\left(\max\{\mathcal{A}\tfrac{\log(1/\epsilon)}{n},\, n\}\right)$ 
& $O\!\left(\max\{\mathcal{A},\, Dn\}\right)$ \\ 
\hline

QFT
& \makecell[c]{$k=2$: $O(n)$ \cite{neumann2020imperfect}\\
$k>2$: $O(n^2)$ (Trivial)} 
& $O(kn)$ 
& $\Omega(n)$ 
& $O\!\left(k\log(n/\epsilon)\right)$ 
& $O(kn)$ \\ 
\hline

Clifford
& $\tilde{O}(n^2)$ (Trivial) 
& $O(kn)$ 
& $\Omega(n)$ 
& $\Omega(n)$ 
& $O(Dkn)$ \\ 
\hline
\end{tabular}

\vspace{1mm}

\begin{minipage}{0.91\linewidth}
\footnotesize
$\mathcal{A} = 4^{(1-1/k)n - m}$;
$n$ — number of data qubits; 
$k$ — number of processors;
$m$ — number of ancilla qubits per processor;
$D$ — diameter of the topology graph;
$\epsilon$ — approximation error in spectral norm.

\textbf{a)} ``Trivial'' means the direct partitioning of the corresponding non-distributed circuit.
\textbf{b)} For QFT and Clifford, the upper bound results hold for arbitrarily small $m\geq 1$, while the lower bound results hold for arbitrarily large $m$.

\end{minipage}
\end{table}

The paper is organized as follows: In Section \ref{sec:relatedwork}, we provide
a detailed overview of related work; in Section \ref{sec:discussion}, we briefly
outline several future directions for this research; in \Cref{sec:tech}, we give
an overview of the key techniques used in our proofs. Section \ref{sec:prelim}
introduces the notation and background concepts from quantum information
science. In Section \ref{sec:formaldefs}, we present the formal definitions of
the distributed communication complexity of unitaries, including its
approximate and multi-party variants. Sections \ref{sec:commk}, \ref{sec:QFT},
and \ref{sec:Clifford} are devoted to studying the distributed communication complexity of
general unitary operations, QFT, and Clifford circuits, respectively.

\subsection{Related Work}\label{sec:relatedwork}

Communication complexity theory, initiated by Andrew Yao \cite{10.1145/800135.804414}, stands as a central branch of theoretical computer science. Since Yao’s foundational work, researchers have investigated the communication complexity of numerous fundamental functions—including Set Disjointness \cite{RAZBOROV1992385,6686203}, Inner Product \cite{10.1145/3428671}, Voting \cite{10.5555/2343776.2343781}, Submodular Maximization \cite{10.1145/3588564}, and other problems \cite{10.1145/2746539.2746560,10.1145/2684464.2684470,10.1145/3055399.3055407,10.1145/3406325.3451083,li_et_al:LIPIcs.DISC.2024.32}—as well as across various models, including randomized, non-deterministic, and fault-tolerant models \cite{10.1145/800061.808737, 10.5555/3381089.3381194,10.1145/3519935.3520078,10.1145/3618260.3649604}. Concurrently, a suite of key analytical techniques has been developed to bound communication complexity, such as the fooling set method, the rank bounds method, and the probabilistic method \cite{10.1145/800070.802208,4568106,PATURI1986106,10.1145/276698.276883,10.1016/j.jcss.2003.11.006,doi:10.1137/080733644}. Complementing these efforts, ongoing work has focused on advancing the general theory of communication complexity and formalizing its complexity classes \cite{10.1145/800061.808742,4568225,10.1145/2422436.2422456,10756137}.


The communication complexity of implementing standard quantum algorithms in a
distributed manner has been widely investigated in the literature. Examples
include distributed versions of Shor's algorithm
\cite{yimsiriwattana2004distributed,xiao2022distributed}, Grover’s search
\cite{qiu2024distributed}, Simon’s algorithm \cite{tan2022distributed}, quantum phase estimation \cite{neumann2020imperfect}, and
quantum simulation \cite{feng2024distributed}. These studies primarily
aim to design distributed implementations of specific algorithms with low
inter-processor cost communication. In \cite{neumann2020imperfect}, the authors design a distributed QFT that achieves linear communication across two processors as a subroutine, but they neither provide lower-bound analyses nor extend their study to the multi-processor setting. A concurrent work
\cite{ebnenasir2025lower} derives a lower bound on the communication cost of
distributed QFT, but its model differs from
ours in that it does not permit general circuit transformations.

A separate line of research investigates the communication cost of distributed
quantum circuits in a non-asymptotic setting. In this framework, a given
distributed circuit is treated as input, and the goal is to reduce its
communication cost through heuristic or compiler-level optimization techniques
\cite{zomorodi2018optimizing,andres2019automated,dadkhah2021new,g2021efficient,sundaram2022distribution,wu2023qucomm}.
These approaches rely mainly on empirical evaluation to demonstrate
effectiveness and do not provide formal performance guarantees. 

\subsection{Open Questions and Discussions}\label{sec:discussion}

In this work, we establish a comprehensive framework for analyzing the communication complexity of distributed unitary implementation and derive several tight bounds on the distributed communication complexity of (i) general \(n\)-qubit unitaries and (ii) the QFT and Clifford circuits. Despite these advances, several important gaps remain open.

For general \(n\)-qubit unitaries, we have shown a tight \(\Theta\!\left(\max\{4^{(1-1/k)n - m},\, n\}\right)\) communication complexity in the exact model under fully connected processor topology. In the approximate model, however, the best known lower bound reduces to \(\Omega\!\left(\max\{4^{(1-1/k)n - m}\tfrac{\log(1/\epsilon)}{n},\, n\}\right)\), leaving a linear gap when error $\epsilon\geq 1/\mathrm{poly}(n)$ and the number of ancilla qubits \(m\) is limited. It would be valuable to determine whether this lower bound is tight or can be strengthened by removing the \(1/n\) term.  
For general topologies, the upper bound increases to \(O(\max\{4^{(1-1/k)n - m},\, Dn\})\), where \(D\) denotes the diameter of the topology graph. An open question is whether \(\Omega(Dn)\) communication is indeed necessary when unbounded ancilla qubits are available.

For QFT and Clifford circuits, we have established a tight \(\Theta(n)\) communication complexity in the exact two-processor setting. For \(k\) processors, the current upper bound generalizes to \(O(kn)\), but no matching lower bound is known. Current proof techniques, based on rank methods or information-theoretic arguments, fail to derive superlinear lower bounds. Another direction is to determine whether the \(O(\log(n/\epsilon))\)-communication QFT construction across two processors is optimal in the approximate model. Interestingly,  \cite{chen2023quantum} shows that the QFT without bit reversal generates only a constant amount of entanglement, suggesting the possibility of approximately implementing QFT with even lower, potentially constant, communication cost. 

Finally, we expect the proposed framework to be broadly applicable to the study of communication complexity in other quantum computing tasks—such as Uhlmann transformation~\cite{bostanci2023unitary}, quantum state preparation~\cite{sun2023asymptotically,chen2026scalable}, and quantum simulation~\cite{feng2024distributed}—under distributed settings. More broadly, we hope that this work lays the foundation for understanding the communication complexity of inherently quantum tasks and its trade-off with local quantum memory in distributed quantum computation.

\begin{figure}
    \centering
    \includegraphics[width=\textwidth]{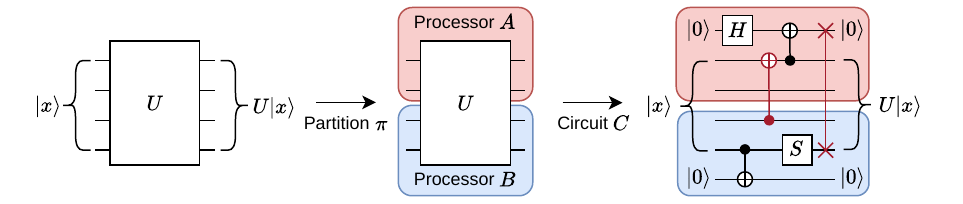}
    \caption{
    Illustration of the model setting for $k=2$ processors and $m=1$ ancilla.
    (i) The leftmost is the target unitary $U$ with $n=4$ input qubits.
    (ii) The middle is $U$ placed on two processors $A$ and $B$ by choosing a balanced partition $\pi$ on input qubits.
    (iii) The rightmost is an example quantum circuit $C$ that implements $U$ with one clean ancilla qubit on each processor.
    The communication cost of $C$ is $2$ as there are two nonlocal two-qubit gates (marked red).
    The distributed communication complexity of $U$ with ancilla bound $m$ is minimum number of nonlocal two-qubit gates optimized over all possible $\pi$ and $C$.
    } \label{fig:main}
\end{figure}

\subsection{Technical Overview} \label{sec:tech}

This section overviews the techniques behind our proofs. For clarity, we focus on the two-processor case ($k=2$), where the topology graph can be ignored. The main difficulties we overcome are threefold.

\textbf{Upper bounds: synthesis for communication.} Traditional circuit synthesis methods (e.g., quantum Shannon decomposition, PLU decomposition) were designed for non‑distributed settings, where the goal is to minimize circuit size or depth; how to adapt them to optimize communication cost was not known. At the same time, most prior work on distributed quantum computation simply takes a fixed circuit and cuts it across processors---an easy approach that avoids redesigning the circuit but often yields poor communication. Thus, neither line of work addresses the important but poorly understood problem of \emph{synthesizing a circuit from scratch with communication as the primary objective}. We show that this can be done by rethinking the synthesis process: for general unitaries, we recursively apply a decomposition that limits nonlocal interactions to a small fraction of qubits at each level, leading to an quadratic reduction in communication compared to the trivial bound; for the QFT, we reorganize the circuit so that exact implementation uses only linear communication, and by allowing a small approximation error we further reduce it to logarithmic; for Clifford circuits, we factor the transformation into simple blocks that can each be executed with linear communication. These innovations show that synthesis can achieve far lower communication than simply cutting a known circuit, a direction that previous distributed studies did not pursue.

\textbf{Communication complexity under bounded ancillas.} 
Most prior communication complexity studies allow unlimited local computing power (arbitrarily many ancillas). In that regime, the communication cost of any unitary is at most linear in the number of qubits, and many clean information‑theoretic lower bounds become possible. However, when we restrict each processor to only $m$ ancilla qubits, the landscape changes dramatically—for both upper and lower bounds. On the upper bound side, the limited ancillas make it nontrivial to design efficient protocols: we can no longer freely store or buffer intermediate states across processors. We overcome this by carefully packing input qubits using only $O(n)$ nonlocal SWAPs, then applying a synthesis procedure whose communication cost is $O(2^{n-2m})$. This cost becomes exponential when $m$ is small and decreases rapidly as $m$ increases. On the lower bound side, bounded ancillas also invalidate standard information‑theoretic arguments (which rely on unlimited local resources). We extend a classical technique—parameter counting combined with covering numbers for polynomial images—to the bounded‑ancilla setting, yielding a matching exponential lower bound. Thus, bounding ancillas forces communication to become exponential (in contrast to the linear cost under unlimited ancillas), and establishing tight bounds in this regime requires new algorithmic and proof techniques on both sides.

\textbf{Lower bounds universal over all balanced partitions.} Our lower bounds must hold for \emph{every} balanced partition of the $n$ input qubits. There are exponentially many such partitions ($\binom{n}{n/2}$), which is far more demanding than a fixed-partition bound. We overcome this by constructing specific unitaries that are hard regardless of how the qubits are split: the shifted inner-product unitary for linear lower bound, and random invertible matrices for Clifford circuits. These constructions force $\Omega(n)$ communication under any balanced partition, and the proof requires new combinatorial arguments to argue about all partitions simultaneously.

Below, we elaborate on the proof ideas for general unitaries, the quantum Fourier transform, and Clifford circuits.

\paragraph{Model setting.} As in 
\Cref{fig:main}, we aim to implement an $n$-qubit unitary 
$U$ using two quantum processors $A$ and $B$. We may choose a balanced partition $\pi$ of
the $n$ input qubits between the two processors, so that each processor holds
$n/2+O(1)$ qubits. In addition, each processor is provided with $m$ clean
ancilla qubits to model the local space constraint. A quantum
circuit $C$ on $(n+2m)$ qubits implements (resp., approximates) $U$ if, on the
input qubits, it acts exactly (resp., approximately) as $U$. We define the
communication cost of $C$ to be the number of \emph{nonlocal} two-qubit gates, namely,
two-qubit gates whose endpoints lie on different processors. The \emph{distributed communication
complexity of $U$ with ancilla bound $m$} is the minimum communication cost over all balanced
partitions $\pi$ and all circuits $C$.

\subsubsection{General Unitaries}
\paragraph{Upper bound.} Implementing an arbitrary $n$-qubit unitary on a single processor requires
$\Theta(4^n)$ two-qubit gates \cite{shende2004minimal}, which gives a trivial
$O(4^n)$ bound on the distributed communication complexity. Our first 
result shows that one can do substantially better. 

\begin{theorem}[Informal $k=2$ version of \Cref{lem:upperk}]\label{thm:inf-general-upper}
Any $n$-qubit unitary $U$ can be implemented with
$O(\max\{2^{n-2m},n\})$ nonlocal two-qubit gates.
\end{theorem}

We first design a synthesis algorithm with no ancillas and $O(2^n)$ communication. The main idea is to adapt the quantum Shannon decomposition (QSD) to the distributed setting. A single QSD step
decomposes an $n$-qubit unitary $U$ into four $(n-1)$-qubit unitaries
interleaved with three $n$-qubit uniformly controlled rotation (UCR) gates.
Applying QSD recursively for $t$ levels yields:
(1) $4^t$ local $(n-t)$-qubit unitaries acting locally on Processor~$B$; and
(2) for each level $i\in\{0,1,\ldots,t-1\}$, exactly $3\cdot 4^i$ UCRs on
$(n-i)$ qubits, with the cut after the $(t-i)$-th qubit.

The key ingredient is the following lemma.
\begin{lemma}[Informal version of \Cref{lem:ucdecomp}]
An $n$-qubit UCR can be implemented using $O(2^t)$ nonlocal two-qubit gates
when Processor~$A$ holds $t$ qubits and Processor~$B$ holds the remaining
$n-t$ qubits.
\end{lemma}
Therefore, each UCR at level $i$ costs $O(2^{t-i})$ 
communication, so the total communication is
\(
\sum_{i=0}^{t-1} 3\cdot 4^i \cdot O(2^{t-i}) = O(2^{2t}).
\)
Setting $t=n/2$ gives an $O(2^n)$-communication implementation with no
ancillas.
For the general case $m\ge 0$, we first use $O(n)$ nonlocal SWAPs to
pack the inputs so that only $t=n/2-m$ input qubits remain on one processor.
We then apply the same synthesis procedure, which costs
$O(2^{2t})=O(2^{n-2m})$. Hence the total communication cost is
$O(\max\{2^{n-2m},\,n\})$.

\paragraph{Lower bound.}
We next prove matching lower bounds. The first captures the exponential
space-bounded lower bound; the second gives the linear lower bound
even with unlimited ancillas.

\begin{theorem}[Informal $k=2$ version of \Cref{lem:lowapprox}]\label{thm:inf-general-exp-lower}
For $\epsilon\geq 2^{-n}$, there exists an $n$-qubit unitary $U$ such that any
$\epsilon$-approximate implementation of $U$ requires
$
\Omega\!\left(\frac{2^{n-2m}\log(1/\epsilon)}{n}\right)
$
nonlocal two-qubit gates.
\end{theorem}

We begin with \emph{exact} implementation. Fix a balanced partition and a pattern
of $\ell$ nonlocal CNOT gates\footnote{
    For asymptotics, we may assume the nonlocal gates are CNOTs, since any
    two-qubit gate can be implemented by a constant number of CNOTs and
    single-qubit gates.
}. Any $(n+2m)$-qubit distributed circuit can be written as
\begin{equation} \label{eq:normal}
(A_0\otimes B_0)\cdot {\rm CNOT}_1\cdot (A_1\otimes B_1)\cdot {\rm CNOT}_2 \cdots {\rm CNOT}_\ell \cdot (A_\ell\otimes B_\ell),
\end{equation}
an alternating circuit of local unitaries $A_i,B_i$ interleaved with
nonlocal CNOT gates. Each $A_i$ or $B_i$ acts on a space of dimension
$2^{n/2+m+O(1)}$, so the whole family of such circuits have
$O(\ell\,2^{n+2m})$ real parameters. Since the full unitary group $U(2^n)$ has
real dimension $4^n$, parameter counting forces
\[
O(\ell\,2^{n+2m})\ge 4^n \Longrightarrow \ell=\Omega(2^{n-2m}).
\]

To make this lower bound robust under approximation, we count $\epsilon$-balls instead. 
The circuit family generated by \eqref{eq:normal} can be viewed as the image $\mathcal{I}$
of a polynomial map, and by a recent covering-number bound for polynomial
images \cite{zhang2023covering}, we obtain an upper bound
$\exp(O(\ell\,2^{n+2m}n))$ on the number of $\epsilon$-balls needed to cover
$\mathcal{I}$. In contrast, the number of $2\epsilon$-balls needed to cover the whole
$U(2^n)$ is known as 
$\exp(\Omega(4^n\log \frac1{\epsilon}))$ \cite{barthel2018fundamental}. If we
assume $\mathcal{I}$ approximates every unitary in $U(2^n)$ to within $\epsilon$,
then any $\epsilon$-cover of $\mathcal{I}$ must be a $2\epsilon$-cover of $U(2^n)$. 
Comparing the two bounds gives
\[
O(\ell\,2^{n+2m} n)\ge 4^n\log (1/\epsilon)
\Longrightarrow
\ell=\Omega\!\left(\frac{2^{n-2m}\log(1/\epsilon)}{n}\right).
\]

We now turn to the second ingredient, which yields the linear term.

\begin{theorem}[Informal $k=2$ version of \Cref{lem:twoomegan}]\label{thm:inf-general-lin-lower}
There exists an $n$-qubit unitary $U$ that requires $\Omega(n)$ nonlocal two-qubit gates, even with unlimited ancillas and approximations.
\end{theorem}

A natural first attempt is to use the unitary computing the inner-product
function ${\rm IP}$, i.e.,
\(
U_{\mathrm{IP}}\ket{x,y,z}=\ket{x,y,\,z\oplus \mathrm{IP}(x,y)},
\)
since ${\rm IP}$ is known to require linear quantum communication
complexity. However, $U_{\rm IP}$ only requires $O(1)$ communication 
in our model: because one can place relevant pairs
$(x_j,y_j)$ on the same processor and compute most of the inner products
locally, leaving only $O(1)$ communication. To overcome this issue,
we instead consider the \emph{shifted} inner-product function
\[
\mathrm{SIP}(i,x,y)=\bigoplus_{j=0}^{n-1} x_j\,y_{(j+i)\bmod n},
\]
where the shift $i$ is part of the input. Since the partition must be fixed
before the input is known, for any balanced partition there exists a choice of
$i$ such that $\Omega(n)$ of the relevant pairs are separated across the two
processors. This forces the corresponding unitary $U_{\mathrm{SIP}}$ to 
require $\Omega(n)$ communication under any balanced partition.

Combining \Cref{thm:inf-general-exp-lower,thm:inf-general-lin-lower}, and then
setting $\epsilon=2^{-n}$ in the first theorem, we obtain the exact lower bound
$\Omega(\max\{2^{n-2m},\,n\})$. Hence 
\Cref{thm:inf-general-upper} is asymptotically tight.

\subsubsection{Quantum Fourier Transform (QFT)}

We next study the communication complexity of the $n$-qubit QFT unitary, denoted by 
$\QFT_n$, defined for simplicity without the final bit reversal\footnote{The
same upper and lower bounds hold for the standard QFT with bit reversal; see
\Cref{sec:reversal}.} (see \Cref{fig:qft-overview}). 

\begin{theorem}[Informal version of \Cref{thm:qft}]\label{thm:inf-qft}
Exact implementation of $\QFT_n$ requires $\Theta(n)$ nonlocal two-qubit gates.
\end{theorem}

\paragraph{Upper bound.} The upper bound uses the standard grouping of the
$\QFT_n$ circuit into layers $S_1,\ldots,S_{n-1}$ by target qubit
\cite{neumann2020imperfect}, and requires only one ancilla. Under the partition
where Processor~$A$ holds the first $n/2$ qubits and Processor~$B$ holds the
last $n/2$ qubits, each layer $S_i$ can be implemented with two nonlocal
SWAPs: move the target qubit to an ancilla on the other processor, perform the
remaining gates locally, and swap it back. This gives an exact implementation
using $O(n)$ communication.

\begin{figure}[htbp]
    \centering
    \begin{tikzpicture}
    \yquantdefinebox{dots2}[inner sep=0pt]{$\dots$}
    \yquantdefinebox{vdotsqft2}[inner sep=0pt]{$\vdots$}    
    \yquantdefinebox{ghostqft2}[inner sep=0pt]{$\quad\quad$}
    \begin{yquant}
        qubit {$\ket{j_1}$} j1;
        qubit {$\ket{j_2}$} j2;
        nobit ellipsis;
        vdotsqft2 ellipsis;
        qubit {$\ket{j_{n-1}}$} jn1;
        qubit {$\ket{j_{n}}$} jn;
        h j1;
        [this subcircuit box style={dashed, "$S_1$"}]
        subcircuit {
            qubit {} j1;
            qubit {} j2;
            nobit ellipsis;
            qubit {} jn1;
            qubit {} jn;
            
            box {$R_2$} j1 | j2;
            ghostqft2 ellipsis;
            dots2 ellipsis;
            box {$R_{n-1}$} j1 | jn1;
            box {$R_{n}$} j1 | jn;
        } (j1, j2, ellipsis, jn1, jn);
        h j2;
        [this subcircuit box style={dashed, "$S_2$"}]
        subcircuit {
            qubit {} j1;
            qubit {} j2;
            nobit ellipsis;
            qubit {} jn1;
            qubit {} jn;
            dots2 ellipsis;
            box {$R_{n-2}$} j2 | jn1;
            box {$R_{n-1}$} j2 | jn;
        } (j1, j2, ellipsis, jn1, jn);
        dots2 ellipsis;
        h jn1;
        [this subcircuit box style={dashed, "$S_{n-1}$"}]
        subcircuit {
            qubit {} j1;
            qubit {} j2;
            nobit ellipsis;
            qubit {} jn1;
            qubit {} jn;
            box {$R_{2}$} jn1 | jn;
        }  (j1, j2, ellipsis, jn1, jn);
        h jn;
    \end{yquant}
\end{tikzpicture}
    \caption{The QFT circuit without bit reversal, and the groupings of controlled rotation gates.}
    \label{fig:qft-overview}
\end{figure}

\paragraph{Lower bound.}
The lower bound is more subtle and continues to hold even with unlimited
ancillas. A natural first idea is to use $\QFT_{2n}$ to transmit an $n$-bit
string $x$ from Alice to Bob, and then invoke the fact that sending $n$ bits
requires $\Omega(n)$ communication. For instance, using the standard QFT with
bit reversal, $\QFT_{2n}^{\uparrow}$, one has
$
\QFT_{2n}^{\uparrow}\ket{x}_A\ket{0^n}_B
=
H^{\otimes n}\ket{0^n}_A \otimes \QFT_n^{\uparrow}\ket{x}_B,
$
so Bob could apply $(\QFT_n^{\uparrow})^{-1}$ to recover $\ket{x}$. 

However, this approach fails for three reasons:
(i) The lower bound must hold for \emph{every} balanced partition, whereas
    the above protocol depends on a particular partition.
(ii) Our target unitary is $\QFT_{2n}$ \emph{without} the final bit
    reversal, so even for a fixed partition it is not clear how to use it to
    transmit $x$ from Alice to Bob.
(iii) Information-theoretic transmission arguments are robust under
    approximation, but \Cref{thm:inf-qft-approx} shows that approximate QFT can
    be implemented with only $O(\log n)$ communication. Hence such an argument
    cannot prove the $\Omega(n)$ lower bound for exact implementation.

Instead, we use rank arguments from communication complexity. We sketch the
proof only for the specific partition in which Alice and Bob
hold the first $n$ and the last $n$ qubits, respectively, 
and extending it to any partitions is straightforward.
Consider the following experiment:
\begin{enumerate}
    \item Alice and Bob prepare the state $\ket{x}_{A}\ket{0^n}_{B}$, where
    $x\in\{0,1\}^n$ is uniformly random.
    \item They apply $\QFT_{2n}$.
    \item Bob applies $H^{\otimes n}$ and measures his qubits to obtain
    $y\in\{0,1\}^n$.
\end{enumerate}
This produces a joint output probability matrix $M_{x,y}$. 
By standard rank arguments \cite{zhang2012quantum}, exactly generating $M$ requires
$\Omega(\log {\rm rank}(M))$ communication. Thus it suffices to show that $M$
has exponentially large rank. We do this by showing that, up to an invertible
transformation, the $y$-th column of $M$ is the evaluation of a degree-$y$
polynomial $f_y$ on a set $(a_x)_{x\in[n]}$ containing at least $2^{n-1}$
distinct values. It follows that the first $2^{n-1}$ columns of $M$ are
linearly independent, since a nonzero polynomial of degree $\leq 2^{n-1}$
has at most $2^{n-1}$ roots. Hence ${\rm rank}(M)\ge 2^{n-1}$,
and implementing $\QFT_{2n}$ requires
$\log 2^{n-1}=\Omega(n)$ communication.

\smallskip \emph{Approximation.} Interestingly, this exact lower bound is \emph{not} robust under approximation.
\begin{theorem}[Informal version of \Cref{lem:aqft-log}] \label{thm:inf-qft-approx}
$\QFT_n$ can be implemented with $O(\log n)$ nonlocal two-qubit gates,
up to $1/\poly(n)$ approximation error.
\end{theorem}

This follows from the approximate quantum Fourier transform (AQFT),
obtained by dropping all controlled rotations whose angles are smaller than
$2\pi/2^{\Omega(\log n)}$, while still approximating $\QFT_n$ within
$1/\poly(n)$ error. Under the same balanced partition, each group $S_i$ spans only $O(\log n)$ qubits, so only $O(\log n)$ groups remain
nonlocal. Applying the previous exact construction to these groups yields an
$O(\log n)$-communication implementation.

\subsubsection{Clifford Circuits}

Finally, we analyze the communication
complexity of $n$-qubit Clifford circuits. Although Clifford circuits 
require $\widetilde{\Theta}(n^2)$ gates in a non-distributed
implementation, we show that 
\begin{theorem}[Informal version of \Cref{thm:cnot}]
    $n$-qubit Clifford circuit can be 
    implemented with $\Theta(n)$ non-local two-qubit gates.
\end{theorem}

To prove this theorem, it suffice to prove the same bound for CNOT circuits, by
the canonical decomposition of Clifford circuits \cite{aaronson2004improved}.

\paragraph{Upper bound.} 
An $n$-qubit CNOT circuit is specified by an invertible matrix
$M\in\mathbb{F}_2^{n\times n}$ acting as $\ket{x}\mapsto\ket{Mx}$. We first show
that if $M$ is lower triangular, then the circuit can be implemented with
$O(n)$ communication. Indeed, $M$ admits the block decomposition
\[
\begin{pmatrix} M_{AA} & 0\\ M_{BA} & M_{BB}\end{pmatrix}
=
\begin{pmatrix} M_{AA} & 0\\ 0 & \mathbbm{I}\end{pmatrix}
\begin{pmatrix} \mathbbm{I} & 0\\ M_{BA} & \mathbbm{I}\end{pmatrix}
\begin{pmatrix} \mathbbm{I} & 0\\ 0 & M_{BB}\end{pmatrix},
\]
where the first and third factors are local. The middle factor can be
implemented row by row: for each target qubit on Processor~$B$,
Processor~$A$ computes the relevant parity onto one ancilla, sends it through a
single nonlocal CNOT, and then uncomputes the ancilla. This gives an
implementation using $O(n)$ nonlocal CNOTs.

For a general CNOT circuit, we use a PLU decomposition $M=PLU$, where $P$ is a
permutation matrix, $L$ is lower triangular, and $U$ is upper triangular. The
permutation $P$ can be implemented using $O(n)$ nonlocal SWAPs, while $L$ and
$U$ each require $O(n)$ nonlocal CNOTs by the same argument. Therefore, any
$n$-qubit CNOT circuit can be implemented using $O(n)$ nonlocal gates.

\paragraph{Lower bound.} A fixed-partition lower bound is immediate: the CNOT
circuit $\ket{x,y}\mapsto\ket{x,x\oplus y}$ lets Processor~$A$ send the entire
$n$-bit string $x$ to Processor~$B$, and hence requires $\Omega(n)$
communication. But this does not suffice in our model, since the partition may
be chosen to make the circuit mostly local.

Instead, we construct a CNOT circuit that is hard for \emph{every} balanced
partition. By the probabilistic method, for all sufficiently large $n$ there
exists an invertible matrix $M\in\mathbb{F}_2^{n\times n}$ such that every
$n/2\times n/2$ submatrix of $M$ has rank $\Omega(n)$. Let $T$ be the
corresponding CNOT circuit. We show that under any balanced partition, $T$
enables the transmission of $\Omega(n)$ bits from one processor to the other,
and therefore requires $\Omega(n)$ communication. This lower bound remains
valid even with unlimited ancillas and approximation.
\section{Preliminaries}\label{sec:prelim}

We assume basic familiarity with quantum computing and quantum information; for
a comprehensive introduction see \cite{nielsen2010quantum}. In this section, we review some
backgrounds used extensively in this work.

\subsection{Norms and Covering Number}

We first define matrix norms that will be used in this paper.


\begin{definition}[Frobenius norm]
    For a matrix $A\in \mathbb{C}^{d\otimes d}$, 
    the Frobenius norm is defined as
    \[
    \|A\|_F:=\sqrt{\sum_{i,j=1}^d |A_{i,j}|^2}.
    \]
\end{definition}

\begin{definition}[Spectral norm]
    For a matrix $A\in \mathbb{C}^{d\otimes d}$, the spectral norm is defined as
    \[
    \|A\|_2\;:=\;\max_{\|x\|=1}\|Ax\|,
    \]
    where $\|\cdot\|$ denotes the Euclidean norm of a vector.
\end{definition}

\begin{fact}
    For a matrix $A\in \mathbb{C}^{d\otimes d}$, 
    \(
    \|A\|_2\leq \|A\|_F \leq \sqrt{d}\|A\|_2.
    \)
\end{fact}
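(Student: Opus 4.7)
The plan is to reduce both inequalities to elementary comparisons between singular values via the singular value decomposition (SVD). Let $A=U\Sigma V^{*}$ where $U,V$ are unitary and $\Sigma=\mathrm{diag}(\sigma_{1},\ldots,\sigma_{d})$ with $\sigma_{1}\geq \sigma_{2}\geq \cdots \geq \sigma_{d}\geq 0$. I would first verify (or simply cite as standard) that both norms are unitarily invariant: for the Frobenius norm this comes from $\|UAV\|_{F}^{2}=\Tr(V^{*}A^{*}U^{*}UAV)=\Tr(A^{*}A)=\|A\|_{F}^{2}$, and for the spectral norm it follows from $\|UAVx\|=\|A(Vx)\|$ together with the fact that $V$ permutes the unit sphere. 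Consequently $\|A\|_{F}=\|\Sigma\|_{F}=\sqrt{\sum_{i}\sigma_{i}^{2}}$ and $\|A\|_{2}=\|\Sigma\|_{2}=\sigma_{1}$.

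Given these identities, both bounds are immediate. The left inequality follows from $\sigma_{1}^{2}\leq \sum_{i=1}^{d}\sigma_{i}^{2}$, which gives $\|A\|_{2}^{2}\leq \|A\|_{F}^{2}$. The right inequality follows from $\sum_{i=1}^{d}\sigma_{i}^{2}\leq d\cdot\sigma_{1}^{2}$, which gives $\|A\|_{F}^{2}\leq d\|A\|_{2}^{2}$. Taking square roots in both cases yields the stated chain $\|A\|_{2}\leq \|A\|_{F}\leq \sqrt{d}\,\|A\|_{2}$.

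If one prefers to avoid invoking the SVD, a direct argument also works and I would mention it as an alternative: for the lower bound, pick a unit vector $x$ with $\|Ax\|=\|A\|_{2}$, extend it to an orthonormal basis $\{x,v_{2},\ldots,v_{d}\}$, and use the basis-independent expression $\|A\|_{F}^{2}=\|Ax\|^{2}+\sum_{i\geq 2}\|Av_{i}\|^{2}\geq \|A\|_{2}^{2}$; for the upper bound, apply $\|A\|_{F}^{2}=\sum_{i}\|Ae_{i}\|^{2}\leq d\|A\|_{2}^{2}$ to the standard basis. There is no real obstacle here, since the statement is entirely routine; the only subtlety worth flagging is the verification of unitary invariance (or, in the direct approach, the basis-independence of the Frobenius norm), which is the single nontrivial ingredient that makes the two norms directly comparable through the singular values.
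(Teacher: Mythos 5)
Your proof is correct. The paper states this as a bare \textsf{Fact} with no proof, treating it as a standard norm comparison, so there is no paper argument to compare against; the SVD route you give (and the basis-independence alternative you sketch) is exactly the textbook justification one would supply.
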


Next, we introduce several basic concepts in metric geometry.

\begin{definition}[$\epsilon$-neighborhood]
Let $(M,\mathrm{dist})$ be a metric space and $S\subseteq M$. For $\epsilon>0$, the 
$\epsilon$-neighborhood of $S$ is
\[
S_{\epsilon}\ :=\ 
\bigcup_{x\in S} B(x,\epsilon),
\]
where $B(x,\epsilon):=\{y\in M:\mathrm{dist}(x,y)\le \epsilon\}$ is the $\epsilon$-ball
around $x$. 
\end{definition}

\begin{definition}[$\epsilon$-cover]
Given $S,T\subseteq M$ and $\epsilon>0$, the set $T$ is an $\epsilon$-cover of $S$ if $S\subseteq T_{\epsilon}$.
\end{definition}

\begin{definition}[$\epsilon$-covering number]
Let $(M,\mathrm{dist})$ be a metric space, $S\subseteq M$, and $\epsilon>0$. The
$\epsilon$-covering number of $S$ under ${\rm dist}$ is
\[
\mathcal{N}(S,\mathrm{dist},\epsilon)
\ :=\ \min\Bigl\{N\in\mathbb{N}:\ \exists\,x_1,\ldots,x_N\in S\ \text{such that}\ 
S\subseteq \bigcup_{i=1}^N B(x_i,\epsilon)\Bigr\}.
\]
\end{definition}

The following two lemmas bounds the covering number of the unitary group and the
image of a polynomial map respectively, which will be used to prove the
space-bounded lower bound of unitary communication complexity.
\begin{lemma}[Lemma 1 of \cite{barthel2018fundamental}] \label{lem:suupper}
    For $0<\epsilon\leq \frac{1}{10}$, the $\epsilon$-covering number of unitary group $U(d)$ under spectral norm satisfies
    \[
    \left(\frac{3}{4\epsilon}\right)^{d^2}\leq \mathcal{N}(U(d),\|\cdot\|_2,\epsilon)\leq \left(\frac{7}{\epsilon}\right)^{d^2}.
    \]
\end{lemma}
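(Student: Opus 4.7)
The plan is to prove both bounds via a Haar-volume comparison on the compact Lie group $U(d)$, viewed as a $d^2$-dimensional real manifold equipped with the spectral-norm metric. The essential structural ingredient is the bi-invariance $\|AU-AV\|_2=\|U-V\|_2$ for $A,U,V\in U(d)$, which implies that the normalized Haar measure $\mu$ assigns to every spectral-norm ball a volume $v(\epsilon):=\mu(B(U,\epsilon))$ depending only on the radius, not on the center. Given this, standard packing/covering inequalities yield
\[
\frac{1}{v(\epsilon)}\;\le\; \mathcal{N}(U(d),\|\cdot\|_2,\epsilon)\;\le\; \frac{1}{v(\epsilon/2)},
\]
since any $\epsilon$-cover must cover the whole probability space of total mass $1$, while a maximal $\epsilon$-packing has disjoint $\epsilon/2$-balls of total mass at most $1$ and is automatically an $\epsilon$-cover by maximality.

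The remaining task is to estimate $v(\epsilon)$ with sharp constants. I would parameterize a neighborhood of $I$ by $V=e^{iH}$ with $H$ Hermitian and use $\|V-I\|_2=\max_j 2|\sin(\lambda_j/2)|$, where the $\lambda_j$ are the eigenvalues of $H$. For the range $\epsilon\le 1/10$ permitted by the statement, this yields a tight two-sided comparison between $\|V-I\|_2$ and $\|H\|_2$ with multiplicative error arbitrarily close to $1$. Consequently, $v(\epsilon)$ agrees, up to a controlled multiplicative error, with the Lebesgue volume of a spectral-norm ball of radius $\epsilon$ in the $d^2$-dimensional real vector space of Hermitian matrices, pulled back to $U(d)$ via the locally isometric exponential map. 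The problem then reduces to a Euclidean volume estimate for a spectral-norm ball in Hermitian matrices, which scales as $\Theta_d(\epsilon^{d^2})$.

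The main obstacle is pinning down the sharp constants $3/4$ and $7$. This requires carefully tracking three sources of multiplicative slack: (i) the Jacobian of $H\mapsto e^{iH}$ at $0$, which is trivial to leading order but contributes higher-order corrections; (ii) the volume of the spectral-norm unit ball in the space of $d\times d$ Hermitian matrices, controllable via the elementary relation $\|H\|_2\le \|H\|_F\le \sqrt{d}\,\|H\|_2$ together with sharper matrix-integration refinements; and (iii) the linearization error in $\|e^{iH}-I\|_2\approx \|H\|_2$, which is exactly where the hypothesis $\epsilon\le 1/10$ is invoked to absorb the remaining constants into $3/4$ and $7$ for the lower and upper bounds respectively. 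Since the statement is cited from Lemma~1 of \cite{barthel2018fundamental}, I would follow its template, which combines precisely this Haar-measure volume comparison with the classical metric-entropy technique of Szarek for matrix groups, and track the constants cleanly through each of the three steps above.
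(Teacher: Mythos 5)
The paper does not prove this lemma at all; it is imported verbatim as ``Lemma~1 of \cite{barthel2018fundamental}'' and used as a black box, so there is no in-paper proof to compare your sketch against. Evaluated on its own terms, your outline identifies the right framework: bi-invariance of Haar measure, the two-sided packing/covering inequality $1/v(\epsilon)\le \mathcal{N}\le 1/v(\epsilon/2)$, and the reduction of $v(\epsilon)$ to a Euclidean ball volume in the Lie algebra via the exponential chart. All of that is the standard Szarek-type metric entropy argument and is correct in spirit.

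However, as written the proposal has a real gap precisely where the lemma's content lives. The displayed constants $3/4$ and $7$ are equivalent to two-sided bounds $(2\epsilon/7)^{d^2}\le v(\epsilon)\le (4\epsilon/3)^{d^2}$, so ``estimating $v(\epsilon)$ with sharp constants'' is the entire lemma, not a remaining detail. Your step (ii) proposes to control the spectral-norm ball volume in the space of Hermitian matrices ``via the elementary relation $\|H\|_2\le\|H\|_F\le\sqrt{d}\,\|H\|_2$,'' but that comparison is lossy by a factor of order $d^{d^2/2}$ in volume, which does not cancel in $1/v(\epsilon)$ and would destroy the dimension-uniform constants. The actual argument in \cite{barthel2018fundamental} (and Szarek's earlier treatment) obtains the spectral-norm ball volume directly rather than passing through Frobenius. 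Finally, concluding with ``I would follow its template'' defers the constant chase back to the very reference being reproved, so the proposal does not stand as a self-contained derivation; it is an annotated citation, which is in effect the same thing the paper does.
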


\begin{lemma}[Theorem 2.6 of \cite{zhang2023covering}] \label{lem:cnupper}
    Let $\mathcal{K}: [-1,1]^n\to \mathbb{R}^{N}$ be a polynomial map of
    degree $K$, and $V$ be the image of $\mathcal{K}$. Then for any $\epsilon>0$,
    the $\epsilon$-covering number of V under Euclidean norm satisfies
    \[
    \log \mathcal{N}(V, \|\cdot\|, \epsilon)
    \leq n\log(1/\epsilon)+O\left(n\log N + \log K\right). 
    \]
\end{lemma}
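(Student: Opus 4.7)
The plan is to prove the bound via a Lipschitz-type reduction from the image $V$ to the source cube $[-1,1]^n$, and then to sharpen the degree dependence using the algebraic structure of $V$.

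\textbf{Step 1 (Lipschitz bound on $\mathcal{K}$).} First I would bound the Lipschitz constant $L$ of $\mathcal{K}$ on $[-1,1]^n$ in the Euclidean norm. Each component of $\mathcal{K}$ is a polynomial in $n$ variables of degree at most $K$; under the normalization implicit in the statement, each component is bounded on $[-1,1]^n$ by some $M=\mathrm{poly}(N,K)$. The multivariate Markov inequality then bounds every partial derivative pointwise by $O(K^2 M)$, and aggregating across the $N$ components in the Euclidean norm yields $L=O(K^2\sqrt{N}\,M)$, so $\log L=O(\log K+\log N)$.

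\textbf{Step 2 (Transport a cover from the domain).} A standard volume argument shows that the Euclidean $\delta$-covering number of $[-1,1]^n$ is at most $(C\sqrt{n}/\delta)^n$ for a universal constant $C$. Since an $L$-Lipschitz map sends an $\epsilon/L$-net in the domain to an $\epsilon$-net in the image,
\[
\log \mathcal{N}(V,\|\cdot\|,\epsilon) \;\leq\; \log \mathcal{N}\bigl([-1,1]^n,\|\cdot\|,\epsilon/L\bigr) \;\leq\; n\log(L/\epsilon)+O(n\log n),
\]
which, after substituting the bound on $L$, already yields an estimate of the form $n\log(1/\epsilon)+O(n\log N+n\log K)$.

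\textbf{Step 3 (Sharpen the degree dependence).} The main obstacle is to replace the $n\log K$ correction produced by the naive Lipschitz chain with the claimed $\log K$. The loss arises because a uniform global Lipschitz bound inflates complexity by a factor of $L$ in each of the $n$ directions, whereas the true complexity of $V$ depends on the polynomial degree only through a lower-dimensional critical stratification whose count is governed by a Bezout-type bound of order $K^{O(1)}$. I would therefore replace Step~2 with a Yomdin--Comte-style entropy estimate for the $n$-dimensional semialgebraic set $V$: decompose $[-1,1]^n$ into at most $K^{O(1)}$ semialgebraic pieces on each of which $\mathcal{K}$ is approximately affine up to a factor independent of $K$, apply the cube-covering bound separately on each piece, and sum. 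This yields an additive $\log K$ rather than a multiplicative $n\log K$, completing the proof. I expect this reparameterization / stratification step to be the technical heart of the argument, since naively iterating Lipschitz bounds cannot produce a subadditive-in-$n$ dependence on the polynomial degree.
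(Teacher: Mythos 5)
This lemma is cited from an external source (Theorem~2.6 of \cite{zhang2023covering}); the paper does not prove it, so there is no internal proof to compare against. Judged on its own terms, your proposal is not a complete proof, for two reasons.

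First, Steps~1 and~2 (Markov inequality plus a Lipschitz transport of a cube cover) only deliver the weaker bound
\[
\log \mathcal{N}(V,\|\cdot\|,\epsilon)\ \le\ n\log(1/\epsilon)+n\log L+O(n\log n)
\ =\ n\log(1/\epsilon)+O\bigl(n\log K+n\log N+n\log n\bigr),
\]
in which the degree enters with a factor of $n$. The entire content of the theorem is that the degree dependence is additive, i.e.\ $+\,O(\log K)$ rather than $+\,O(n\log K)$, and your Step~3 --- the Yomdin--Gromov / semialgebraic-stratification argument that is supposed to effect this improvement --- is only gestured at, not carried out. You say yourself that you ``expect this reparameterization / stratification step to be the technical heart of the argument''; that is precisely the step that is missing, and without it the chain of reasoning terminates at the weaker $n\log K$ bound. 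In the regime relevant to this paper ($K=k+2\ell$ and $\ell$ can be exponentially large in the number of qubits), the gap between $n\log K$ and $\log K$ is not cosmetic: it is exactly the multiplicative $n$ that the lemma is designed to avoid in the application to \Cref{lem:lowapprox}.

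Second, a minor but real issue: as stated, the lemma cannot hold without some normalization on the coefficients of $\mathcal{K}$ (e.g.\ $\mathcal{K}(x)=c\,x^K$ with $c\to\infty$ makes $V$ arbitrarily large at fixed $n,N,K$). You wave at a ``normalization implicit in the statement,'' and indeed the original theorem carries such an assumption (and the paper's application satisfies it, since the map is built from entries of unitaries of modulus at most~$1$). But the bound $M=\mathrm{poly}(N,K)$ on the sup-norm of each component, which you use to control $\log L$, is not derived --- it is asserted. Both the normalization hypothesis and the bound on $L$ should be stated precisely, since they feed directly into Step~2. As it stands, the argument is a plausible plan but not a proof; the key reparameterization lemma must be invoked (or proved) explicitly, and the bound it yields must be shown to give exactly the additive $\log K$ term.
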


\subsection{Quantum Circuits}  \label{sec:circuit}

\paragraph{General circuit synthesis.} We introduce a decomposition method for
synthesizing general unitaries, called the \emph{quantum Shannon decomposition}.
We first define a special kind of quantum gate, called \emph{uniformly-controlled rotation (UCR)} gates.

\begin{definition}[Uniformly controlled rotation (UCR)]
An $n$-qubit UCR is a block-diagonal unitary
\(
\operatorname{diag}\!\left(
R_P(\theta_0),R_P(\theta_1),\ldots,R_P(\theta_{2^{n-1}-1})
\right),
\)
where $P\in\{X,Y,Z\}$ and
\(
R_P(\theta)=e^{-i\theta P/2}
\)
is the single-qubit rotation around the $P$ axis.

Equivalently, as illustrated in Figure~\ref{fig:ucr}, the first $n-1$ qubits act as control qubits, and for each computational-basis string on these controls, 
rotation $R_P(\theta_j)$ is applied to the target qubit $q_n$.
\begin{figure}[htbp]
\[
\Qcircuit @C=1em @R=.7em {
    \lstick{q_1}    & \qw & \controlu\qw      & \qw &   & & \controlo\qw                & \control\qw                & \controlo\qw                & \qw & \push{\cdots\ \ \ } & \control\qw                               & \qw \\
    \lstick{q_2}    & \qw & \controlu\qw\qwx  & \qw &   & & \controlo\qw\qwx            & \controlo\qw\qwx           & \control\qw\qwx             & \qw & \push{\cdots\ \ \ } & \control\qw\qwx                           & \qw \\
    \lstick{\vdots} & {/}\qw & \controlu\qw\qwx  & \qw &:= & & \controlo\qw\qwx            & \controlo\qw\qwx           & \controlo\qw\qwx            & \qw & \push{\cdots\ \ \ } & \control\qw\qwx                           & \qw \\
    \lstick{q_n}    & \qw & \gate{R_P}\qw\qwx & \qw &   & & \gate{R_P(\theta_0)}\qw\qwx & \gate{R_P(\theta_1)}\qw\qwx & \gate{R_P(\theta_2)}\qw\qwx & \qw & \push{\cdots\ \ \ } & \gate{R_P(\theta_{2^{n-1}-1})}\qw\qwx & \qw
}
\]
\caption{An $n$-qubit uniformly controlled rotation. The first $n-1$ qubits select which rotation $R_P(\theta_j)$ is applied to the target qubit $q_n$.}
\label{fig:ucr}
\end{figure}
\end{definition}

\begin{lemma}[Theorem 13 of \cite{shende2005synthesis}] \label{lem:ucd}
    For $P\in\{X,Y,Z\}$, an $n$-qubit UCR can be decomposed into
    two $(n-1)$-qubit UCRs and two CNOT gates, as shown in \Cref{fig:ucd}.
    \begin{figure}[htbp!] 
    \[
    \Qcircuit @C=1em @R=.7em { %
       \lstick{q_1} & \qw & \controlu\qw      & \qw &   & & \qw               & \ctrl{3}\qw      & \qw               & \ctrl{3}\qw & \qw  & && \ctrl{3}\qw      & \qw               & \ctrl{3}\qw  & \qw & \qw\\ 
       \lstick{q_2} & \qw & \controlu\qw\qwx  & \qw &   & & \controlu\qw      & \qw              & \controlu\qw      & \qw         & \qw  & && \qw              & \controlu\qw      & \qw          & \controlu\qw &\qw\\
       \lstick{\vdots} & {/}\qw & \controlu\qw\qwx  & \qw & = & & \controlu\qw\qwx  & \qw              & \controlu\qw\qwx  & \qw         & \qw  &=&& \qw              & \controlu\qw\qwx  & \qw          & \controlu\qw\qwx &\qw\\ 
       \lstick{q_n} & \qw & \gate{R_P}\qw\qwx & \qw &   & & \gate{R_P}\qw\qwx & \targ            & \gate{R_P}\qw\qwx & \targ       & \qw  & && \targ            & \gate{R_P}\qw\qwx & \targ        & \gate{R_P}\qw\qwx &\qw\\ 
    }
    \]

    \caption{Decomposition of UCR gates. The leftmost is an $n$-qubit UCR,
    the middle is its decomposition into two $(n-1)$-qubit UCRs and two CNOT
    gates, and the rightmost is an alternative decomposition.}\label{fig:ucd}
\end{figure}
\end{lemma}

\begin{lemma}[Quantum Shannon decomposition~\cite{shende2005synthesis}]\label{lem:qsd}
Given an $n$-qubit unitary $U$, it can be decomposed 
as four $(n-1)$-qubit unitaries and three UCRs, 
as shown in \Cref{fig:qsd}.
\begin{figure}[htbp!]
\[
\begin{array}{ccc}
\Qcircuit @C=1em @R=0.7em {
   \lstick{q_1} & \qw     &\multigate{1}{U} & \qw &  \raisebox{-2.5em}{=} & & \qw & \qw      & \gate{R_Z} & \qw         & \gate{R_Y} & \qw       & \gate{R_Z} & \qw  & \qw \\
   \lstick{\vdots} & {/} \qw &\ghost{U}    & \qw &          & & {/} \qw & \gate{V_1} & \gate{} \qwx  & \gate{V_2}  & \gate{}\qwx & \gate{V_3} & \gate{}\qwx & \gate{V_4} & \qw
}
\end{array}
\]
\caption{
Quantum Shannon decomposition of an $n$-qubit unitary $U$. 
The slashed wire denotes the group of last $n-1$ qubits.
The leftmost is the
original unitary $U$, and the rightmost is its decomposition into four
$(n-1)$-qubit unitaries $V_1,\ldots,V_4$ and three UCRs.
} \label{fig:qsd}
\end{figure}
\end{lemma}

\paragraph{Special quantum circuits.} \emph{CNOT circuits} refer to circuits
generated by only CNOT gates. A CNOT maps 2-qubit state $\ket{x, y}$ to
$\ket{x,\, x\oplus y}$, which can be written as an invertible matrix in
$\mathbb{F}_2^{2\times 2}$: 
\[\begin{pmatrix}
    1 & 0 \\
    1 & 1 \\
\end{pmatrix} 
\begin{pmatrix}
    x \\
    y \\
\end{pmatrix}
=
\begin{pmatrix}
    x \\
    x\oplus y \\
\end{pmatrix}.
\]
It follows readily that an $n$-qubit CNOT circuit $T$ acts as a reversible
linear transformation on the $n$ input bits, and hence can be represented by an
invertible matrix $M\in\mathbb{F}_2^{n\times n}$ such that $T\ket{x}=\ket{Mx}$ for
all $x\in\{0,1\}^n$ \cite{moore2001parallel}.

\emph{Clifford circuits} refer to circuits generated by the basic gate set
$\{H,S,{\rm CNOT}\}$. The following lemma gives a canonical form of Clifford
circuits.

\begin{lemma}[\cite{aaronson2004improved}] \label{lem:clifford}
    Any Clifford circuit can be implemented by an 11-layer sequence
    H-C-S-C-S-C-H-S-C-S-C, where H denotes a layer of $H$ gates,
    S denotes a layer of $S$ gates, and C denotes a CNOT circuit.
\end{lemma}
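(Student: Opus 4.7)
The plan is to exploit the symplectic representation of the Clifford group modulo Paulis. An $n$-qubit Clifford unitary $U$ is determined, up to a Pauli correction, by its conjugation action on the single-qubit Pauli generators $X_i, Z_i$, which is encoded by a symplectic matrix $M\in\mathrm{Sp}(2n,\mathbb{F}_2)$ of block form $M=\begin{pmatrix}A&B\\C&D\end{pmatrix}$ with $A,B,C,D\in\mathbb{F}_2^{n\times n}$. I would show that every such $M$ can be realized by some layer assignment matching the template H-C-S-C-S-C-H-S-C-S-C, and then argue that the residual Pauli phase correction can be absorbed without enlarging the layer count by using the identities $Z=S^2$ and $X=HZH$ inside the existing single-qubit layers at the two ends.

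First I would record the precise action of each layer type on the tableau $M$. A CNOT-circuit layer implementing a linear map with invertible matrix $L\in GL(n,\mathbb{F}_2)$ sends $\begin{pmatrix}A\\ C\end{pmatrix}\mapsto\begin{pmatrix}LA\\ (L^{-1})^{T}C\end{pmatrix}$ and acts analogously on the $(B,D)$ columns. An $S$ layer supported on a subset $T\subseteq[n]$ adds the $T$-indexed rows of $(A,B)$ to the corresponding rows of $(C,D)$. An $H$ layer on $T$ swaps the $T$-indexed rows of $(A,B)$ with those of $(C,D)$. These rules make the 11-layer template into a parametrized map into $\mathrm{Sp}(2n,\mathbb{F}_2)$, and the statement reduces to showing this map is surjective.

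Next I would carry out a symplectic Gauss-Jordan reduction that mirrors the 11-layer template, working from the outside inward. Reading the pattern from left to right: the outer $H$ swaps selected rows between the $X$- and $Z$-sectors so that the first $C$ layer has an invertible target block; that $C$ performs a row reduction on one half of the tableau; the following $S$-$C$-$S$-$C$ subsequence completes the reduction of the upper half into a controlled normal form; the central $H$ then swaps sectors to expose the remaining block; and the closing $S$-$C$-$S$-$C$ segment finishes reducing the lower half to produce the identity symplectic matrix. The underlying algebraic fact being used is that any element of $\mathrm{Sp}(2n,\mathbb{F}_2)$ can be written as a product of two block-triangular symplectic matrices with a single sector-swap in between, and that each block-triangular factor further decomposes as a CNOT-type linear transformation flanked by $S$-type shifts.

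The principal obstacle will be guaranteeing that at every stage where a $C$ layer is invoked, the relevant sub-block of the current tableau is invertible over $\mathbb{F}_2$. A naive row-reduction of a symplectic matrix does not automatically preserve this, since arbitrary sub-blocks of a symplectic matrix can be singular and a single CNOT-circuit cannot pivot through a rank-deficient block. It is precisely to handle such configurations that the two $H$ layers are placed where they are in the template, swapping the $X$ and $Z$ sectors on exactly the qubits whose current sub-block is rank-deficient, so that the succeeding $C$ layer always acts on an invertible block. Verifying invertibility at each intermediate stage, and checking that the residual Pauli correction really can be folded into the existing layers without adding extras, constitute the technical heart of the argument, following the constructive proof in Aaronson-Gottesman.
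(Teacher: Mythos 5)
The paper does not prove this lemma; it simply cites Theorem~8 of \cite{aaronson2004improved}. Your proposal is an attempt to reconstruct the Aaronson--Gottesman argument, and the high-level strategy you describe (tracking a symplectic tableau $M\in\mathrm{Sp}(2n,\mathbb{F}_2)$, recording the layer actions, and reducing to the identity by a Bruhat-style factorization) is indeed the approach of that reference.

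That said, as a standalone proof your sketch has genuine gaps, and you acknowledge as much by deferring ``the technical heart'' to the cited source. Two places are substantive rather than routine. First, you assert that ``any element of $\mathrm{Sp}(2n,\mathbb{F}_2)$ can be written as a product of two block-triangular symplectic matrices with a single sector-swap in between.'' This is a nontrivial structural claim: the naive statement fails for general fields, and over $\mathbb{F}_2$ one must argue carefully that a single $H$-layer (a partial sector swap on a chosen subset of qubits) can always convert the current $X$--$X$ block into a full-rank block so that the following $C$-layer can pivot; proving this is essentially Lemma~7 of \cite{aaronson2004improved}, not a corollary of standard Gauss--Jordan. Second, you state that each block-triangular factor ``decomposes as a CNOT-type linear transformation flanked by $S$-type shifts,'' which requires showing that the resulting symmetric $\mathbb{F}_2$-matrix $A^{T}C$ can be realized as a sum of two $\mathbb{F}_2$-congruence-transformed diagonal matrices, $R_1^{T}D_1R_1+R_2^{T}D_2R_2$; this is true, and it is exactly what the $S$-$C$-$S$-$C$ subsequence achieves, but it is not automatic since symmetric bilinear forms over $\mathbb{F}_2$ are not generally congruent to diagonal ones (the alternating forms are an obstruction), so the presence of \emph{two} $S$-$C$ pairs is essential and must be exploited explicitly. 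Filling these two gaps would turn your outline into a complete proof in the style of the cited reference.
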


\subsection{Quantum Information Theory}

\begin{definition}[Von Neumann entropy]
    Let $\rho$ denote the quantum state of a system $A$, and
    $\rho = \sum_i \eta_i \ket{\phi_i}\bra{\phi_i}$ denote a spectral
    decomposition of $\rho$, where $\{\ket{\phi_i}\}$ is an eigenbasis for $\rho$. The Von Neumann entropy of
    $\rho$, denoted by $S(A)_\rho$ or $S(\rho)$, is defined as
    \[
        S(A)_\rho = S(\rho) = -\sum_i \eta_i \log(\eta_i).
    \]
\end{definition}

\begin{lemma}[Fannes-Audenaert inequality \cite{audenaert2007sharp}] \label{lem:fannes}
    For two quantum states $\rho$ and $\sigma$ of dimension $d$, 
    \[
    |S(\rho)-S(\sigma)|\leq D \log (d-1) + H(D, 1-D),
    \]
    where $D$ is the trace distance between $\rho$ and $\sigma$, and $H$ denotes the binary Shannon entropy.
\end{lemma}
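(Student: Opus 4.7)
The plan is to reduce the statement to a sharp entropy-continuity bound on classical probability distributions, and then obtain the tight constant by identifying the extremal configuration. First, I would write the spectral decompositions $\rho=\sum_i r_i\ket{u_i}\bra{u_i}$ and $\sigma=\sum_i s_i\ket{v_i}\bra{v_i}$ with eigenvalues sorted in decreasing order, so that $S(\rho)=-\sum_i r_i\log r_i$ and $S(\sigma)=-\sum_i s_i\log s_i$ depend only on the eigenvalue vectors $r,s\in\mathbb{R}^d$. The first auxiliary step is to control the classical $\ell_1$ distance $\tfrac{1}{2}\sum_i|r_i^\downarrow-s_i^\downarrow|$ by the trace distance $D=\tfrac{1}{2}\|\rho-\sigma\|_1$. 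Lidskii's majorization theorem (or Mirsky's inequality for Hermitian matrices) provides exactly this: the vector of sorted eigenvalue differences is majorized by the eigenvalues of $\rho-\sigma$, from which $\tfrac{1}{2}\sum_i|r_i^\downarrow-s_i^\downarrow|\le D$.

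Once the problem is reduced to the classical setting, the remaining task is to show that for any two probability distributions $p,q$ on $\{1,\ldots,d\}$ with total variation distance at most $D$, one has $|H_{\mathrm{Sh}}(p)-H_{\mathrm{Sh}}(q)|\le D\log(d-1)+h(D)$, where $H_{\mathrm{Sh}}$ denotes Shannon entropy and $h(D):=-D\log D-(1-D)\log(1-D)$. The main obstacle is obtaining the \emph{tight} constant $D\log(d-1)+h(D)$ rather than the elementary Fannes-style bound $O(D\log(d/D))$ that follows from naively controlling each term $|r_i\log r_i-s_i\log s_i|$ via the modulus of continuity of $x\mapsto -x\log x$. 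The sharp constant is saturated by the concrete extremal pair $p=(1,0,\ldots,0)$ and $q=(1-D,\tfrac{D}{d-1},\ldots,\tfrac{D}{d-1})$, for which one checks by direct computation that $|H_{\mathrm{Sh}}(p)-H_{\mathrm{Sh}}(q)|$ equals the claimed bound.

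To prove extremality, I would proceed by a symmetrization/rearrangement argument. Among pairs $(p,q)$ with fixed total variation distance, one may assume by sorting that $p_1\ge p_2\ge\cdots$ and that the sign pattern of $p_i-q_i$ is simple (nonnegative on an initial block, nonpositive afterward). Using the Schur concavity of $H_{\mathrm{Sh}}$, one can then flatten $q$ on the coordinates where $p_i<q_i$ (replacing them by their average increases $H_{\mathrm{Sh}}(q)$ while keeping $\|p-q\|_1$ unchanged) and concentrate $p$ on the coordinates where $p_i>q_i$. This collapses the problem to a one-parameter family, and a calculus check pins the optimum at the extremal pair above. A final small step is to note that $D\mapsto D\log(d-1)+h(D)$ is monotone increasing on $[0,1-1/d]$, so one may replace the possibly smaller $\ell_1$ distance between sorted eigenvalue vectors by the trace distance $D$ itself without loss, yielding the claimed bound.
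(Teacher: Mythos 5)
The paper does not prove this statement: it is invoked as a known result with a citation to Audenaert's 2007 paper, so there is no ``paper's own proof'' to compare against. What you have written is therefore necessarily a standalone proof attempt rather than a reconstruction.

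Your outline tracks Audenaert's actual argument fairly closely: reduce the quantum inequality to a classical entropy-continuity bound via Mirsky's (or Lidskii--Wielandt's) inequality, which gives $\tfrac12\sum_i|r_i^\downarrow-s_i^\downarrow|\le D$, and then establish the sharp classical bound by identifying the extremal pair $\bigl(p,q\bigr)=\bigl((1,0,\dots,0),\,(1-D,\tfrac{D}{d-1},\dots,\tfrac{D}{d-1})\bigr)$. The correct extremizer and the role of Schur concavity are both there, and the closing observation that $D\mapsto D\log(d-1)+h(D)$ must be nondecreasing so that the smaller sorted-eigenvalue distance can be replaced by $D$ is exactly the right final step.

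Two caveats. First, the rearrangement/symmetrization paragraph is the technically delicate heart of the proof and your description is too loose to verify as written: ``flatten $q$ on the coordinates where $p_i<q_i$'' does not in general leave $\|p-q\|_1$ invariant unless $p$ is already constant there, and the two operations you describe (flatten $q$, concentrate $p$) interact, so it is not immediate that one lands in a one-parameter family. Audenaert's proof handles this with a more careful variational analysis; your sketch would need comparable care to be a proof rather than a heuristic. Second, the monotonicity of $D\mapsto D\log(d-1)+h(D)$ holds only on $[0,\,1-1/d]$; for $D>1-1/d$ the function decreases, so the step ``replace the smaller $\ell_1$ distance by $D$'' does not go through as stated and one needs a separate (short) argument for that regime. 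This boundary case is harmless for how the lemma is used in the paper (there $D\le 1/4$ and $d$ is exponentially large), but for the lemma as stated it is a genuine gap in your write-up.
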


\begin{definition}[Mutual information]
    Given a quantum state $\rho$ that describes the joint systems $A$ and $B$,
    the mutual information between $A$ and $B$ is given by
    \[
        I(A:B)_{\rho} = S(A)_{\rho} + S(B)_{\rho} - S(AB)_{\rho}.
    \]
\end{definition}

We often omit the subscript $\rho$ when the quantum state is
clear from context. For example, we will write $I(A:B)$ instead of $I(A:B)_\rho$.

\begin{fact} \label{lem:info}
    A two-qubit quantum gate across two systems $A$ and ${B}$ increase $I({A}:{B})$ by at most $4$.
\end{fact}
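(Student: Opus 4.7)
The plan is to track how each term in $I(A:B) = S(A) + S(B) - S(AB)$ transforms under the gate. Let $U$ be the two-qubit gate, and write $A = A_1 A_2$ and $B = B_1 B_2$ where $A_1$ and $B_1$ are the single qubits on which $U$ acts. Let $\rho$ denote the pre-gate joint state and $\rho' = (I_{A_2} \otimes U \otimes I_{B_2})\,\rho\,(I_{A_2} \otimes U^\dagger \otimes I_{B_2})$ the post-gate state.

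First I would observe that the $S(AB)$ term is invariant: since $U$ is a unitary on $AB$, we have $S(AB)_{\rho'} = S(AB)_\rho$. Next I would note that tracing out $A_1 B_1$ is unaffected by conjugation with $U$, hence the reduced state on $A_2 B_2$ is unchanged, giving $S(A_2)_{\rho'} = S(A_2)_\rho$ and $S(B_2)_{\rho'} = S(B_2)_\rho$. Consequently,
\[
I(A:B)_{\rho'} - I(A:B)_\rho \;=\; \bigl[S(A)_{\rho'} - S(A)_\rho\bigr] + \bigl[S(B)_{\rho'} - S(B)_\rho\bigr].
\]

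The main quantitative step is to bound each of these two brackets by $2$. I would invoke the Araki--Lieb triangle inequality together with subadditivity, which gives for any bipartite state
\[
\bigl|\,S(A_1 A_2) - S(A_2)\,\bigr| \;\le\; S(A_1) \;\le\; \log 2 \;=\; 1,
\]
since $A_1$ is a single qubit. Applying this to $\rho'$ as an upper bound on $S(A_1A_2)_{\rho'}$ and to $\rho$ as a lower bound on $S(A_1A_2)_\rho$, and using $S(A_2)_{\rho'} = S(A_2)_\rho$, yields $S(A)_{\rho'} - S(A)_\rho \le 2$. The identical argument on the $B$ side gives $S(B)_{\rho'} - S(B)_\rho \le 2$, and summing produces $\Delta I(A:B) \le 4$.

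There is no serious obstacle here; the only point requiring a moment of care is checking that the partial trace over $A_1 B_1$ really is invariant under the nonlocal $U$, which is immediate from the cyclicity of the trace applied slot-wise. The bound $4$ is of course not the sharpest constant possible (one can tighten it using finer entropy inequalities), but $4$ suffices for the use in the paper and keeps the proof elementary.
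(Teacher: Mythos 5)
Your proof is correct, and it takes a genuinely different route from the paper's. The paper argues at the \emph{protocol} level: a nonlocal two-qubit gate can be simulated by Alice sending one qubit to Bob, Bob applying the gate locally, and Bob sending the qubit back (two qubit-exchanges plus local computation); it then invokes the textbook facts that each qubit sent across the cut raises $I(A:B)$ by at most $2$ and that local operations cannot raise it. You instead bound the change \emph{directly} at the entropy level: since $U$ is a unitary on $AB$, $S(AB)$ is unchanged; since $U$ acts only on $A_1B_1$, the reduced state on $A_2B_2$ and hence $S(A_2)$ and $S(B_2)$ are unchanged; and subadditivity together with Araki--Lieb give $\lvert S(A_1A_2)-S(A_2)\rvert\le S(A_1)\le 1$ for both $\rho$ and $\rho'$, from which $S(A)_{\rho'}-S(A)_\rho\le 2$ and likewise for $B$. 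The chain $\Delta I(A{:}B)=\Delta S(A)+\Delta S(B)\le 4$ then closes. The two proofs rest on the same underlying inequalities — the ``$+2$ per transmitted qubit'' fact is itself proved by exactly the Araki--Lieb/subadditivity argument you use — but your version is self-contained and avoids routing through a teleportation-style simulation, at the cost of being slightly longer than the paper's two-line citation-based argument.
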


\begin{proof}
    A two-qubit gate can be implemented by exchanging $2$ qubits and local computation, 
    where each qubit exchanged increase $I(A:B)$ by at most $2$, 
    and local computation does not increase $I(A:B)$ \cite{nielsen2010quantum}.
\end{proof}

Finally, we state a well-known lower bound on the communication complexity of the inner-product function.

\begin{lemma}[\cite{cleve1998quantum}]\label{lem:ip}
Suppose Alice and Bob holds $n$-bit inputs $x$ and $y$ respectively.
For any constant $\epsilon > 0$, computing
\[\mathrm{IP}(x,y) :=\bigoplus_{i=0}^{n-1} x_i y_i\]
with probability $1/2+\epsilon$ 
requires exchanging $\Omega(n)$ qubits.
\end{lemma}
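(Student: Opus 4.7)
The plan is to reduce classical bit transmission to computing $\mathrm{IP}$ and then invoke Holevo-type bounds on the classical capacity of a quantum channel. Suppose Alice and Bob have a protocol $P$ that computes $\mathrm{IP}(x,y)$ with probability $1/2+\epsilon$ using only $c$ qubits of communication. I will argue that this yields a protocol in which Alice transmits her full $n$-bit input $x$ to Bob while exchanging on the order of $c$ qubits, which forces $c=\Omega(n)$.

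The technical engine is the phase-kickback trick. Model $P$ coherently as a unitary $U$ acting on Alice's input register (holding the classical $x$), Bob's input register, a shared workspace, and a one-qubit output register $O$. Bob initializes his input register as $\frac{1}{\sqrt{2^n}}\sum_{y\in\{0,1\}^n}\ket{y}$ and $O$ as $\ket{-}$. Running $U$ XORs $\mathrm{IP}(x,y)$ into $O$ in every branch of the superposition, and $\ket{-}$ converts each XOR into a phase, so after uncomputing the workspace Bob's input register carries the state $\frac{1}{\sqrt{2^n}}\sum_y (-1)^{x\cdot y}\ket{y}$, which is exactly the Hadamard transform of $\ket{x}$. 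Applying Hadamards bitwise and measuring then recovers $x$ deterministically. Hence, in the ideal case, Alice has transmitted $n$ classical bits using only $c$ qubits of communication, and Holevo's bound yields $c\ge n/2$.

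The main obstacle is that $P$ is only assumed to succeed with probability $1/2+\epsilon$, so the coherent execution produces a state that is only close to the ideal Hadamard transform rather than equal to it. A clean but lossy fix is to first amplify $P$ to success probability $1-1/(3n)$ via $O(\log n)$ parallel repetitions and majority voting, which drives the trace distance between the coherent output and the ideal $\frac{1}{\sqrt{2^n}}\sum_y(-1)^{x\cdot y}\ket{y}$ below a small constant; Holevo then gives $c\log n=\Omega(n)$. To eliminate the logarithmic slack and obtain the tight $\Omega(n)$ bound one follows the more delicate argument of \cite{cleve1998quantum}, which bypasses amplification and shows directly that the constant bias $\epsilon$ induces a constant-fidelity approximate random-access encoding of $x$ into the $c$ exchanged qubits; Nayak's lower bound on quantum random access codes then supplies $c=\Omega(n)$.
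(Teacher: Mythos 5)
Your high-level plan matches the approach of Cleve, van Dam, Nielsen, and Tapp, which the paper cites for this lemma without giving its own proof: run the protocol coherently with Bob's input in uniform superposition and the answer register in $\ket{-}$, use phase kickback plus cleanup to turn an IP-computer into a transmitter of $x$, and then bound the communication information-theoretically. That structure is correct.

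The gap is in how you close the bounded-error case, and both of your proposed fixes are off. Amplifying to success probability $1-1/(3n)$ is overkill and is the only reason you pay the $O(\log n)$ factor; you only need to amplify to $1-\eta$ for a small \emph{constant} $\eta$, which costs $O(\log(1/\eta)/\epsilon^2)=O(1)$ repetitions because $\epsilon$ is a fixed constant. With the cleaned unitary $V=U^{\dagger}Z_{\mathrm{out}}U$, one checks that $\bra{x,y,0}V\ket{x,y,0}=(-1)^{\mathrm{IP}(x,y)}(1-2\eta_{x,y})$ with $\eta_{x,y}\le\eta$ for every $(x,y)$, so the overlap of the coherent output with the ideal state $\ket{x}\otimes 2^{-n/2}\sum_y(-1)^{x\cdot y}\ket{y}\otimes\ket{0}$ is at least $1-2\eta$. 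Bob then applies $H^{\otimes n}$ and recovers $x$ with probability $1-O(\sqrt{\eta})$, a constant, and Fano plus Holevo already give $c=\Omega(n)$ with no logarithmic slack. There is therefore no need for the claimed detour through Nayak's random-access-code bound — Holevo alone suffices, since Bob reconstructs all of $x$ rather than a single requested bit, and a raw bias of $\epsilon$ without any amplification is in fact not enough (it gives overlap only $2\epsilon$, not $1-o(1)$). You should drop the $\log n$ discussion, use constant-error amplification, and note explicitly that the $U^{\dagger}$ cleanup step doubles the communication to $2c$, which the asymptotic bound absorbs.
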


\section{Distributed Communication Complexity and Its Variants}\label{sec:formaldefs}

In this section, we present the formal definitions of distributed communication
complexity, along with its multi-party and approximate variants. Throughout the
paper, we measure the communication cost of a distributed quantum circuit in
terms of the number of \emph{nonlocal two-qubit gates}, i.e., two-qubit gates
acting on qubits residing on different processors. This is without loss of
generality, as any other communication measure, e.g., the number of qubits
exchanged, is equivalent up to a constant factor. Moreover, we place no restriction on the type of nonlocal two-qubit gates, as this choice does not affect the asymptotic analysis.

The set of qubits on which the given unitary acts is referred to as the \emph{input qubits} (or \emph{data qubits}). In our definitions, we optimize the communication cost over all \emph{balanced partitions} of the input qubits. This reflects practical scenarios in which we are allowed to arrange qubits when distributing a quantum computing task across multiple processors, and it also adds nontrivial structure to our lower bound proofs. Formally, a \emph{partition} of $n$ input qubits among $k$
processors is specified by a map \( \pi:[n]\to[k], \) where $\pi(i)$ denotes the
processor that initially holds the $i$-th input qubit. For $\ell\in[k]$, write \(
\pi^{-1}(\ell):=\{\,i\in[n]:\pi(i)=\ell\}\). We call $\pi$
balanced if \( \left|\pi^{-1}(\ell)\right| = n/k+O(1) \) for
any $\ell\in[k]$.

The standard distributed communication complexity of
a unitary $U$
is defined to be the minimum
communication required to \emph{exactly} implement $U$ on two
processors, optimized over all balanced partitions of the input qubits and all
possible circuit implementations. It also takes a parameter $m$ for the number of
clean ancilla qubits available to each processor.

\begin{definition}[Distributed communication complexity, DCC]
\label{def:commtwo}
Consider implementing an $n$-qubit unitary $U$ on two quantum processors $A$ and
$B$. The distributed communication complexity of $U$ with ancilla bound $m$ is defined as
\[
\CC_m(U)\;:=\;\min_{\text{balanced }\pi:[n]\to[2]} \CC_{m}(U\mid\pi),
\]
where $\CC_{m}(U\mid\pi)$ denotes the minimum number of nonlocal two-qubit gates required to implement $U$ when $A$ and $B$ hold the input qubits indexed by $\pi^{-1}(1)$ and $\pi^{-1}(2)$ respectively,
along with $m$ clean ancilla qubits per processor.

For simplicity, we define $\CC_{\infty}(U):=\inf_{m\geq 0}\CC_m(U).$
\end{definition}

\begin{example} 
    Below are some examples of distributed communication complexity.
    \begin{enumerate}
        \item 
        Given a Boolean function $f:\{0,1\}^n\to\{0,1\}$, define unitary
        $U_f:\ket{x,y}\mapsto\ket{x,y\oplus f(x)}$. Then $\CC_\infty(U_f)$
        equals the quantum communication complexity of $f$ optimized over all
        balanced partitions of $n$ input bits. 
        \item Given an $n$-qubit quantum circuit $T$, $\CC_0(T)$ is at most
        the number of two-qubit gates in $T$. Thus for any $n$-qubit
        unitary $U$, $\CC_0(U)=O(4^n)$ as any $n$-qubit unitary can be
        realized using $O(4^n)$ CNOT gates~\cite{shende2004minimal}.
    \end{enumerate}
\end{example}

\begin{remark}
Our definition differs from conventional quantum communication complexity in two important respects: 
(i) we study the coherent implementation of a quantum unitary, rather than the computation of a classical function value; and 
(ii) we minimize over all balanced partitions of the input qubits, as motivated by the distributed unitary-synthesis setting.
Consequently, $\CC_\infty(U_f)$ and the conventional quantum communication complexity of $f$ can differ dramatically.
For example, let
\[
f(x_1,\ldots,x_{n/2},y_1,\ldots,y_{n/2})=\bigoplus_{i=1}^{n/2} x_i y_i
\]
be the inner-product function.
Under the standard partition, where one party receives $(x_1,\ldots,x_{n/2})$ and the other receives $(y_1,\ldots,y_{n/2})$, the quantum communication complexity of $f$ is $\Omega(n)$.
In contrast, for the unitary $U_f$, we may choose a balanced partition that places each pair $(x_i,y_i)$ on the same processor.
Then each processor can compute the parity of its local products, and only $O(1)$ communication is needed to combine the two parities.
This flexibility is precisely why, in later lower-bound arguments, we use shifted variants of the inner-product function instead.
\end{remark}

We now extend the notion to settings involving more than two processors, where the processors are connected according to a specified topology.

\begin{definition}[Multi-party DCC]
\label{def:commk}
Let $G=([k],E)$ be an undirected graph describing the topology among $k$ quantum processors. 
Consider implementing an $n$-qubit unitary $U$ on these processors, where each processor holds $n/k+O(1)$ input qubits and $m$ ancilla qubits, and nonlocal two-qubit gates are permitted only between processors $(i,j)\in E$. 

The communication complexity of $U$ over $G$ with ancilla bound $m$ is defined as
\[
\CC_m^{G}(U)\;:=\;\min_{\text{balanced }\pi:[n]\to[k]}\;\CC_m^{G}(U\mid\pi),
\]
where $\CC_m^{G}(U\mid\pi)$ denotes the minimum number of nonlocal two-qubit gates required to implement $U$ when each processor $i$ initially holds the input qubits indexed by $\pi^{-1}(i)$ and $m$ clean ancilla qubits.

For the complete graph $K_k$, we define
\[
\CC_m^{(k)}(U)\;:=\;\CC_m^{K_k}(U),
\]
and refer to it as the $k$-party distributed communication complexity of $U$.
\end{definition}

\begin{fact}
    $\CC^{(k)}_m(U)$ is a non-decreasing function of $k$ for any fixed $U$ and $m$.
\end{fact}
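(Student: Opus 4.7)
The plan is to prove $\CC_m^{(k)}(U) \le \CC_m^{(k+1)}(U)$ for every $k \ge 2$, from which the non-decreasing property follows. The strategy is to take an optimal $(k+1)$-processor implementation of $U$ and transform it into a $k$-processor implementation using no more nonlocal two-qubit gates. The intuition is that a $k$-processor setup is strictly less constrained than a $(k+1)$-processor setup, since each processor holds more data qubits ($n/k+O(1)$ vs.\ $n/(k+1)+O(1)$), so any computation expressible in the latter is also expressible in the former with no greater communication cost.

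Concretely, I would start from an optimal balanced partition $\pi^\star:[n]\to[k+1]$ and a protocol $\Pi^\star$ attaining $C := \CC_m^{(k+1)}(U)$ nonlocal gates, with groups $A_1,\ldots,A_{k+1}$. The first step is to designate two of these groups, say $A_k$ and $A_{k+1}$, to be consolidated onto a single physical processor of the $k$-setup, while the remaining groups $A_1,\ldots,A_{k-1}$ map one-to-one to the other $k-1$ physical processors. Under this consolidation, every two-qubit gate of $\Pi^\star$ that was nonlocal between $A_k$ and $A_{k+1}$ becomes local on the merged processor, while all other gate classifications are preserved. Hence the induced $k$-processor implementation of $U$ uses at most $C$ nonlocal gates.

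The main obstacle is enforcing the balanced partition constraint of the $k$-setup: a direct consolidation of $A_k$ and $A_{k+1}$ yields a block of size $\approx 2n/(k+1)$, exceeding the allowed $n/k+O(1)$ whenever $k\ge 2$. To handle this, I would redistribute the excess qubits from the consolidated block to the other processors as part of the initial layout. The subtle point is to carry out the redistribution so that no two-qubit gate of $\Pi^\star$ that was local within $A_k$ or $A_{k+1}$ becomes nonlocal afterwards; I expect this step to rely on a careful choice of which qubits to relocate, exploiting the extra per-processor workspace available in the $k$-setup together with the freedom, built into the definition of $\CC_m^{(k)}(U)$, to permute data qubit positions at the end of the computation.
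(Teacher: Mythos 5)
The paper offers no proof of this Fact, so I evaluate your proposal on its own terms. The plan has a genuine gap at the redistribution step, and I do not believe it can be repaired. After consolidating $A_k$ and $A_{k+1}$ the merged processor holds about $2n/(k+1)$ data qubits, so restoring $k$-balance means relocating $\Theta(n)$ of them to the other processors. But $\Pi^\star$ will generically contain local gates inside $A_k$ and inside $A_{k+1}$ that act on every qubit in those blocks, so every relocated qubit converts some formerly local gates into nonlocal ones; the ``permute output positions'' freedom---a fixed relabelling, not a physical operation---cannot compensate for entanglement that $\Pi^\star$ creates across the new cut. You also do not address a second mismatch: the merged processor has only $m$ ancillas, while the two processors it absorbs jointly used $2m$ in $\Pi^\star$, so their local subcircuits cannot be run verbatim.

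In fact the statement itself appears to fail, so no strategy can succeed. Take $U=W^{\otimes(k+1)}$ with $W$ a Haar-random unitary on $n/(k+1)$ qubits. The block-aligned $(k+1)$-partition gives $\CC_m^{(k+1)}(U)=0$. For any balanced $k$-partition, a counting argument shows some tensor block must be split with two pieces each of size $\Omega(n)$; Haar-random $W$ then leaves $\Omega(n)$ ebits of entanglement across that cut in the output, and since a nonlocal two-qubit gate increases cross-cut entanglement by $O(1)$, we get $\CC_m^{(k)}(U)=\Omega(n)$ for every $m\ge 0$. Your merging idea does correctly yield the restricted claim $\CC_0^{(k)}(U)\le\CC_0^{(k')}(U)$ whenever $k\mid k'$ (the merged groups are then exactly balanced and no relocation is needed), and monotonicity of the worst case $\sup_U \CC_m^{(k)}(U)$ follows from \Cref{thm:commk}; one of these is likely the intended content.
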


We also define an approximate variant of the above definition, where the goal is
relaxed to implementing a unitary that approximates $U$ up to an error parameter $\epsilon$.

\begin{definition}[Approximate DCC] 
Given an $n$-qubit unitary $U$ and $m\geq 0$,
\[
    \CC_m(U;\epsilon)\;:=\;\min_{\text{unitary }V:\|V-U\|_2\leq \epsilon}\;\CC_m(V).
\]
Moreover, the notation $\CC^G_m(U;\epsilon)$ and
$\CC^{(k)}_m(U;\epsilon)$ are defined analogously for multi-party
distributed communication complexity over a graph $G$ and a complete graph $K_k$,
respectively. 
\end{definition}

\section{Asymptotically Optimal Bound for $\CC_m^{(k)}(U)$} \label{sec:commk}

We present an asymptoticaly optimal characterization for the $k$-party distributed communication
complexity of $n$-qubit unitaries, as shown in \Cref{thm:commk}. 
\begin{theorem} \label{thm:commk}
    Given integers $m\geq 0, k\geq 2$, we have that
    \begin{enumerate}
        \item[(i)] for any $n$-qubit unitary $U$,  $\CC_m^{(k)}(U)=O(\max\{4^{(1-1/k)n-m},n\})$;
        \item[(ii)] there exists $n$-qubit unitary $U$ such that $\CC_m^{(k)}(U)=\Omega(\max\{4^{(1-1/k)n-m},n\})$.
    \end{enumerate}
\end{theorem}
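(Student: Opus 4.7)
The plan is to handle the two directions with different techniques.

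For the upper bound (i), my strategy is to combine ancilla-based qubit relocation with a recursive decomposition in the spirit of the Quantum Shannon Decomposition (\Cref{lem:qsd}). First, I would use each processor's $m$ ancillas as scratch space to consolidate up to $m$ data qubits from the ``light'' side of some cut onto the ``heavy'' side via SWAPs, costing $O(m)$ nonlocal gates and leaving an effective bipartite partition where one side holds only $(1-1/k)n - m$ data qubits. The main subroutine is then: on a bipartite $(a,b)$-partition with $b$ small, implement any $n$-qubit unitary with $O(4^{b})$ nonlocal gates. I would prove this by applying QSD recursively, peeling off one of the $b$ small-side qubits per level, and arguing that each uniformly-controlled rotation produced in the recursion (\Cref{lem:ucd}) can be implemented with small amortized nonlocal cost by ordering its controls so that cross-cut ones appear as the outer Gray-code bits. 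Combining the $O(m)$ relocation cost, the $O(4^{(1-1/k)n - m})$ recursion cost, and a trivial $O(n)$ catch-all needed when $(1-1/k)n - m \le 0$ yields the claimed bound; the extension from $k=2$ to general $k$ is then handled by iterating over $k-1$ cuts or decomposing in a tree fashion.

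For the lower bound (ii), the plan is a covering-number argument. Any distributed circuit with $L$ nonlocal two-qubit gates can be written as an alternating product of $L+1$ local-unitary layers and $L$ nonlocal gates, where each local layer lives in $\prod_{\ell=1}^{k} U(2^{n/k+m})$ and so has real dimension $\Theta(k \cdot 4^{n/k+m})$, and each nonlocal gate contributes $O(1)$ parameters. Parametrizing each local unitary by Euler/Cartan angles and composing gives a polynomial map from a parameter box of real dimension $D = O\bigl((L+1)\,k\,4^{n/k+m} + L\bigr)$ into the space of $n$-qubit unitaries realizable by such circuits (after initializing ancillas to $|0\rangle$ and tracing them out at the end, for each fixed partition $\pi$). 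If \emph{every} $n$-qubit unitary is realizable for some $\pi$, then the union of these images covers $U(2^n)$, so the sum of their $\epsilon$-covering numbers must dominate that of $U(2^n)$. Applying \Cref{lem:cnupper} to each polynomial map and \Cref{lem:suupper} to $U(2^n)$ and picking $\epsilon$ appropriately small yields $D = \Omega(4^n)$, which forces $L = \Omega(4^{(1-1/k)n - m})$ after using $k = O(1)$. The additional $\Omega(n)$ term I would obtain by exhibiting a specific unitary, such as a cyclic qubit permutation that relocates each data qubit to a different processor, for which each qubit index forces at least one fresh nonlocal gate; a trace-distance / information-flow argument then makes this rigorous.

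The main obstacle I anticipate is the amortization step in the upper bound. A single uniformly-controlled rotation whose controls straddle the cut can naively cost $\Omega(2^{n-1})$ nonlocal CNOTs, and applied blindly across all $O(4^{(1-1/k)n - m})$ UCRs from the recursion this blows up the total cost well beyond the target. Threading the recursion so that the UCRs inherit a globally consistent Gray-code schedule in which cross-cut CNOTs appear only as ``outer loops'' (and thus contribute only $O(2^{b})$ nonlocal gates per UCR rather than $O(2^{n})$), and then summing carefully against the $4^{a}$ recursion multiplicity, is the delicate core of the construction. I expect that resolving it may require either an explicit cosine-sine-type schedule or ancilla-assisted copy-compute-uncopy tricks that replace each UCR with $O(1)$ nonlocal fanout gates plus a purely local multiplexing block on the heavy side.
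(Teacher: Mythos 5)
Your overall plan matches the paper's structure — a QSD/UCR-based synthesis for the upper bound, a covering-number argument for the exponential lower bound, and a separate construction for the $\Omega(n)$ term — and the first two pieces are essentially the paper's own approach. For the upper bound, the amortization worry you flag is real but is resolved exactly the way you guess: the recursion is set up (the paper's \Cref{lem:ucdecomp}) so that a UCR produced at level $i$ has only $a = n/k - i$ controls on the small side, costing $2^{a}$ cross-cut CNOTs, and summing $3\cdot 4^{i}\cdot 2^{n/k - i}$ over levels gives $O(4^{n/k})$ rather than $O(4^{i}\cdot 2^{n})$. For the covering-number lower bound you parametrize each local layer by all $k$ processors, giving $(L+1)k$ local unitaries, whereas the paper merges each processor's locals between its own nonlocal gates to get only $k+2L$; this costs you a factor of $k$ in the final bound, which is harmless for $k=O(1)$ (as you note) but falls short of the paper's claim for $k$ growing with $n$.

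The genuine gap is the $\Omega(n)$ term. A cyclic qubit permutation (or any qubit permutation $U_\sigma$) does \emph{not} force $\Omega(n)$ communication, because the definition minimizes over balanced partitions. A permutation is a union of cycles, and one can always choose a balanced partition $A$ that respects the cycle structure up to breaking a single cycle, which yields $|\sigma(A)\triangle A|=O(1)$; implementing $U_\sigma$ under that partition needs only $O(1)$ nonlocal SWAPs, so $\CC_\infty(U_\sigma)=O(1)$. Concretely, for the shift $\sigma(i)=i+1 \bmod n$ and $A=\{1,\dots,n/2\}$, only two qubits cross the cut. Your ``each qubit index forces a fresh nonlocal gate'' intuition is partition-dependent and evaporates once the adversary picks the partition. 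To get a unitary that is hard under \emph{every} balanced partition, the paper instead uses either a shifted-inner-product oracle $U_{\mathrm{SIP}}$ (\Cref{lem:twoomegan}), where an averaging argument produces a shift $i^\ast$ for which $\Omega(n)$ input pairs straddle the cut and the quantum IP lower bound applies, or a CNOT circuit whose matrix $M$ has every $\tfrac{n}{2}\times\tfrac{n}{2}$ submatrix of rank $\Omega(n)$ (\Cref{lem:cnotlow,lem:matrix}); both constructions are engineered precisely to defeat the partition-optimization freedom that kills your permutation idea.
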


\begin{example}
  We present several special cases of \Cref{thm:commk} to illustrate it contents.
\begin{itemize}
  \item \textbf{No ancillas, two processors.} When $m=0, k=2$, we have
  \(\CC_0^{(2)}(U)=\Theta(2^n)\). Since implementing arbitrary $n$-qubit unitary requires
  $\Theta(4^n)$ two-qubit gates \cite{shende2004minimal}, a trivial scheme of 
  partitioning input qubits into two balanced parts will yield $O(4^n)$ nonlocal two-qubit gates.
  However, \Cref{thm:commk} shows that up to circuit transformation, the optimal
  partition only requires $\Theta(2^n)$ nonlocal two-qubit gates, achieving a
  quadratic improvement over the trivial scheme.
  \item \textbf{No ancillas, multiple processors.} 
  When $m=0$ and $k\in\{2,3,\ldots,n\}$, the bound of $\CC^{(k)}_{0}(U)$ scales from
  $\Theta(2^n)$ at $k=2$, $\Theta(4^{2n/3})$ at $k=3$, to $\Theta(4^n)$ at $k=n$. 
  In the extreme $k=n$ case, each processor holds $O(1)$ input qubits, so the number of 
  nonlocal two-qubit gates required matches the number of two-qubit gates required to 
  implement $U$ in the worst case, which is $\Theta(4^n)$.
  \item \textbf{Many ancillas.} When $m\ge n$, we have \(\CC^{(k)}_{m}(U)=\Theta(n)\).
  Here the available ancilla qubits suffice to buffer all inputs, so one can move
  all input qubits to a single processor using $O(n)$ nonlocal SWAPs and complete
  the computation locally. The lower bound shows $\Omega(n)$ communication is
  also necessary in the worst case.
\end{itemize}
\end{example}

The rest of this section is to prove \Cref{thm:commk}, which consists
of three parts. First, \Cref{sec:generalup} presents a synthesis algorithm that implements any
$n$-qubit unitary with $O(\max\{4^{(1-1/k)n-m},n\})$ nonlocal two-qubit gates,
matching the upper bound. We further generalize the algorithm to general topology settings.
Next, \Cref{sec:generallow} proves that there exists an $n$-qubit unitary
that requires $\Omega(4^{(1-1/k)n-m}\log(1/\epsilon)/n)$ nonlocal two-qubit gates to approximate within error $\epsilon$ when the
ancillas are limited to $m$ per processor, which directly implies the $\Omega(4^{(1-1/k)n-m})$ lower bound in the exact model.
Finally, \Cref{sec:generallowub} shows that $\Omega(n)$ communication is
necessary in the worst case, even when $m$ is unbounded and approximation is allowed.

\subsection{Efficient Distributed Unitary Synthesis}
\label{sec:generalup}

To prove the upper bound, we present a distributed unitary synthesis algorithm
that implements any $n$-qubit unitary $U$ on $k$ processors, where each processor
holds $n/k+O(1)$ input qubits and $m$ ancilla qubits, using
$O\bigl(\max\{4^{(1-1/k)n-m},n\}\bigr)$ nonlocal two-qubit gates.

\begin{lemma}\label{lem:upperk}
Given any $n$-qubit unitary $U$, and $m\geq 0$, we have
\(
\CC_{m}^{(k)}(U)=O\bigl(\max\{4^{(1-1/k)n - m},n\}\bigr).
\)
\end{lemma}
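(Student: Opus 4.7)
The plan is to give an explicit synthesis algorithm based on iterated Quantum Shannon Decomposition (QSD, \Cref{lem:qsd}), together with an efficient non-local implementation of the resulting UCR gates via the Gray-code decomposition of \Cref{lem:ucd}. The construction designates processor $k$ as a ``collector'' and proceeds in three stages: ancilla consolidation, QSD-based peeling, and per-UCR implementation.

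First, I would use the $m$ ancillas of processor $k$ to absorb $m$ data qubits from the other processors via $O(m)$ non-local SWAPs. This enlarges processor $k$'s workspace to $n/k+m$ qubits, leaving $N := (1-1/k)n - m$ data qubits distributed across processors $1,\dots,k-1$. Second, I would apply QSD iteratively for $N$ levels, each level peeling off one qubit from a non-collector processor: level $j$ produces $3\cdot 4^{j-1}$ UCR gates on $n-j+1$ qubits together with four sub-unitaries on $n-j$ qubits. After $N$ levels the $4^N$ surviving sub-unitaries reside entirely on processor $k$ and are implemented locally at no communication cost. Third, for each UCR I would invoke \Cref{lem:ucd} but reorder its control qubits so that the (many) controls residing on processor $k$ occupy the high-frequency Gray-code positions (hence generate CNOTs that are local to processor $k$), while the (few) non-local controls sit at the low-frequency positions that flip only $O(1)$ times each. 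A careful count shows that this gives $O(2^{N-j})$ non-local CNOTs per UCR at level $j$, so the total non-local cost is
\[
\sum_{j=1}^{N} 3\cdot 4^{j-1}\cdot O(2^{N-j}) \;=\; O\bigl(2^{N-2}\!\sum_{j=1}^{N} 2^{j}\bigr) \;=\; O(4^N) \;=\; O\bigl(4^{(1-1/k)n-m}\bigr),
\]
and combining with the $O(m)+O(n)$ cost of consolidation and baseline qubit routing yields the claimed $O(\max\{4^{(1-1/k)n-m},\,n\})$.

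The principal obstacle is the per-UCR accounting: one must verify that the Gray-code reordering, together with the $m$-qubit workspace expansion on processor $k$, genuinely drops the non-local CNOT count from the naive $\Theta(2^{n-j})$ down to $O(2^{N-j})$, thereby producing an exponential-in-$m$ saving that survives the summation (rather than collapsing to the weaker bound $O(4^{(1-1/k)n})$). A secondary subtlety is that, since UCR targets at different levels lie on different non-collector processors, controls residing on other non-collector processors are also non-local to the target; routing these controls through processor $k$'s enlarged workspace, or handling them via the same Gray-code strategy with the roles of ``local'' and ``non-local'' controls reassigned at each level, should add only constant-factor overhead, but requires consistent bookkeeping of which qubits reside where throughout the recursion.
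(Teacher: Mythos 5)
Your proposal shares the paper's core ingredients — iterated quantum Shannon decomposition (\Cref{lem:qsd}) combined with the Gray-code UCR decomposition of \Cref{lem:ucd} and \Cref{lem:ucdecomp}, with controls reordered so that local and nonlocal controls occupy favorable Gray-code positions — but it organizes the consolidation differently. The paper gathers all inputs onto $K=\lceil n/(n/k+m)\rceil$ processors using \emph{all} $km$ ancillas and then invokes the no-ancilla synthesis of \Cref{lem:kexpzero} on that reduced instance; you instead enlarge only processor $k$'s workspace using its own $m$ ancillas and run a single $N$-level peel, treating every UCR directly. Both routes land on the same $O(4^{(1-1/k)n-m})$ count.

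There is, however, a concrete gap in the step you yourself flag as the ``principal obstacle.'' Your claim that controls residing on processor $k$ ``generate CNOTs that are local to processor $k$'' tacitly assumes the UCR \emph{target} has first been SWAPped onto processor $k$: every CNOT produced by \Cref{lem:ucd} and \Cref{lem:ucdecomp} has the UCR target as its target qubit, and the level-$j$ target lives on a non-collector processor. But you spend all $m$ of processor $k$'s ancillas absorbing data, so there is no free slot on $k$ to receive the target. Without that move, the CNOTs from the $k$-controls to the non-collector target are nonlocal, the per-UCR cost is $\Theta(2^{n-j})$ rather than $O(2^{N-j})$, and the sum collapses to $O(4^{(1-1/k)n})$ — exactly the failure you name as the risk. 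The repair is small: absorb only $m-1$ qubits so one ancilla on $k$ remains free for the target (shifting $N$ by one, a constant factor), and route the target with two extra nonlocal SWAPs per UCR, which contributes $O(\sum_j 4^{j-1})=O(4^N)$ and does not change the bound — but the move and the slot it occupies need to be stated in the construction. A secondary inaccuracy: the low-frequency nonlocal controls do not flip ``$O(1)$ times each''; they flip $1,2,4,\dots,2^{N-j-1}$ times respectively. Their aggregate, $2^{N-j}-1$, is what your summation actually uses, so the final bound is unaffected, but the per-control claim as written is false.
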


The proof of \Cref{lem:upperk} consists of two steps. First, \Cref{sec:no-anc} gives a
synthesis algorithm with no ancillas, using $O(4^{(1-1/k)n})$ nonlocal
two-qubit gates. Then \Cref{sec:gen-anc} extends the algorithm to the general case with ancillas, which yields the desired bound.

\subsubsection{No-ancilla Case} \label{sec:no-anc}

We first present two helper lemmas about the distributed implementation of UCRs.

\begin{lemma}[Decomposition of UCR]\label{lem:ucdecomp}
Let $R$ be an $n$-qubit UCR gate, and fix an integer $a\in[n]$.
Assume the target qubit index $t$ satisfies $t>a$. Then $R$ can be decomposed into
\begin{enumerate}
  \item $2^{a}$ CNOT gates between the first $a$ qubits (as controls) and the target qubit $t$, and
  \item $2^{a}$ UCR gates acting only on the last $n-a$ qubits.
\end{enumerate}
\end{lemma}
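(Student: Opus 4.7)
The plan is to prove the lemma by induction on $a$, repeatedly invoking \Cref{lem:ucd} to push one control qubit at a time out of the multi-qubit UCR. The base case $a = 1$ follows directly from \Cref{lem:ucd} applied with $q_1$ as the pushed-out control, yielding $2$ CNOTs from $q_1$ to $q_t$ and $2$ UCRs on $q_2, \ldots, q_n$, matching both desired counts.

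For the inductive step, I would strengthen the hypothesis to state that $R$ decomposes into an alternating sequence $V_1 \, C_1 \, V_2 \, C_2 \cdots V_{2^a} \, C_{2^a}$, where each $V_i$ is a UCR on the last $n-a$ qubits (with target $q_t$) and each $C_i$ is a CNOT from some qubit in $\{q_1, \ldots, q_a\}$ to $q_t$. I would then apply \Cref{lem:ucd} (with $q_{a+1}$ pushed out) to every $V_i$, exploiting the fact that the lemma offers two equivalent 4-gate decomposition forms: \emph{form~1} (``UCR--CNOT--UCR--CNOT'', ending with $\mathrm{CNOT}(q_{a+1}, q_t)$) and \emph{form~2} (``CNOT--UCR--CNOT--UCR'', beginning with the same CNOT). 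A naive uniform choice would introduce $2\cdot 2^a$ new CNOTs and inflate the total to $3\cdot 2^a$, overshooting the target $2^{a+1}$. The key trick is to alternate the two forms across consecutive $V_i$'s—form~1 for $V_{2j-1}$ and form~2 for $V_{2j}$. This sandwiches each outer CNOT $C_{2j-1}$ between two copies of $\mathrm{CNOT}(q_{a+1}, q_t)$; since all three CNOTs share target $q_t$ they commute pairwise, and since any CNOT is self-inverse, the triple collapses to a single $C_{2j-1}$. This reduction occurs at $2^{a-1}$ such junctions, eliminating $2^a$ CNOTs and bringing the total to exactly $2^{a+1}$, with $2^{a+1}$ sub-UCRs on the last $n-a-1$ qubits. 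A quick check shows the alternating $V$--$C$ pattern is preserved after the cancellations, closing the induction.

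The main obstacle is the careful bookkeeping of the alternating scheme: one must verify that the cancellations are strictly local—each triple collapse involves only the three CNOTs immediately adjacent to one outer CNOT, with no UCRs interposed—and that the remaining ``non-cancelling'' junctions (where one neighbouring $V$ is in form 2 on the left and the other is in form 1 on the right, so the outer $C_{2j}$ is flanked by UCRs rather than CNOTs) leave the sequence intact without creating any extra CNOT pairs. Once this local-substitution structure is confirmed, the induction closes and yields the claimed decomposition into $2^a$ CNOTs controlled by the first $a$ qubits and $2^a$ UCRs on the last $n-a$ qubits.
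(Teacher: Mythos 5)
Your proof is correct and follows essentially the same approach as the paper: recursive application of \Cref{lem:ucd}, with CNOT cancellations at alternate junctions accounting for the $2^a$ count rather than $3\cdot 2^a$. The paper states these cancellations more tersely (``the $i$-th recursion contributes $2^{i-1}$ net new CNOTs after cancellations''), whereas you make the mechanism explicit via the strengthened alternating-form inductive hypothesis and the $D\,C\,D = C$ collapse — a more rigorous presentation of the same argument.
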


\begin{proof}
  When $a=1$, the lemma follows directly from \Cref{lem:ucd}, which expresses
  $R$ as $2$ CNOTs between the first qubit and target $t$, interleaved with two
  $(n-1)$-qubit UCRs. When $a=2$, by applying \Cref{lem:ucd} twice, $R$ can be decomposed as:
  \[
  \Qcircuit @C=1em @R=.7em { %
  \lstick{q_1} & \controlu \qw & \qw & & & & \qw & \qw & \qw & \qw & \ctrl{3}
   \qw & \qw & \qw & \qw & \qw & \ctrl{3} \qw & \qw \\
  \lstick{q_2} & \controlu \qw \qwx & \qw & & & & \qw & \ctrl{2} \qw & \qw
   & \ctrl{2} \qw & \qw & \ctrl{2} \qw & \qw & \ctrl{2} \qw & \qw
   & \qw & \qw \\
  \lstick{\vdots} & \controlu \qw \qwx & \qw & = & & & \controlu \qw & \qw &
   \controlu \qw & \qw & \qw & \qw & \controlu \qw & \qw &
    \controlu \qw & \qw & \qw \\
  \lstick{t} & \gate{R_P} \qw \qwx & \qw & & & & \gate{R_P} \qw \qwx & \targ &
   \gate{R_P} \qw \qwx & \targ & \targ & \targ & \gate{R_P} \qw \qwx &
   \targ & \gate{R_P} \qw \qwx & \targ & \qw 
   \gategroup{2}{10}{4}{10}{1em}{--}
   \gategroup{2}{12}{4}{12}{1em}{--}
  }
\]
where the dashed two CNOTs cancel out, resulting in $2+2=4$ CNOTs and $4$ $(n-2)$-qubit UCRs. 
When $a>2$, by recursively applying \Cref{lem:ucd} for $a$ times, the $i$-th recursion ($i\ge 2$)
contributes $2^{i-1}$ net new CNOTs after cancellations. Thus the total number
of CNOTs between the first $a$ qubits and the target $t$ is \(2+\sum_{i=2}^{a}
2^{\,i-1}=2^{a}\). Lastly, one can easily verify that the the remaining number of 
$(n-a)$-qubit UCR gates is also $2^{a}$.
\end{proof}

\begin{lemma}\label{lem:ucdist}
Given an $n$-qubit UCR gate $R$ and $k\ge 2$, we have
\(C_{0}^{(k)}(R)\le2^{(1-1/k)n+1}-2\).
\end{lemma}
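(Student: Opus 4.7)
The plan is to apply \Cref{lem:ucdecomp} recursively, peeling off one processor's worth of control qubits per round. I would place the target qubit $t$ on processor $P_k$ and partition the remaining $n-1$ qubits so that each of $P_1,\dots,P_{k-1}$ stores $n/k$ qubits and $P_k$ stores $n/k$ qubits in total (including $t$). For simplicity I assume $k \mid n$; the remaining case is handled by standard padding and only affects the leading constant.

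In round $i$ (for $i=1,\dots,k-1$) I would take each UCR still present in the circuit---there are $2^{(i-1)n/k}$ of them, each acting on the qubits of $P_i,\dots,P_k$---and apply \Cref{lem:ucdecomp} with $a = n/k$, designating the ``first $a$'' qubits to be those stored on $P_i$. The hypothesis $t > a$ holds because $t \in P_k$ and $i \le k-1$. Each such application produces $2^{n/k}$ CNOTs whose controls are qubits of $P_i$ and whose target is $t \in P_k$---hence each is a single nonlocal two-qubit gate---together with $2^{n/k}$ smaller UCRs acting only on the qubits of $P_{i+1},\dots,P_k$. After round $k-1$ every remaining UCR acts only on the $n/k$ qubits of $P_k$ and is local.

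Round $i$ thus contributes $2^{(i-1)n/k} \cdot 2^{n/k} = 2^{in/k}$ nonlocal CNOTs, and the total is
\[
\CC_{0}^{(k)}(R) \;\le\; \sum_{i=1}^{k-1} 2^{in/k}.
\]
Setting $x := 2^{n/k} \ge 2$ and using $\sum_{i=1}^{k-1} x^i = (x^k - x)/(x-1)$, the inequality $(x^k - x)/(x-1) \le 2 x^{k-1} - 2$ rearranges to $(x-2)(x^{k-1}-1) \ge 0$, which holds for all $x \ge 2$. Substituting back yields $\sum_{i=1}^{k-1} 2^{in/k} \le 2 \cdot 2^{(k-1)n/k} - 2 = 2^{(1-1/k)n+1} - 2$, as desired.

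The main subtlety I anticipate is verifying that no cross-round cancellation alters the count of nonlocal CNOTs: since the CNOTs generated in different rounds have their controls on different processors (hence on disjoint qubit sets), no two CNOTs from distinct rounds share a control qubit and so cannot pairwise cancel; within a single round the cancellations are already absorbed into the $2^{a}$ count guaranteed by \Cref{lem:ucdecomp}. Once the Shende-style pattern is written out explicitly this bookkeeping should be routine.
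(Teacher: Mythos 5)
Your proposal is correct and follows essentially the same approach as the paper's proof: place the target on $P_k$, peel off one processor's control qubits per round via \Cref{lem:ucdecomp} with $a=n/k$, count $2^{in/k}$ nonlocal CNOTs in round $i$, and sum the geometric series. Your explicit algebraic verification of $\sum_{i=1}^{k-1}2^{in/k}\le 2^{(1-1/k)n+1}-2$ and the sanity check on cross-round cancellation are nice touches, but the structure of the argument matches the paper exactly.
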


\begin{proof}

Since we may choose the qubit assignment, assume the target qubit of $R$ resides
on the $k$-th processor. Apply \Cref{lem:ucdecomp} recursively $k-1$ times with
parameter $a=n/k$, each time producing the subcircuit to be executed on one processor. 

At the $i$-th recursion step ($i=1,\ldots,k-1$), there are $2^{(i-1)n/k}$ UCRs
carried forward from earlier steps, and each UCR contributes $2^{n/k}$ nonlocal
CNOTs across the $i$-th and the rest processors by \Cref{lem:ucdecomp}.
Hence the $i$-th step contributes
\(
2^{(i-1)n/k}\times 2^{n/k}=2^{in/k}
\)
nonlocal CNOTs. Summing over $i=1$ to $k-1$, the total number of nonlocal CNOTs is
\[
\sum_{i=1}^{k-1} 2^{in/k}
=\frac{2^n-2^{n/k}}{2^{n/k}-1}
\le2^{(1-1/k)n+1}-2. \qedhere
\]
\end{proof}

Next, we analyze the cost required to decompose a unitary with respect to one
processor versus the remaining processors.

\begin{lemma} \label{lem:decompU}
Given an $n$-qubit unitary $U$ and integer $r \in [n-1]$, $U$ can be decomposed into
$4^{r}$ many $(n-r)$-qubit unitaries acting on the last $n-r$ qubits, together with
\emph{(i)} $6\times 4^{r}$ two-qubit gates across the first $r$ qubits and the last $n-r$ qubits, and 
\emph{(ii)} $3\times 4^{r}$ UCRs acting on the last $n-r$ qubits.

\end{lemma}

\begin{proof} 
  Recursively applying \Cref{lem:qsd} to $U$ for $r$ times, we get
  \begin{enumerate}
    \item $4^{r}$ unitaries acting on the last $n-r$ qubits; and
    \item for each $0\le i<r$, a collection of $3\times 4^{i}$ UCR gates on last $n-i$ qubits.
  \end{enumerate}
  For each UCR gate $R$ in item (2), first use two nonlocal SWAPs at the
  beginning and end of the subcircuit to move the target qubit of $R$ to
  the last $n-r$ qubits, and then apply \Cref{lem:ucdecomp} with parameter $a=n/k-i$ to
  decompose $R$ into CNOTs and $(n-r)$-qubit UCRs. In total, item (2) produces
\begin{enumerate}
  \item[(i)] 
  \(
  \sum_{i=0}^{r-1} 3\times 4^i\times\bigl(2^{\,r-i}+2\bigr)\le 6\times 4^{r}
  \)
  two-qubit gates across the first $r$ and last $n-r$ qubits; and
  \item[(ii)]
  \(
  \sum_{i=0}^{r-1} 3\times 4^i\times 2^{r-i}\le 3\times 4^{r}
  \)
  UCRs acting on the last $n-r$ qubits. \qedhere
\end{enumerate}  
\end{proof}

\begin{corollary}\label{lem:decompone}
Given $k$ processors with pairwise communication, each holding $n/k$ input
qubits, consider implementing an $n$-qubit unitary $U$. Then $U$ can be
decomposed into 
\begin{enumerate}
  \item $6\times 4^{n/k}+6\times 2^{n}$ nonlocal two-qubit gates, and
  \item $4^{n/k}$ unitaries on the last $k-1$ processors acting on $(1-1/k)n$ qubits.
\end{enumerate}
\end{corollary}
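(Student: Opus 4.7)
The plan is to simply combine the two ingredients already proved, namely the decomposition of a single unitary into UCRs (\Cref{lem:decompU}) and the distributed realization of one UCR (\Cref{lem:ucdist}), applied with the parameters suited to the $k$-party setting.

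First, I would invoke \Cref{lem:decompU} with $r = n/k$, so that the ``first $r$ qubits'' coincide exactly with the $n/k$ input qubits of processor $1$ and the ``last $n-r$ qubits'' coincide with the $(1-1/k)n$ input qubits spread over the remaining $k-1$ processors. This immediately yields three pieces: (i) $6\times 4^{n/k}$ two-qubit gates, each crossing the cut between processor $1$ and the other processors, which are therefore all nonlocal and contribute $6\times 4^{n/k}$ to the final count; (ii) $3\times 4^{n/k}$ UCR gates supported entirely on the last $(1-1/k)n$ qubits; and (iii) $4^{n/k}$ unitaries supported entirely on the same last $(1-1/k)n$ qubits — and these are exactly the unitaries claimed in item~2 of the statement.

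Next, I would distribute each of the UCRs from piece~(ii) across the $k-1$ remaining processors. Each such UCR is a $(1-1/k)n$-qubit UCR on a system of $k-1$ pairwise-connected processors, and so \Cref{lem:ucdist} (with $n \mapsto (1-1/k)n$ and $k \mapsto k-1$) bounds its nonlocal cost by
\[
2^{(1-1/(k-1))\cdot(1-1/k)n + 1} - 2 \;\le\; 2\cdot 2^{(1-2/k)n}.
\]
Multiplying by the $3\times 4^{n/k}$ UCRs, the total nonlocal cost from this step is at most
\[
3\times 4^{n/k} \cdot 2\cdot 2^{(1-2/k)n} \;=\; 6\cdot 2^{2n/k}\cdot 2^{n-2n/k} \;=\; 6\cdot 2^{n}.
\]
Adding to the $6\times 4^{n/k}$ contribution from piece~(i) yields the claimed total of $6\times 4^{n/k} + 6\times 2^{n}$ nonlocal two-qubit gates.

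There is no real obstacle here — the proof is essentially a routing argument together with exponent bookkeeping. The only point requiring a little care is the arithmetic in the exponent, specifically verifying that $\bigl(1-\tfrac{1}{k-1}\bigr)\bigl(1-\tfrac{1}{k}\bigr) = 1-\tfrac{2}{k}$ so that the UCR cost and the leftover cross-cut cost combine cleanly into $O(2^n)$ rather than blowing up for small $k$. Once this identity is in hand, the corollary follows immediately from \Cref{lem:decompU} and \Cref{lem:ucdist}.
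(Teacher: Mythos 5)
Your proof is correct and follows essentially the same route as the paper's: invoke \Cref{lem:decompU} with $r=n/k$, keep piece (iii) as the claimed unitaries, and then distribute each of the $3\times 4^{n/k}$ UCRs from piece (ii) over the remaining $k-1$ processors via \Cref{lem:ucdist}, with the same exponent bookkeeping yielding the $6\times 2^{n}$ term. The only (harmless) edge case left implicit in both accounts is $k=2$, where the remaining "$k-1=1$" processor makes piece (ii) entirely local, consistent with the bound $2^{(1-2/k)n+1}-2=0$.
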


\begin{proof}
  Applying \Cref{lem:decompU} with parameter $r=n/k$, we obtain 
  \begin{enumerate}
    \item[(i)] $6\times 4^{n/k}$ two-qubit gates between the $i$-th and the rest processors; 
    \item[(ii)] $3\times 4^{n/k}$ UCRs acting on the last $(1-1/k)n$ qubits; and
    \item[(iii)] $4^{n/k}$ unitaries acting on the last $(1-1/k)n$ qubits.
  \end{enumerate}
  Finally, use \Cref{lem:ucdist} to decompose all UCRs in item (ii) over the remaining $k-1$ processors, 
  which in total requires
\(
3\times 4^{n/k}\times 2^{(1-2/k)n+1}=6\times 2^{n}
\)
additional nonlocal two-qubit gates. \qedhere

\end{proof}

We now design a distributed unitary synthesis algorithms with no ancillas.
\begin{lemma}\label{lem:kexpzero}
For any $n$-qubit unitary $U$,
\(
\CC_{0}^{(k)}(U)=O\left(4^{(1-1/k)n}\right).
\)
\end{lemma}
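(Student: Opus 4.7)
The plan is to prove the bound by induction on $k$, repeatedly invoking \Cref{lem:decompone} to ``peel off'' one processor at a time. Let $T(n,k)$ denote the minimum number of nonlocal two-qubit gates required to implement an arbitrary $n$-qubit unitary on $k$ pairwise-connected processors, each holding $n/k$ input qubits and no ancillas. Our goal is to show $T(n,k) = O(4^{(1-1/k)n})$.

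The base case is $k=2$: applying \Cref{lem:decompone} with $k=2$ produces $6\cdot 4^{n/2} + 6\cdot 2^{n} = O(2^{n}) = O(4^{n/2})$ nonlocal two-qubit gates, together with $4^{n/2}$ local unitaries on the second processor, which contribute no communication. For the inductive step, applying \Cref{lem:decompone} at level $k$ reduces the task to $4^{n/k}$ independent sub-unitaries, each acting on $(1-1/k)n$ qubits spread across the remaining $k-1$ processors with $(1-1/k)n/(k-1) = n/k$ qubits each — exactly the setting of the $(k-1)$-party problem on $(1-1/k)n$ qubits. This yields the recurrence
\[
T(n,k)\ \le\ 6\cdot 4^{n/k} + 6\cdot 2^{n} + 4^{n/k}\cdot T\bigl((1-1/k)n,\ k-1\bigr).
\]

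The key arithmetic step is the exponent calculation in the inductive substitution. Assuming inductively that $T(n',k-1) \le C_{k-1}\cdot 4^{(1-1/(k-1))n'}$, the last term of the recurrence becomes
\[
4^{n/k}\cdot C_{k-1}\cdot 4^{(1-1/(k-1))(1-1/k)n}
\ =\ C_{k-1}\cdot 4^{(1-1/k)n},
\]
since $\bigl(1-\tfrac{1}{k-1}\bigr)\bigl(1-\tfrac{1}{k}\bigr) = \tfrac{k-2}{k} = 1 - \tfrac{2}{k}$, and adding the $n/k$ from the peeling step recovers the desired exponent $1-1/k$. The remaining two terms are easily absorbed, since $4^{n/k}\le 4^{(1-1/k)n}$ and $2^{n} = 4^{n/2}\le 4^{(1-1/k)n}$ for all $k\ge 2$. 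Hence $C_k \le C_{k-1} + 12$, which unrolls to $C_k = O(k) = O(1)$ for fixed $k$, completing the induction.

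The main technical point to be careful about is that the recursion requires $(1-1/k)n$ to be divisible by $k-1$ so that the subproblems are balanced, but the $O(1)$ slack built into the definition of balanced partitions (\Cref{def:commk}) absorbs any rounding issues without affecting the asymptotics. Once this bookkeeping is handled, the bound $\CC_0^{(k)}(U) = O(4^{(1-1/k)n})$ follows immediately.
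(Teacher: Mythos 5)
Your proof takes essentially the same route as the paper — repeatedly apply \Cref{lem:decompone} to peel off one processor at a time, giving the recurrence $T(n,k)\le 6\cdot 4^{n/k}+6\cdot 2^{n}+4^{n/k}\,T((1-1/k)n,k-1)$ — but the inductive bookkeeping you use to close the recurrence is loose in a way that matters. You absorb the additive terms at each level via the crude bound $6\cdot 4^{n/k}+6\cdot 2^{n}\le 12\cdot 4^{(1-1/k)n}$, giving $C_k\le C_{k-1}+12$ and hence $C_k=O(k)$, which you then declare to be $O(1)$ ``for fixed $k$.'' But the lemma is needed with $k$ scaling all the way up to $n$: inside the proof of \Cref{lem:upperk}, \Cref{lem:kexpzero} is invoked on $K-1$ processors where $K=\lceil n/(n/k+m)\rceil$ can be $\Theta(n)$. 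With an $O(k)$ prefactor, your bound becomes $O(n\cdot 4^{(1-1/k)n})$, which for $k=\Theta(n)$ exceeds the trivial $O(4^n)$ bound and does not prove the stated $O(4^{(1-1/k)n})$.

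The gap is easy to close but should be closed. The observation that fixes it is that the absorption at intermediate levels is massively wasteful: for $k'\ge 3$ with $n'$ qubits and $n'/k'=n/k$, one has $(6\cdot 4^{n'/k'}+6\cdot 2^{n'})/4^{(1-1/k')n'}=6\cdot 4^{(2-k')n/k}+6\cdot 2^{(2-k')n/k}$, which is exponentially small in $n$ rather than a constant. Equivalently — and this is what the paper does — unroll the recurrence once and for all to get $\sum_{i=1}^{k-1}\bigl(6\cdot 4^{in/k}+6\cdot 2^{(1+(i-1)/k)n}\bigr)$, and note both are geometric series with ratio $\ge 4$ and $\ge 2$ respectively, hence dominated by their last terms with a constant factor (around $20$) independent of $k$. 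Replacing your $C_k\le C_{k-1}+12$ step with either of these arguments yields the uniform-in-$k$ constant that the lemma requires.
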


\begin{proof}
    Apply \Cref{lem:decompone} recursively $k-1$ times to distribute $U$
    over $k$ processors. 
    At the beginning of the $i$-th recursion ($i\in[k-1]$), there are
    $4^{(i-1)n/k}$ unitaries, each acting on $(1-(i-1)/k)n$ qubits, distributed
    over the remaining $k-i+1$ processors. Since each remaining processor still
    holds $n/k$ input qubits, we apply \Cref{lem:decompone} to each such
    unitary, with the parameter 
    $n$ set as $(1-(i-1)/k)n$ and $k$ set as $k-i+1$. Thus
    each application contributes $6\times 4^{n/k}+6\times 2^{(1-(i-1)/k)n}$
    nonlocal two-qubit gates and produces $4^{n/k}$ smaller unitaries on the
    last $k-i$ processors. Thus the $i$-th recursion step contributes
\[
4^{(i-1)n/k}\times\Bigl(
6\times 4^{n/k} +
6\times 2^{(1-(i-1)/k)n}
\Bigr)
= 6\times 4^{in/k} + 6\times 2^{(1+(i-1)/k)n}
\]
nonlocal two-qubit gates. Summing these contributions over $i\in[k-1]$ gives
\begin{align*}
6 \sum_{i=1}^{k-1} 4^{in/k}+6 \sum_{i=0}^{k-2} 2^{(1+i/k)n}
&= 6 \times 4^{(1-1/k)n+O(1)} + 6\times 2^n \times 2^{(1-2/k)n+O(1)}
= O\left(4^{(1-1/k)n}\right).\ \qedhere
\end{align*}

\end{proof}

\subsubsection{General Case} \label{sec:gen-anc}

Finally, we extend the previous lemma to the setting with ancilla
qubits, obtaining \Cref{lem:upperk}.
\begin{proof}[Proof of \Cref{lem:upperk}]
Write $N:=n/k+m$ for the number of qubits (inputs $+$ ancillas) available to a
single processor.\footnote{For simplicity we assume $k\mid n$, as dropping this
assumption affects only constant factors.} We first
gather all $n$ input qubits into
\(
K:=\left\lceil \frac{n}{N}\right\rceil
\)
processors,
which consumes $O(n)$ nonlocal SWAP gates. If $K=1$ the unitary can be implemented locally and we are done, so
assume $K\ge 2$. After gathering, $K-1$ processors each hold exactly $N$ input
qubits and the remaining processor holds $R$ inputs, which satisfies
\begin{equation}\label{eq:NR}
n = (K-1)N\,+\,R,\qquad 1\le R\le N .
\end{equation}

Apply Lemma~\ref{lem:decompU} with parameter $r=R$ to decompose $U$ into:
\begin{enumerate}\itemsep0pt
\item[(i)] $O(4^{R})$ two-qubit gates across the first $K-1$ and the $K$-th processors;
\item[(ii)] $O(4^{R})$ UCRs acting on the $(n-R)$ input qubits located in the first $K-1$ processors;
\item[(iii)] $4^{R}$ unitaries acting on the $(n-R)$ input qubits located in the first $K-1$ processors.
\end{enumerate}

We bound the number of nonlocal two-qubit gates needed to realize each group.

\noindent\emph{Item (i).}
Using $R\le n-N$ (which follows from \eqref{eq:NR} since $n-N=(K-2)N+R$ and $K\ge2$), we have
\(O(4^{R})\leq O\left(4^{n-N}\right)\) nonlocal two-qubit gates.

\noindent\emph{Item (ii).}
By \Cref{lem:ucdecomp}, each $(n-R)$-qubit UCR on the first $K-1$ processors can be implemented using
$O\left(2^{(1-\frac{1}{K-1})(n-R)}\right)$ nonlocal two-qubit gates. As
there are $O(4^{R})$ such UCRs, the total here is
\[
O\left(4^{R}\cdot 2^{(1-\frac{1}{K-1})(n-R)}\right)
= O\left(2^{2R + (1-\frac{1}{K-1})(n-R)}\right)
= O\left(2^{n + R - \frac{n-R}{K-1}}\right).
\]
By \eqref{eq:NR} we have $\frac{n-R}{K-1}=N$, then this becomes
$O\left(2^{n + R - N}\right)\le O\left(2^{n + (n-N) - N}\right)
=O\left(4^{n-N}\right)$, as $R\le n-N$.

\noindent\emph{Item (iii).}
By \Cref{lem:kexpzero}, each $(n-R)$-qubit unitary on the first $K-1$
processors can be implemented using
$O\left(4^{(1-\frac{1}{K-1})(n-R)}\right)$ nonlocal two-qubit gates. There
are $4^{R}$ such unitaries, so the total is
\[
4^{R}\cdot O\left(4^{(1-\frac{1}{K-1})(n-R)}\right)
= O\left(4^{n - \frac{n-R}{K-1}}\right)
= O\left(4^{n - N}\right),
\]
where we again used $\frac{n-R}{K-1}=N$.

Combining the three groups and the initial $O(n)$ SWAP gates, the overall number
of nonlocal two–qubit gates is
\(
O\left(n + 4^{\,n-N}\right) = O\left(\max\{4^{n-N},n\}\right).
\)
Recalling $N=n/k+m$ yields
\(
\CC_m^{(k)}(U)=O\left(\max\{4^{(1-1/k)n - m},n\}\right),
\)
as claimed.
\end{proof}

\paragraph{Extending to general topology.} The definition of $\CC_m^{(k)}(U)$ assumes pairwise communication between
processors, i.e., the interconnect topology is a complete graph. In practice, a
quantum network may be a line, grid, tree, etc. The following corollary shows
that the processor topology does not substantially affect the communication cost
of implementing a unitary. Concretely, the exponential term $O\left(4^{(1-1/k)n-m}\right)$
is topology-independent, while the linear term increases from $O(n)$ to $O(Dn)$,
where $D$ is the diameter of the topology graph.

\begin{corollary} \label{coro:commtopo}
    Given $n$-qubit unitary $U$, integer $m\geq 0$ and undirected graph
    $G:=([k],E)$, we have \[\CC_m^{G}(U)=O(\max\{4^{(1-1/k)n-m},D n\})\] where
    $D$ is the diameter of $G$.
\end{corollary}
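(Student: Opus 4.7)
The plan is to reduce the general-topology case to the pairwise-connected case already handled by \Cref{lem:upperk}, via shortest-path routing in $G$. The main ingredient is the observation that any nonlocal two-qubit gate across processors $i$ and $j$ in $G$ can be simulated by routing one of its qubits along a shortest path $i=v_0,v_1,\ldots,v_d=j$ with $d\le D$: use $d$ SWAPs along graph edges to bring the qubit to $v_d$, apply the gate locally, and reverse the SWAPs to restore positions. Each ``pairwise'' nonlocal gate thus costs $O(D)$ nonlocal gates along edges of $G$.

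First I would invoke \Cref{lem:upperk} to obtain a complete-graph protocol $P$ implementing $U$ on $k$ processors using $T=O(\max\{4^{(1-1/k)n-m},n\})$ nonlocal two-qubit gates. Recall that $P$ has two phases: an initial gathering of $O(n)$ nonlocal SWAPs that redistributes input qubits onto $K=\lceil n/(n/k+m)\rceil$ processors, followed by a synthesis phase of $O(4^{(1-1/k)n-m})$ nonlocal gates structured by \Cref{lem:decompone} as a recursion in which, at each level, a single ``active'' processor exchanges nonlocal gates with the remaining ones. I would then simulate $P$ on $G$ by the routing procedure above, replacing each nonlocal gate by its SWAP-chain implementation along edges of $G$.

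Combining the per-gate cost with the total count gives the straightforward upper bound
\[
O(D\cdot T)\;=\;O\!\bigl(\max\{D\cdot 4^{(1-1/k)n-m},\,Dn\}\bigr),
\]
which already implies the stated $O(\max\{4^{(1-1/k)n-m},Dn\})$ in the regime where either $Dn$ dominates or $D=O(1)$. The main obstacle is to remove the extra $D$ factor from the exponential term in the regime where $4^{(1-1/k)n-m}$ dominates. I would address this by exploiting the star-like structure of the recursion in \Cref{lem:decompone}: at each level of \Cref{lem:upperk}, only one processor is ``active'' and communicates with the rest, so I would choose that processor adaptively (e.g., as a graph-center of the currently active subset in $G$) and route all nonlocal gates incident to it through the corresponding BFS tree. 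Because each such level reuses the same endpoint repeatedly, the routing cost of bringing a qubit into place can be amortized across the $O(4^{(1-1/k)n-m}/K)$ gates at that endpoint, and the aggregate routing overhead of the synthesis phase is absorbed into the gathering cost $O(Dn)$ rather than multiplied into the exponential term. The technically delicate step, and the one I would focus on, is making this amortization precise while respecting the per-processor space bound $n/k+m$, so that extra ancilla-free shuffling does not itself balloon back into an $O(D)$ per-gate cost.
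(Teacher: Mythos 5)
Your high-level plan is right to reject the naive per-gate routing and seek amortization, but the proposal has two concrete problems that would cause it to fail if carried out as described, and it leaves the hard step unresolved.

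First, your account of where the routing overhead goes is incorrect. You claim ``the aggregate routing overhead of the synthesis phase is absorbed into the gathering cost $O(Dn)$.'' In fact the routing overhead of the synthesis phase is still an \emph{exponential} quantity: in the paper's analysis (\Cref{app:commtopo}, Eq.~\eqref{eq:topo}) it evaluates to $O(4^{(1-3/(2k))n})$, which is absorbed into the \emph{dominant} exponential term $O(4^{(1-1/k)n-m})$, not into $O(Dn)$. The suppression factor relative to the naive bound comes from a quite specific structural fact, not a generic amortization: in the UCR decomposition (\Cref{lem:ucdecomp}), each UCR requires moving its target qubit once --- costing $O(k)$ SWAPs along a spanning tree --- but contributes $2^{n/k}$ CNOTs to a fixed target location. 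That ratio $k/2^{n/k}$ is what makes the topology overhead negligible. Without identifying that the amortization must happen at the level of UCR target moves (one $O(k)$ move per $2^{n/k}$ CNOTs), the calculation does not close; your proposal punts exactly on this step.

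Second, the ``graph-center'' choice of the active processor does not fit the recursion. In \Cref{lem:decompone} and its topology-aware version \Cref{lem:decomponetopo}, at each round the active processor is peeled off and the recursion continues on the remaining $k-1$ processors, which must therefore remain connected. The paper arranges this by fixing a spanning tree of $G$ and choosing the active processor to be a \emph{leaf} of that tree. A graph-center is, generically, a cut vertex whose removal disconnects the remainder, so routing the subsequent rounds would break down. If one abandons the peel-one-processor recursion altogether to accommodate a center, one is essentially rederiving \Cref{lem:upperk} from scratch, which defeats the purpose of a reduction to the complete-graph case. In short: the correct fix is not to bolt routing onto the complete-graph protocol, but to rerun the recursive synthesis directly on a spanning tree, keeping the remaining subgraph connected by always peeling leaves, and then to verify --- as the paper does in Lemmas~\ref{lem:decomponetopo}--\ref{lem:ucdisttopo} --- that the extra $O(k)$ routing per UCR contributes only $O(4^{(1-3/(2k))n})$, while the $O(Dn)$ term arises solely from gathering the input qubits onto $K=\lceil n/(n/k+m)\rceil$ processors.
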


The topology-induced overhead has two sources. 
\begin{enumerate}
    \item In the decomposition of
UCRs used in the proof of \Cref{lem:upperk}, the target qubit must be moved among
processors so that \Cref{lem:ucdecomp} can distribute the gate correctly. On a
clique topology, moving a target costs $1$ hop; on other topologies it may cost
up to $D\le k-1$ hops. A careful analysis shows that this extra routing
contributes at most $O(4^{(1-3/(2k))n})$ additional nonlocal two-qubit gates, and
thus does not change the exponential term. 
    \item In promoting the no-ancilla bound of \Cref{lem:kexpzero} to
    the arbitrary-ancilla setting of \Cref{lem:upperk}, we use nonlocal SWAPs to
    gather the $n$ input qubits from $k$ processors onto $k'$ processors.
    On a clique the cost per moved qubit is $1$, whereas on a general topology
    it is $D$, so the $O(n)$ term becomes $O(Dn)$ overall. 
\end{enumerate}
Therefore, the upper bound in \Cref{thm:commk} extends to any undirected graph
$G$ with diameter $D$, as stated in \Cref{coro:commtopo}. The detailed 
proof is deferred to \Cref{app:commtopo}.

\subsection{Space-Bounded Lower Bound} \label{sec:generallow}

To prove the exponential lower bound in \Cref{thm:commk}, we actually prove a 
stronger result regarding the approximate communication complexity.

\begin{lemma} \label{lem:lowapprox}
    Given integers $n>0, m\geq 0$ and $2\le k\le n$,  
    there exists an $n$-qubit unitary $U$ such that
    \[\CC_m^{(k)}(U;\epsilon)=\Omega\left(\frac{4^{(1-1/k)n-m} \log(1/\epsilon)}{n}\right),\]
    for parameter $\epsilon\in [2^{-n}, 1/20]$.
\end{lemma}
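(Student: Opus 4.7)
The plan is a parameter-counting/covering-number argument. I will compare the $\epsilon$-covering number of the set of $n$-qubit unitaries implementable by distributed circuits with at most $C$ nonlocal two-qubit gates against that of the full unitary group $U(2^n)$, and conclude that $C$ must grow as claimed.

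First I would parameterize such a distributed circuit. Each processor holds $n/k+m$ qubits, so each local unitary has dimension $d:=2^{n/k+m}$. Decomposing every local unitary into $O(d^2)=O(4^{n/k+m})$ elementary one- and two-qubit gates and amortizing consecutive local unitaries on a processor between its nonlocal gates, the whole distributed circuit is specified by $D=O(C\cdot 4^{n/k+m})$ real angle parameters $\theta_1,\ldots,\theta_D$. After the substitution $(c_i,s_i):=(\cos(\theta_i/2),\sin(\theta_i/2))$ with the constraints $c_i^2+s_i^2=1$, the map from parameters to the implemented $(n+km)$-qubit unitary $\tilde U$ becomes polynomial of degree $K=O(D)$. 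Post-composing with the linear projection $\tilde U\mapsto(I_n\otimes\langle 0^{km}|)\,\tilde U\,(I_n\otimes|0^{km}\rangle)$ yields a polynomial map $\Psi$ of the same degree whose image $W\subseteq\mathbb{C}^{2^n\times 2^n}$ is exactly the set of effective $n$-qubit operators realizable by such circuits.

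Next I would argue by contradiction: if $\CC_m^{(k)}(U;\epsilon)\le C$ held for every $U\in U(2^n)$, then $W$ would be an $\epsilon$-cover of $U(2^n)$ in spectral norm, so
\[
\mathcal{N}\!\left(U(2^n),\|\cdot\|_2,2\epsilon\right)\;\le\;\mathcal{N}\!\left(W,\|\cdot\|_2,\epsilon\right)\;\le\;\mathcal{N}\!\left(W,\|\cdot\|_F,\epsilon\right).
\]
Lemma~\ref{lem:cnupper} then gives $\log\mathcal{N}(W,\|\cdot\|_F,\epsilon)=O\!\bigl(D\log(1/\epsilon)+Dn+\log K\bigr)=O\!\bigl(D(\log(1/\epsilon)+n)\bigr)$, while Lemma~\ref{lem:suupper} applied at radius $2\epsilon\le 1/10$ gives $\log\mathcal{N}(U(2^n),\|\cdot\|_2,2\epsilon)=\Omega\!\bigl(4^n\log(1/\epsilon)\bigr)$. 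Combining and using $\epsilon\ge 2^{-n}\Rightarrow\log(1/\epsilon)\le n$ (so the denominator is $O(n)$), one gets $D=\Omega(4^n\log(1/\epsilon)/n)$; substituting $D=O(C\cdot 4^{n/k+m})$ produces the claimed lower bound on $C$.

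The hard part, I expect, will be the parameterization step: I have to amortize local-unitary parameters so that $D=O(C\cdot 4^{n/k+m})$ really holds (rather than the naive $O(kC\cdot 4^{n/k+m})$ one gets by counting every local unitary separately), and then verify that after the $(\cos,\sin)$ substitution the polynomial degree satisfies $\log K=O(Dn)$ so that the subleading term in Lemma~\ref{lem:cnupper} is absorbed. A secondary subtlety is confirming that the linear projection onto the $|0^{km}\rangle$-ancilla block preserves the algebraic structure of the parameterization and does not blow up the polynomial degree; this should follow immediately because the projection is a fixed linear map, but it needs to be recorded so the hypotheses of Lemma~\ref{lem:cnupper} genuinely apply to $W$.
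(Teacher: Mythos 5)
Your proposal takes the same covering-number route as the paper — compare covering numbers of the realizable set and of $U(2^n)$ via Lemmas~\ref{lem:cnupper} and~\ref{lem:suupper}, respectively — but there is a gap at the step where you conclude that $W$, the image of a single polynomial map $\Psi$, is an $\epsilon$-cover of $U(2^n)$. The map $\Psi$ is built for a \emph{fixed circuit skeleton}: a fixed balanced partition $\pi$ of the input qubits and a fixed sequence $P=((p_1,q_1),\ldots,(p_C,q_C))$ of processor pairs on which the nonlocal gates act. Different target unitaries may require different skeletons to achieve cost at most $C$, so the hypothesis only yields that $U(2^n)$ lies in the $\epsilon$-neighborhood of the union $\bigcup_{\pi,P} W_{\pi,P}$, not of any single $W_{\pi,P}$. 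You must bound the covering number of the union: there are at most $k^n$ balanced partitions and $k^{2C}$ choices of $P$, and subadditivity of covering numbers over unions adds $(n+2C)\log k = O\bigl((n+C)\log n\bigr)$ to the log-covering bound. One then checks this is subdominant to $D\bigl(\log(1/\epsilon)+n\bigr)$, so the conclusion survives, but the step cannot be skipped.

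On the point you flagged as the ``hard part'': the amortization gives $D = O\bigl((k+C)\,4^{n/k+m}\bigr)$, not $O(C\cdot 4^{n/k+m})$. Write the circuit in alternating form $U_0\cdot\mathrm{CNOT}_{p_1,q_1}\cdot U_1\cdots\mathrm{CNOT}_{p_C,q_C}\cdot U_C$ with each $U_i$ a tensor product of per-processor locals; for $i\ge 1$ any factor of $U_i$ on a processor other than $p_i,q_i$ can be absorbed into a neighboring $U_j$, so $U_i$ may be taken supported on $p_i,q_i$ only. This yields $k$ local blocks in $U_0$ plus $2$ per step, i.e.\ $(k+2C)$ blocks of $O(4^{n/k+m})$ real parameters each; keep the $k$ through the computation and discard it only at the end via $k\le n$. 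Your secondary worry about $\log K$ is a non-issue, since $K=O(D)$ gives $\log K=O(\log D)\ll Dn$. Finally, the $(\cos,\sin)$ substitution is workable but superfluous: parameterizing each local unitary directly by the real and imaginary parts of its entries, which lie in $[-1,1]$, makes the evaluation map a multilinear polynomial of degree $k+2C$ with no trigonometry to track — this is what the paper does.
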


A direct corollaries follow from the above lemma by setting $\epsilon$ to be
$2^{-n}$. As an exactly implementation of $U$
trivially approximates $U$ within $2^{-n}$ error, \Cref{cor:experror} directly
implies exponential lower bound in \Cref{thm:commk}.

\begin{corollary} \label{cor:experror}
    Given integers $n>0, m\geq 0$ and $2\le k\le n$, there exists an $n$-qubit unitary $U$ such that
    \(\CC_m^{(k)}(U)\geq \CC_m^{(k)}(U;2^{-n})=\Omega\left(
    4^{(1-1/k)n-m}\right).\)
\end{corollary}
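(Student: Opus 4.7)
The plan is a covering-number argument: a distributed circuit with at most $c$ nonlocal two-qubit gates realizes a set of $n$-qubit unitaries whose spectral-norm $\epsilon$-covering number is too small to cover $U(2^n)$, which requires covering number at least $(3/(4\epsilon))^{4^n}$ by \Cref{lem:suupper}.

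First, I would fix a \emph{circuit structure}---a balanced partition $\pi$ and the positions of the $c$ nonlocal gates---and parameterize the space of circuits of that structure polynomially. Each local unitary on one processor's $n/k+m$ qubits can be exactly synthesized by $O(4^{n/k+m})$ elementary one- and two-qubit gates via \cite{shende2004minimal}, and every elementary gate is polynomially parameterized in $O(1)$ real variables in $[-1,1]$ by replacing each angle $\theta$ with a pair $(\cos\theta,\sin\theta)$ and relaxing the domain to the full cube (which only enlarges the image). Each nonlocal two-qubit gate likewise uses $O(1)$ parameters. After merging consecutive local operations on a processor not separated by a nonlocal gate touching it, the total number of local-unitary ``slots'' is at most $k+2c$, giving a total parameter count
\[
N_{\mathrm{par}} = O\bigl((k+c)\cdot 4^{n/k+m}\bigr),
\]
and the map from parameters to the full circuit unitary $W\in U(2^{n+km})$ is polynomial of degree $K=O(N_{\mathrm{par}})$.

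Second, I would pass from $W$ to the effective $n$-qubit unitary $V$ defined by $W(\ket{\psi}\otimes\ket{0}^{\otimes km}) = V\ket{\psi}\otimes\ket{0}^{\otimes km}$. This is a fixed linear projection on the matrix entries of $W$, so the composed map from $[-1,1]^{N_{\mathrm{par}}}$ into the $O(4^n)$-dimensional space of $V$-entries remains polynomial with the same degree $K$. Applying \Cref{lem:cnupper} and using $\|\cdot\|_2\leq\|\cdot\|_F$, the image $\mathcal{V}_\pi$ for a fixed structure satisfies
\[
\log\mathcal{N}(\mathcal{V}_\pi,\|\cdot\|_2,\epsilon) \leq N_{\mathrm{par}}\log(1/\epsilon) + O\bigl(N_{\mathrm{par}}\cdot n + \log K\bigr).
\]
A union over all $k^n\cdot(nk)^{O(c)}$ structures adds only $O(n\log k + c\log(nk))$ to the log-covering number, which is dominated. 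If every $U\in U(2^n)$ is $\epsilon$-approximated by some realizable $V$, then $\bigcup_\pi \mathcal{V}_\pi$ is an $\epsilon$-net for $U(2^n)$, and by the triangle inequality any $\epsilon$-cover of it extends to a $2\epsilon$-cover of $U(2^n)$. Combining with \Cref{lem:suupper} at radius $2\epsilon\leq 1/10$ gives
\[
4^n\log\bigl(3/(8\epsilon)\bigr)\;\leq\;O\bigl((k+c)\cdot 4^{n/k+m}\cdot(\log(1/\epsilon)+n)\bigr),
\]
and using $\log(1/\epsilon)\leq n$ (from $\epsilon\geq 2^{-n}$) rearranges to $c+k\geq\Omega\bigl(4^{(1-1/k)n-m}\log(1/\epsilon)/n\bigr)$. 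In the regime where this right-hand side exceeds $2k$ the target bound on $c$ follows; otherwise it is $O(n)$, which is subsumed by the $\Omega(n)$ lower bound proved in \Cref{sec:generallowub}.

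The main obstacle is the bookkeeping around the polynomial parameterization: verifying that the Shende synthesis, after substituting $(\cos,\sin)$ pairs and relaxing each gate's domain to a full cube $[-1,1]^{O(1)}$, gives a polynomial map whose image contains the entire realizable circuit set, so the upper bound on $\mathcal{N}(\mathcal{V}_\pi,\|\cdot\|_2,\epsilon)$ is valid, and carefully tracking both the parameter count $N_{\mathrm{par}}$ and the polynomial degree $K$ through the composition so that the final exponent $(1-1/k)n-m$ emerges cleanly. A secondary delicacy is keeping the $N_{\mathrm{par}}\log(1/\epsilon)$ term separate from the $N_{\mathrm{par}}\cdot n$ term long enough for the $\log(1/\epsilon)$ factor to survive in the final bound, rather than merging them too early via $\log(1/\epsilon)\leq n$.
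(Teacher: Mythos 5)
Your proposal is correct and follows essentially the same covering-number strategy as the paper's proof of \Cref{lem:lowapprox} (of which \Cref{cor:experror} is the $\epsilon=2^{-n}$ specialization): bound the log-covering number of the realizable set by \Cref{lem:cnupper}, union over circuit structures, and compare against the lower bound of \Cref{lem:suupper} on $U(2^n)$. The one genuine difference is in how local unitaries are parameterized. The paper treats each local unitary $U_i$ directly as a matrix, using the real/imaginary parts of its entries as the $O(4^{n/k+m})$ parameters, so the overall evaluation map has degree only $k+2\ell$; you instead synthesize each local unitary into $O(4^{n/k+m})$ elementary gates via Shende et~al.\ and parameterize by $(\cos\theta,\sin\theta)$ pairs relaxed to the full cube, which gives the same parameter count but degree $K=\Theta(N_{\mathrm{par}})$ rather than $O(k+\ell)$. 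Since $\log K$ enters \Cref{lem:cnupper} only additively and is subdominant to $N_{\mathrm{par}}\cdot n$, this buys nothing and costs nothing asymptotically. Your route is slightly more explicit about two points the paper glosses over --- the linear projection from the full-system unitary $W\in U(2^{n+km})$ down to the effective $n$-qubit $V$, and the enumeration over nonlocal-gate qubit positions (the paper absorbs the latter by assuming WLOG CNOTs on the first qubit of each processor) --- which is careful bookkeeping rather than a new idea, and the fallback to the $\Omega(n)$ bound in the small-RHS regime matches the structure of the final \Cref{thm:commk}.
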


Below, we prove Lemma~\ref{lem:lowapprox} via a covering–number argument. 
First, we derive an upper bound on the
covering number of the family of unitaries realizable by circuits
that use at most $\ell$ nonlocal two–qubit gates.
Next, we obtain a lower bound on the covering number of the
unitary group $U(2^n)$. Finally, comparing these two estimates
yields the claimed lower bound on $\ell$.

\begin{proof}[Proof of Lemma~\ref{lem:lowapprox}]
  Prove by contradiction. 
Assume any $n$-qubit unitary can be $\epsilon$-approximated in spectral norm by a circuit that uses $k$ processors, where each processor initially holds $n/k+O(1)$ input qubits and $m$ ancilla qubits, and the circuit uses at most $\ell$ nonlocal two-qubit gates (across processors). For a inital balanced partition $\pi:[n]\to[k]$ and an ordered list of processor pairs $P=((p_1,q_1),\dots,(p_\ell,q_\ell))$ indicating where the nonlocal gates act, any such circuit can, without loss of generality, be written in the alternating normal form
\begin{equation}\label{eq:normal-form}
U = U_0\cdot\mathrm{CNOT}_{p_1,q_1}\cdot U_1\,\cdots\,\mathrm{CNOT}_{p_\ell,q_\ell}\cdot U_\ell,
\end{equation}
where each $U_i$ is a tensor product of local unitaries on processors (for $i>0$ the support is contained in processors $p_i$ and $q_i$), and each $\mathrm{CNOT}_{p_i,q_i}$ is a nonlocal CNOT gate acting on the first qubits of processors $p_i,q_i$. Let
\[
R := 2^{\,n/k+m+O(1)}
\]
be the local Hilbert space dimension per processor (inputs $+$ ancillas $+$ a fixed $O(1)$ overhead). Then
$U_0=\bigotimes_{j=1}^k U_0^{(j)}$ with $U_0^{(j)}\in U(R)\subseteq \mathbb{C}^{R\times R}\cong \mathbb{R}^{2R^2}$ contributes $k\cdot
2R^2$ real parameters, and each $U_i$ with $i>0$ (supported on $p_i,q_i$)
contributes $2\cdot 2R^2$ real parameters. Thus $\bigl((k+2\ell)\cdot
2R^2\bigr)$ real parameters suffice to specify \eqref{eq:normal-form}. 
Note that all those real parameters has absolute value at most 1, as they are the
real/imaginary parts of entries of unitary matrices. Let $\mathcal{K}:
[-1,1]^{(k+2\ell)2R^2} \mapsto \mathbb{C}^{2^n\times 2^n}\cong
\mathbb{R}^{2^{2n+1}}$ be the evaluation map from those real parameters to the global
unitary. Since tensoring and matrix multiplication are multilinear in the
entries, $\mathcal{K}$ is a polynomial map of degree $k+2\ell$. 


For fixed $(\pi,P)$, let $S(\pi,P)$ denote the set of unitaries realizable by
\eqref{eq:normal-form}. As $S(\pi,P)$ is contained in $\mathcal{K}$'s image,
by \Cref{lem:cnupper}, there
exist absolute constants $c_1,c_2>0$ such that
\begin{align}
   \notag \log \mathcal{N}\bigl(S(\pi,P),\|\cdot\|_F,\epsilon\bigr)
&\le (k+2\ell)R^2\log(1/\epsilon)+c_1 (k+2\ell)R^2\,n+c_2\log(k+2\ell) \\
\label{eq:here}&= (k+2\ell)R^2(\log(1/\epsilon)+c_1n)+c_2\log(k+2\ell)
\end{align}
There are at most $k^n$ choices of $\pi$ and at most $k^{2\ell}$ choices of $P$; hence by subadditivity of covering number under unions,
\begin{align}
\log \mathcal{N}\left(\bigcup_{\pi,P} S(\pi,P),\|\cdot\|_F,\epsilon\right)
&\le \log \sum_{\pi,P}\mathcal{N}\bigl(S(\pi,P),\|\cdot\|_F,\epsilon\bigr) \notag\\
&\le (k+2\ell)R^2\log(1/\epsilon) + c_1 (k+2\ell)R^2 n + c_2\log(k+2\ell) + (n+2\ell)\log k\notag\\
&\le (k+2\ell)R^2\log(1/\epsilon) + c_1 (k+2\ell)R^2 n + O(n+\ell\log n),
\label{eq:union-F}
\end{align}
where the second inequality is by plugging \eqref{eq:here}, and last inequality follows from $k\leq n$. 

Since $\|A\|_2\le \|A\|_F$, any Frobenius $\epsilon$-cover is a spectral $\epsilon$-cover; thus
\begin{equation}\label{eq:op-le-F}
\log \mathcal{N}\left(\bigcup_{\pi,P} S(\pi,P),\|\cdot\|_2,\epsilon\right)
\le
\log \mathcal{N}\left(\bigcup_{\pi,P} S(\pi,P),\|\cdot\|_F,\epsilon\right).
\end{equation}

On the other hand, by \Cref{lem:suupper} and that $0<2\epsilon\leq 1/10$, we have 
\begin{equation}\label{eq:U-lower}
\log \mathcal{N}\bigl(U(2^n),\|\cdot\|_2,2\epsilon\bigr)
\ge 4^n \log\bigl(3/(8\epsilon)\bigr).
\end{equation}

By the assumption of the lemma, we have that $U(2^n)$ is contained in the
$\epsilon$-neighborhood (in spectral norm) of $\bigcup_{\pi,P} S(\pi,P)$. Then
any $\epsilon$-cover of $\bigcup_{\pi,P} S(\pi,P)$ is a $2\epsilon$-cover
of $U(2^n)$, which implies
\[
\mathcal{N}\bigl(U(2^n),\|\cdot\|_2,2\epsilon\bigr)
\le \mathcal{N}\left(\bigcup_{\pi,P} S(\pi,P),\|\cdot\|_2,\epsilon\right).
\]
Combining this with \eqref{eq:op-le-F} and \eqref{eq:union-F}, and recalling that $R^2=4^{\,n/k+m+O(1)}$, yields
\begin{equation}\label{eq:master}
\log \mathcal{N}\bigl(U(2^n),\|\cdot\|_2,2\epsilon\bigr)
\le (k+2\ell)\,4^{\,n/k+m+O(1)}\bigl(\log(1/\epsilon)+c_1 n\bigr)+O(n+\ell\log n).
\end{equation}
Inserting the lower bound \eqref{eq:U-lower} into \eqref{eq:master} gives
\[
4^n\log\bigl(3/(8\epsilon)\bigr)
\le (k+2\ell)\,4^{\,n/k+m+O(1)}\bigl(\log(1/\epsilon)+c_1 n\bigr)+O(n+\ell\log n).
\]
The left-hand side is $\Theta\left(4^n\log(1/\epsilon)\right)$.
By the assumption that $k\leq n$ and $\epsilon\ge 2^{-n}$, i.e., $\log(1/\epsilon)\leq n$, the right-hand side is
dominated by the term $2\ell\cdot 4^{\,n/k+m+O(1)} n$.
By comparing the two sides, the number $\ell$ of nonlocal gates should satisfy
\[
\ell = \Omega\left(\frac{4^{\,n}\log(1/\epsilon)}{4^{\,n/k+m+O(1)}\,n}\right)
= \Omega\left(\frac{4^{(1-1/k)n-m}\,\log(1/\epsilon)}{n}\right). \qedhere
\]
\end{proof}

\begin{remark}
If we are only concerned with the lower bound
$C_m^{(k)}(U)=\Omega(4^{(1-1/k)n-m})$ on exact communication complexity, a much
simpler proof follows from parameter counting, which is presented in
\Cref{appx:simple}.  
\end{remark}

\subsection{Space-Unbounded Lower Bound}
\label{sec:generallowub}

The last ingredient of \Cref{thm:commk} is a linear lower bound on
communication even when ancillas are unlimited. We actually prove a
stronger result, that there exists a unitary, namely $U_{\rm SIP}$, that
requires $\Omega(n)$ communication to approximate within constant error. 

\begin{definition}[Shifted inner product]
For $i\in[n]$ and $x,y\in\{0,1\}^n$, define
\[
\mathrm{SIP}(i,x,y)=\bigoplus_{j=0}^{n-1} x_j\,y_{(j+i)\bmod n}.
\]
\end{definition}

\begin{lemma}\label{lem:twoomegan}
Define the $(\log n+2n+1)$-qubit unitary that computes $\mathrm{SIP}(i,x,y)$ by
\[
U_{\mathrm{SIP}}\ket{i,x,y,z}:=\ket{i,x,y,\,z\oplus \mathrm{SIP}(i,x,y)}.
\]
Then $\CC_{\infty}\left(U_{\mathrm{SIP}};\epsilon\right)=\Omega(n)$ for any constant $0 \leq \epsilon < 1/\sqrt{2}$.
\end{lemma}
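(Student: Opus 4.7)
\medskip

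The plan is to reduce the standard inner-product function $\mathrm{IP}$ on $\Omega(n)$-bit inputs to a distributed implementation of $U_{\mathrm{SIP}}$, and then invoke the quantum communication lower bound from \Cref{lem:ip}. The shift register $i$ is exactly what provides flexibility to rescue the reduction against an adversarially chosen balanced partition: averaging over $i$ guarantees that at least one shift produces many ``cross pairs'' $(x_j,y_{(j+i)\bmod n})$ that straddle the cut, even when the partition has been chosen to correlate Alice's $x$-bits with Alice's $y$-bits.

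The first step is a combinatorial lemma: for every balanced partition $\pi$ of the $\log n+2n+1$ qubits, there exists a shift $i^\star\in[n]$ for which $\Omega(n)$ indices $j$ are cross pairs, i.e.\ satisfy $[j\in S_x]\oplus[(j+i^\star)\bmod n\in T_y]=1$, where $S_x,T_y\subseteq[n]$ are the sets of $x$-positions and $y$-positions held by Alice. Balancing forces $|S_x|+|T_y|=n\pm O(\log n)$, since the $i$-register contributes at most $\log n$ qubits and the $z$-qubit contributes one. Averaging over $i\in[n]$,
\[
\mathbb{E}_{i}\bigl[\#\text{cross pairs}\bigr]=|S_x|+|T_y|-\tfrac{2|S_x||T_y|}{n}.
\]
Writing $|S_x|=n/2+a$, $|T_y|=n/2+b$ with $a+b=\pm O(\log n)$, the right-hand side equals $n/2-2ab/n$, and $ab\leq(a+b)^2/4=O(\log^2 n)$ yields an average of $n/2-O(\log^2 n/n)=\Omega(n)$. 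Thus some $i^\star$ achieves $m=\Omega(n)$ cross pairs.

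The second step is the reduction itself. Given IP inputs $a,b\in\{0,1\}^m$, Alice and Bob agree on $i^\star$, load the $i$-register with $i^\star$, and write their IP-bits into the cross-pair positions: for the $k$-th cross pair, whichever party owns the $x$-slot stores its local IP-bit, and the other party stores its local IP-bit in the paired $y$-slot, so that $x_j\cdot y_{(j+i^\star)\bmod n}$ contributes $a_kb_k$. All non-cross-pair $x,y$ bits and the $z$-qubit are set to $0$. With $\ket{\phi}:=\ket{i^\star,x,y,0}$, we have $U_{\mathrm{SIP}}\ket{\phi}=\ket{i^\star,x,y,\mathrm{IP}(a,b)}$. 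Now let $V$ be any unitary with $\|V-U_{\mathrm{SIP}}\|_2\leq\epsilon$. The bound $\|V\ket{\phi}-U_{\mathrm{SIP}}\ket{\phi}\|\leq\epsilon$ gives $\Re\langle V\phi|U_{\mathrm{SIP}}\phi\rangle\geq 1-\epsilon^2/2$, so measuring the $z$-qubit after applying $V$ returns $\mathrm{IP}(a,b)$ with probability at least $(1-\epsilon^2/2)^2>1/2$ whenever $\epsilon<1/\sqrt{2}$, a constant advantage over $1/2$.

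The final step closes the loop. Any distributed circuit implementing $V$ with $\ell$ nonlocal two-qubit gates (and arbitrarily many ancillas per processor) can be simulated by a two-party quantum protocol exchanging $O(\ell)$ qubits, since each nonlocal gate is realizable by $O(1)$ qubit teleportations between Alice's and Bob's local registers. By \Cref{lem:ip}, a protocol computing $\mathrm{IP}$ on $m=\Omega(n)$ bits with constant advantage requires $\Omega(m)$ qubits of communication, so $\ell=\Omega(n)$. Taking the infimum over $m\geq 0$ and optimizing over balanced partitions yields $\CC_{\infty}(U_{\mathrm{SIP}};\epsilon)=\Omega(n)$. The main obstacle is the adversarial quantification in Step~1: an arbitrary partition might try to correlate $S_x$ with $T_y$ to eliminate cross pairs for every individual shift, and the averaging-over-shifts argument, which exploits the cyclic structure of $\mathrm{SIP}$, is what defeats this.
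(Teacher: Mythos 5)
Your proof is correct and follows the same strategy as the paper's: reduce from the inner-product lower bound (\Cref{lem:ip}) to approximate $U_{\mathrm{SIP}}$, and use averaging over cyclic shifts to guarantee, against any balanced partition, a shift $i^\star$ producing $\Omega(n)$ cross pairs. Your symmetric cross-pair count extracts $\Omega(n)$ pairs directly from an $n$-bit SIP instance, whereas the paper counts one-directional pairs $|A\cap B_{i^\star}|$ and pads to a $5n$-bit instance to obtain exactly $n$ of them; this is a minor bookkeeping difference within the same averaging argument.
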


\begin{proof} 

Prove by contradiction. Suppose that $U_{\rm SIP}$ can be
$\epsilon$-approximated (under spectral norm) by a circuit $T$ using $o(n)$
nonlocal two-qubit gates. If we apply $T$ on $\ket{i, x, y}_{Q}\ket{0}_{R}$ and measure register ${R}$, it output $f:={\rm SIP}(i, x, y)$ with error
\[
  \bra{\phi'}\Pi_{\rm err}\ket{\phi'}=
  (\bra{\phi'}-\bra{\phi})\Pi_{\rm err}(\ket{\phi'}-\ket{\phi})\leq
  \|\Pi_{\rm err}\|_2 \|(T-U_{\rm SIP})\ket{i, x,y,0}\|^2 \leq 
  \|T-U_{\rm SIP}\|_2^2 = \epsilon^2.
\]
where $\ket{\phi}=U_{\rm SIP}\ket{i,x,y,0}, \ket{\phi'}=T\ket{i,x,y,0}$, and 
$\Pi_{\rm err}=\mathbbm{I}_Q \otimes \ket{f\oplus 1}\bra{f\oplus 1}$.

Then there is a protocol $P_n$ with the following property: Alice and Bob each
hold $n+\frac{1}{2}\log n+O(1)$ input bits among $(i,x,y)$, and they compute
$\mathrm{SIP}(i,x,y)$ with error $\epsilon^2<1/2$ using $o(n)$ qubits of
communication. Without loss of generality, assume Alice holds at least $n/2$ bits of $x$, 
so Bob holds at least $n/2-\log n+O(1)$ bits of $y$. Define
$A := \{j \mid \text{Alice holds } x_j\}$ and 
$B := \{j \mid \text{Bob holds } y_j\}$, 
so that $|A|,|B|=(\frac{1}{2}-o(1))n$. For any fixed $i\in[n]$, set
$B_i := \{j \mid j+i \in B\}$. 
Observe that
\[
\sum_{i\in[n]} |A \cap B_i|
= \sum_{j\in A} \bigl|\{\,i \mid j\in B_i\,\}\bigr|
= |A|\,|B|
= \frac{n^2}{4}-o(n)>\frac{n^2}{5}.
\]
Hence there exists some $i^\ast$ with $|A\cap B_{i^\ast}|\ge n/5$.

We now build a new protocol $P'$ that computes $\mathrm{IP}(x,y):=\bigoplus_{i=0}^{n-1} x_iy_i$ with $o(n)$
communication. In protocol $P_{5n}$, by
the same averaging, there exists an $i^\ast$ such that $|A\cap B_{i^\ast}|\ge
n$. Given two $n$-bit inputs $x$ and $y$, construct $x',y'\in\{0,1\}^{5n}$ as
follows:
\begin{enumerate}
\item place $x$ and $y$ into the coordinates of $x'$ and $y'$ indexed by $A\cap B_{i^\ast}$, respectively;
\item set all remaining coordinates to $0$.
\end{enumerate}

Run $P_{5n}(i^\ast, x',y')$ and it outputs $\mathrm{SIP}(i^\ast,
x',y')=\mathrm{IP}(x,y)$ with error $<1/2$. Therefore $P'$ computes
$\mathrm{IP}(x,y)$ with $o(n)$ qubits of communication and error $<1/2$,
contradicting \Cref{lem:ip}. We conclude that $C_m(U_{\rm SIP};\epsilon)=\Omega(n)$
regardless of $m$.
\end{proof}

\begin{remark}
The existence of an $n$-qubit unitary $U$ that requires $\Omega(n)$ communication
also follows from
\Cref{lem:cnotlow}, which gives a linear lower bound for CNOT circuits.
However, the lower bound for $U_{\rm SIP}$ is stronger in the sense that it tolerates
error up to $\epsilon<1/\sqrt{2}$, whereas \Cref{lem:cnotlow} only allow error $\epsilon<1/4$.
\end{remark}

\section{Quantum Fourier Transform}\label{sec:QFT}

In this section, we analyze the distributed communication complexity of $n$-qubit QFT, which is defined as
\[
\QFT_n\ket{x_1\cdots x_n}
:=\frac{1}{2^{n/2}}\bigotimes_{j=1}^{n}\left(\ket{0}+e^{2\pi i\,0.x_jx_{j+1}\cdots x_n}\ket{1}\right).
\]

The main result of this section is a tight linear characterization of the communication
complexity of $\QFT_n$, as stated in the following theorem.
\begin{theorem} \label{thm:qft}
    $\CC_m(\QFT_n)=\Theta(n)$ for any integer $m\geq 1$.
\end{theorem}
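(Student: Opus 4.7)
The plan is to prove \Cref{thm:qft} in two halves: an $O(n)$ upper bound when $m \ge 1$, and a matching $\Omega(n)$ lower bound that holds for all $m$ (including unbounded ancilla). For the upper bound, I would start from the standard QFT decomposition
\[
\QFT_n \;=\; P \cdot \prod_{j=1}^{n}\!\left(H_j \cdot \prod_{k>j}\mathrm{CP}_{j,k}^{(k-j+1)}\right),
\]
where $P$ is a bit-reversal (free by the ``up to permutations of data qubit positions'' clause in \Cref{def:commtwo}) and $\mathrm{CP}^{(\ell)}$ is the controlled-phase by angle $2\pi/2^\ell$. Assigning qubits $1,\ldots,n/2$ to Alice and $n/2{+}1,\ldots,n$ to Bob, the direct partition yields $\Theta(n^2)$ nonlocal $\mathrm{CP}$'s. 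I would replace this with a teleport-and-compute schedule: process Alice qubits $j = 1, \ldots, n/2$ one at a time by (i) applying $H_j$ and all Alice-local $\mathrm{CP}$'s that involve $j$; (ii) teleporting $j$ into Bob's ancilla slot via an EPR pair made possible by $m \ge 1$; (iii) letting Bob apply all remaining $\mathrm{CP}_{j,k}$ with $k > n/2$ locally; and (iv) teleporting $j$ back. Finally Bob executes his intra-processor QFT locally. Each Alice qubit costs $O(1)$ nonlocal two-qubit gates, totaling $O(n)$, and only one EPR pair is in use at any time, so a single ancilla per processor suffices.

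For the lower bound, I would apply the mutual-information inequality \Cref{lem:info}. Introduce a reference register $R$ of $n/2$ qubits untouched by the circuit and consider the initial state
\[
\ket{\psi_0} \;:=\; \ket{+}_A^{\otimes n/2} \otimes \frac{1}{\sqrt{2^{n/2}}}\sum_{y \in \{0,1\}^{n/2}} \ket{y}_R \ket{y}_B,
\]
with all ancillas in $\ket{0}$. A direct algebraic calculation shows that $\QFT_n$ (viewed without the free bit-reversal $P$) sends $\ket{+}_A^{\otimes n/2}\ket{y}_B$ to $\ket{\tilde\phi_y}_A \ket{0}_B$, where $\ket{\tilde\phi_y} := \tfrac{1}{\sqrt{2^{n/2}}}\sum_z e^{2\pi i yz/2^{n/2}}\ket{z}$ and the family $\{\ket{\tilde\phi_y}\}_y$ is orthonormal. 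Hence the output state is $\ket{\psi_1} = \ket{0}_B \otimes \tfrac{1}{\sqrt{2^{n/2}}}\sum_y \ket{y}_R\ket{\tilde\phi_y}_A$, an EPR-like state between $A$ and $R$ tensored with $\ket{0}_B$. Computing von Neumann entropies gives $I(A:BR)_{\psi_0}=0$ (pure product between $A$ and $BR$) and $I(A:BR)_{\psi_1} = n$ (since $S(A) = S(BR) = n/2$ and $S(ABR) = 0$). Since every nonlocal two-qubit gate raises $I(A:BR)$ by at most $4$ (\Cref{lem:info} applied with $BR$ as the ``$B$''-side), at least $n/4$ nonlocal two-qubit gates are required, independently of $m$.

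The main obstacle I anticipate is verifying that the teleport-and-compute schedule respects the dependency order of the QFT circuit, namely that $H_j$ precedes every $\mathrm{CP}$ with target $j$ and that each $\mathrm{CP}_{j,k}$ precedes $H_k$. Since all controlled-phase gates commute pairwise, the needed rearrangement is legal, but the scheduling requires careful bookkeeping. A secondary concern for the lower bound is handling the bit-reversal $P$; absorbing it into the allowed permutation of data qubit positions reduces the analysis to the raw DFT computation above, and the mutual-information argument carries over. Together, the two halves yield $\CC_m(\QFT_n) = \Theta(n)$ for all $m \ge 1$.
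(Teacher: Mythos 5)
Your upper bound is essentially the paper's: group the controlled rotations by target qubit, carry the target across the cut once per group, execute the group locally, carry it back. Whether the carry is done by teleportation or by a nonlocal SWAP (the paper's choice in \Cref{lem:qftup}) is a constant-factor difference and does not change the asymptotics, so that half is fine.

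The lower bound is where the proposal breaks. The paper's $\QFT_n$ is defined \emph{without} final bit reversal. The state you claim as the output, $\ket{\tilde\phi_y}_A\ket{0}_B$, is the action of $\QFT_n^{\rm rev}=P\cdot\QFT_n$; for the paper's $\QFT_n$ on input $\ket{+}_A^{\otimes n/2}\ket{y}_B$ (Alice holding qubits $1,\dots,n/2$), the actual output is $\ket{0}_A\otimes\bigl(\QFT_{n/2}\ket{y}\bigr)_B$, a product state across the $A\,|\,BR$ cut, so $I(A{:}BR)_{\psi_1}=0$ and your argument yields no lower bound at all. You can verify this directly for $n=2$: $\QFT_2\,\ket{+}_A\ket{y_1}_B=\ket{0}_A\otimes H\ket{y_1}_B$. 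This is not a careless relabeling — it is a structural obstruction. As the paper's own discussion notes, citing~\cite{chen2023quantum}, the QFT \emph{without} bit reversal generates only $O(1)$ entanglement across a bipartite cut, so any mutual-information (i.e.\ entanglement-entropy) argument is inherently capped at a constant and cannot give $\Omega(n)$. The paper avoids this by bounding the \emph{rank} of the joint output-distribution matrix (\Cref{lem:rank}, \Cref{lem:m}); the rank controls quantum communication even when the state has low entanglement entropy, and for $\QFT_{2n}$ the relevant matrix has rank $\ge 2^{n-1}$. Two further gaps in your lower bound: (i) the claim that the bit reversal $P$ is ``free by the `up to permutations' clause'' of \Cref{def:commtwo} misreads the definition — that clause only lets you choose the balanced partition $\pi$, not compose the target unitary with a permutation, and $P$ relative to the standard partition would itself cost $\Theta(n)$ nonlocal SWAPs, wiping out any $\Omega(n)$ bound you might transfer from $\QFT^{\rm rev}$; (ii) the argument is carried out for one fixed partition, but $\CC_m$ minimizes over all balanced partitions, so the bound must be established for every $\pi$ — the paper handles this with the averaging step in the proof of \Cref{lem:qftlow}.
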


We also show that \Cref{thm:qft} is not robust against approximation: if we allow
approximately implementating $\QFT_n$ up to inverse polynomial error, then only
logarithmic communication is required, as stated in the following lemma.

\begin{restatable}{lemma}{AQFTLogLemma}\label{lem:aqft-log}
For $\epsilon>0$, we have $\CC_1(\QFT_n;\epsilon)=O(\log(n/\epsilon))$.
\end{restatable}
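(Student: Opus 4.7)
The plan is to replace $\QFT_n$ by the well-known Approximate QFT (AQFT), in which every controlled-$R_l$ gate (where $R_l=\mathrm{diag}(1,e^{2\pi i/2^l})$) with $l>d$ is deleted, at truncation depth $d=O(\log(n/\epsilon))$. Each dropped gate is within $2\pi/2^l\le 2\pi/2^d$ of the identity in spectral norm; composing unitaries preserves this bound additively, and the number of dropped gates is $O(n^2)$, so the resulting circuit---call it $\widetilde{\QFT}_n$---satisfies $\|\widetilde{\QFT}_n-\QFT_n\|_2=O(n^2/2^d)\le\epsilon$ for our choice of $d$. It therefore suffices to realize $\widetilde{\QFT}_n$ using $O(d)$ nonlocal two-qubit gates.

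Next I would reorder $\widetilde{\QFT}_n$ into a \emph{target-first} schedule: for each $k=1,\dots,n$, step $k$ applies every retained controlled-$R_l$ on the qubit pair $\{k-l+1,k\}$ with $l=2,\dots,d$ (and $k-l+1\ge 1$), and then applies the Hadamard $H_k$. This reordering is valid because all controlled-$R_l$ gates are diagonal and pairwise commuting; the only nontrivial ordering constraint is that a controlled-$R_l$ on $\{j,k\}$ with $j<k$ lie between $H_j$ and $H_k$, which holds because in the new schedule $H_j$ closes the earlier step $j$ and $H_k$ closes step $k$.

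Partition the inputs so Alice holds qubits $1,\dots,n/2$ and Bob holds $n/2{+}1,\dots,n$. In the target-first schedule, steps $1,\dots,n/2$ are entirely local to Alice and steps $n/2{+}d,\dots,n$ are entirely local to Bob (every retained target $k-l+1\ge n/2+1$). The only boundary steps are $k=n/2+r$ for $r=1,\dots,d-1$. At such a step, split its operations into (i) the local-Bob rotations whose target lies on Bob ($l\le r$), (ii) the cross-boundary rotations whose target lies on Alice ($l>r$), and (iii) the final $H_k$. Using Alice's single ancilla, execute (i) locally on Bob, then perform one nonlocal SWAP to move qubit $k$ into Alice's ancilla slot, execute (ii) and (iii) locally on Alice, and finally perform a second nonlocal SWAP returning qubit $k$ to Bob. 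Each boundary step therefore costs $O(1)$ nonlocal two-qubit gates, for a total of $O(d)=O(\log(n/\epsilon))$.

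The argument is largely bookkeeping, so I do not expect a significant technical obstacle. The subtle point worth verifying is that $m=1$ ancilla suffices despite there being $d\gg 1$ boundary qubits to shuttle: this works precisely because the swap-apply-swap-back pattern returns qubit $k$ to Bob at the end of each boundary step, freeing Alice's ancilla before the next one. One should also confirm that postponing $H_k$ until after the SWAP does not collide with any subsequent gate; this is immediate because, once step $k$ closes, qubit $k$ appears only as the control of diagonal gates in later steps, so its physical location during step $k$ is immaterial once it is returned to Bob.
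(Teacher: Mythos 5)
Your proof is correct and takes essentially the same route as the paper: truncate to the approximate QFT at depth $O(\log(n/\epsilon))$, observe that only $O(\log(n/\epsilon))$ of the resulting gate batches cross the partition, and implement each crossing batch with two nonlocal SWAPs through a single ancilla. The only difference is cosmetic: the paper batches controlled rotations by their shared \emph{target} qubit, which keeps the circuit in its natural order so no reordering is needed, whereas you batch by the shared \emph{control} qubit, which requires the commutativity-of-diagonals reordering argument you supply. Since controlled-phase gates are symmetric, both batchings yield the same count and the same swap-apply-swap-back pattern, so the substance of the argument is identical.
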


The constructions in \Cref{thm:qft} and \Cref{lem:aqft-log} naturally extend to the setting with \(k\) processors with an arbitrarily connected topology, achieving \(O(nk)\) and \(O(k\log (n/\epsilon))\) communication, respectively (\Cref{coro:qft,coro:aqft}). 

\begin{remark} \label{rmk:rev}
    For simplicity of presentation,
    $\QFT_n$ is defined without the bit reversal step at the final stage.
    All the bounds in this section still hold for the
    standard QFT with bit reversal; see \Cref{sec:reversal} for details. Thus,
    our model definition is robust to this variation.
\end{remark}

\subsection{Upper Bound for Exact QFT}

We first prove the upper bound of \Cref{thm:qft}, which states that $O(n)$
communication is sufficient to exactly implement $\QFT_n$ across two processors. Formally,

\begin{lemma}[Upper bound] \label{lem:qftup}
    $\CC_1(\QFT_n)\leq n$.
\end{lemma}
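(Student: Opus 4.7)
\medskip
\noindent\textbf{Proof proposal.} The plan is to use a cat-entangler/cat-disentangler construction to implement all cross-party controlled phase rotations of the QFT by spending only a constant number of nonlocal two-qubit gates per qubit on Alice's side. Specifically, I will partition the $n$ input qubits so that Alice holds qubits $1,\ldots,n/2$ and Bob holds qubits $n/2+1,\ldots,n$, and use Bob's single ancilla (initialized to $\ket{0}$) as a reusable ``proxy'' qubit. Since the standard QFT circuit consists of Hadamards $H_j$ and controlled phase gates $CR_d$ (which are diagonal and hence mutually commuting), I can reorder the circuit so that each iteration processes one qubit $j$: first apply $H_j$, then apply all remaining controlled phases that involve $j$.

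Next, for each of Alice's qubits $j\in\{1,\ldots,n/2\}$, I will execute the following five steps. (a) Alice applies $H$ on qubit $j$. (b) Alice locally applies the controlled phase gates between qubit $j$ and her other unprocessed qubits $j+1,\ldots,n/2$. (c) A single nonlocal CNOT from Alice's qubit $j$ to Bob's ancilla produces the cat state
\[
(\alpha\ket{0}_j+\beta\ket{1}_j)\ket{0}_{\mathrm{anc}}\ \mapsto\ \alpha\ket{0}_j\ket{0}_{\mathrm{anc}}+\beta\ket{1}_j\ket{1}_{\mathrm{anc}}.
\]
(d) Bob locally applies the controlled phases between his ancilla and each of his data qubits $n/2+1,\ldots,n$ using the appropriate angles; because the ancilla perfectly tracks $\ket{x_j}$, this is equivalent to the desired controlled phases between Alice's qubit $j$ and Bob's qubits. (e) A second nonlocal CNOT from Alice's qubit $j$ to Bob's ancilla disentangles the cat state, returning the ancilla to $\ket{0}$ so that it can be reused in the next iteration. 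For Bob's qubits $j\in\{n/2+1,\ldots,n\}$, all controlled phases involve only Bob's qubits and are handled entirely locally. Each Alice iteration uses exactly $2$ nonlocal two-qubit gates, and there are $n/2$ such iterations, yielding a total of $n$ nonlocal two-qubit gates.

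The main obstacles are verifying the correctness of the cat-entangler/disentangler and justifying the reordering. For correctness, I will trace the global state through steps (c)--(e) to check that the diagonal phases accumulated on Bob's side attach to the correct branch of Alice's qubit $j$ and that the ancilla cleanly returns to $\ket{0}$. For the reordering, I must verify that (i) controlled phase gates are pairwise commuting and diagonal in the computational basis, hence all cross-party controlled phases with target $j$ can be collected into one contiguous block after $H_j$, and (ii) Bob's local Hadamards $H_{n/2+1},\ldots,H_n$ must be scheduled \emph{after} all cross-party controlled phases touching those qubits, which is naturally satisfied by processing Alice's qubits first in the order $j=1,\ldots,n/2$. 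No ancilla is required on Alice's side, so $m=1$ suffices, matching the statement of the lemma.
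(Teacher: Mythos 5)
Your proposal is correct and takes essentially the same route as the paper: both group the controlled rotations by their shared target qubit and spend exactly two nonlocal two-qubit gates per such group, giving a total of $n$. The only difference is that the paper routes the target qubit to Bob's ancilla and back via two nonlocal SWAPs, whereas you create and destroy a cat state with two nonlocal CNOTs; since all of Bob's intervening operations are diagonal in the computational basis of the ancilla, these two implementations are interchangeable and yield the identical bound.
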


The standard circuit for $\QFT_n$ uses $O(n^2)$ controlled-rotation gates. Under
any balanced bipartition of the qubits into two processors, the circuit yields
$O(n^2)$ nonlocal controlled rotation gates. However, by batching all controlled
rotations that share the same target qubit and performing them after
communicating that target once across the partition, the total communication cost can be reduced to
$O(n)$. Similar strategies have appeared in the literature
\cite{neumann2020imperfect}; we include the construction here for completeness.

\begin{proof}[Proof of \Cref{lem:qftup}]
First assign the first $n/2$ qubits and the last $n/2$ qubits to processors $A$ and $B$, respectively. Observe that the 2-qubit controlled-rotation gates in ${\rm QFT}_n$ can be grouped by their \emph{target} qubit into $n-1$ sets $S_1,\ldots,S_{n-1}$, as illustrated below.
\begin{figure}[htbp]
    \centering
    \begin{tikzpicture}
    \yquantdefinebox{dots}[inner sep=0pt]{$\dots$}
    \yquantdefinebox{vdotsqft}[inner sep=0pt]{$\vdots$}    
    \yquantdefinebox{ghostqft}[inner sep=0pt]{$\quad\quad$}
    \begin{yquant}
        qubit {$\ket{j_1}$} j1;
        qubit {$\ket{j_2}$} j2;
        nobit ellipsis;
        vdotsqft ellipsis;
        qubit {$\ket{j_{n-1}}$} jn1;
        qubit {$\ket{j_{n}}$} jn;
        h j1;
        [this subcircuit box style={dashed, "$S_1$"}]
        subcircuit {
            qubit {} j1;
            qubit {} j2;
            nobit ellipsis;
            qubit {} jn1;
            qubit {} jn;
            
            box {$R_2$} j1 | j2;
            ghostqft ellipsis;
            dots ellipsis;
            box {$R_{n-1}$} j1 | jn1;
            box {$R_{n}$} j1 | jn;
        } (j1, j2, ellipsis, jn1, jn);
        h j2;
        [this subcircuit box style={dashed, "$S_2$"}]
        subcircuit {
            qubit {} j1;
            qubit {} j2;
            nobit ellipsis;
            qubit {} jn1;
            qubit {} jn;
            dots ellipsis;
            box {$R_{n-2}$} j2 | jn1;
            box {$R_{n-1}$} j2 | jn;
        } (j1, j2, ellipsis, jn1, jn);
        dots ellipsis;
        h jn1;
        [this subcircuit box style={dashed, "$S_{n-1}$"}]
        subcircuit {
            qubit {} j1;
            qubit {} j2;
            nobit ellipsis;
            qubit {} jn1;
            qubit {} jn;
            box {$R_{2}$} jn1 | jn;
        }  (j1, j2, ellipsis, jn1, jn);
        h jn;
    \end{yquant}
\end{tikzpicture}
    \caption{The QFT circuit and gate grouping strategy.}
    \label{fig:qft}
\end{figure}

Among them, the groups $S_{n/2+1},\ldots,S_{n-1}$ require no nonlocal two-qubit gates. For each $S_i$ with $1\le i\le n/2$, all gates share the $i$-th qubit as the target. 

As illustrated in \Cref{fig:si}, the $i$-th group can therefore be implemented using only two nonlocal SWAPs: first apply all two-qubit gates in $S_i$ that are local; then SWAP the $i$-th qubit to an ancilla on processor $B$; next implement the remaining two-qubit gates of $S_i$ using the ancilla as the target qubit; finally SWAP the $i$-th qubit back to its original location. In total, this uses at most $n$ nonlocal two-qubit gates.
\end{proof}

\begin{figure}
    \centering
    \begin{tikzpicture}
    \yquantdefinebox{dotss}[inner sep=0pt]{$\dots$}
    \yquantdefinebox{dotsv}[inner sep=0pt]{$\vdots$}
    \yquantdefinebox{ghost}[inner sep=0pt]{$\quad\quad\quad\quad$}
    \begin{yquant}
        qubit {$\ket{j_1}$} j1;
        nobit e0;
        
        qubit {$\ket{j_{i}}$} ji;
        qubit {$\ket{j_{i+1}}$} ji1;
        qubit {$\ket{j_{i+2}}$} ji2;
        nobit e1;
        qubit {$\ket{j_{\frac{n}{2}}}$} jn2;        

        qubit {$\ket{0}$} aux1;
        qubit {$\ket{j_{\frac{n}{2}+1}}$} jn2p1;
        nobit e2;
        dotsv e2;
        qubit {$\ket{j_{n-1}}$} jn1;
        qubit {$\ket{j_{n}}$} jn;

        [this subcircuit box style={dashed, "Local computation"}]
        subcircuit {
            qubit {} j1;
            nobit e0;
            
            qubit {} ji;
            qubit {} ji1;
            qubit {} ji2;
            nobit e1;
            qubit {} jn2;

            dotsv e0;
            dotsv e1;
            box {$R_2$} ji | ji1;
            box {$R_3$} ji | ji2;
            ghost e1;    
            dotss e1;
            box {$R_{n/2-i+1}$} ji | jn2;    
        } (j1, e0, ji, ji1, ji2, e1, jn2);

        swap (ji, aux1);

        [this subcircuit box style={dashed, "Local computation"}]
        subcircuit {
            qubit {} aux1;
            qubit {} jn2p1;
            nobit e2;
            qubit {} jn1;
            qubit {} jn;

            box {$R_{n/2-i+2}$} aux1 | jn2p1;
            ghost e2;
            dotss e2;
            box {$R_{n-i}$} aux1 | jn1;
            box {$R_{n-i+1}$} aux1 | jn;
        } (aux1, jn2p1, e2, jn1, jn);

        swap (ji, aux1);

        output {$A$} (j1, e0, ji, ji1, ji2, e1, jn2);
        output {$B$} (aux1, jn2p1, e2, jn1, jn);
    \end{yquant}
\end{tikzpicture}
    \caption{Implementation of gate group $S_i$ across two processors.}
    \label{fig:si}
\end{figure}

The above construction can be easily generalized to $k$  processors:
\begin{corollary} \label{coro:qft}
Given any connected undirected graph $G=([k],E)$, we have $\CC_{1}^{G}(\QFT_n)=O(kn)$.

\end{corollary}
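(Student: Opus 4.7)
The plan is to generalize the two-processor construction of \Cref{lem:qftup} by routing the target qubit along a spanning tree of $G$ rather than across a single cut. First, fix any spanning tree $T$ of $G$ (so $T$ has $k-1$ edges), and pick an arbitrary balanced partition $\pi:[n]\to[k]$ of the input qubits. As in the proof of \Cref{lem:qftup}, group the two-qubit controlled rotations of $\QFT_n$ by their target qubit into $n-1$ groups $S_1,\dots,S_{n-1}$; all gates in $S_i$ share the target qubit $q_i$.

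For each group $S_i$, I would implement $S_i$ by a depth-first traversal of $T$ rooted at the processor $p$ that currently holds $q_i$. Specifically, each time the DFS crosses a tree edge $(u,v)$ from $u$ to $v$, perform one nonlocal SWAP between the qubit slot on $u$ currently holding the (traveling) target and the ancilla slot on $v$; then execute all local controlled rotations of $S_i$ whose control lies on $v$; continue the DFS recursively, and when backtracking over $(u,v)$, undo the SWAP so the target returns to $u$. After the whole DFS finishes, the target qubit is back in its original location on $p$, the ancilla of every processor is restored to $\ket{0}$, and every gate of $S_i$ has been applied.

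A DFS traverses each of the $k-1$ edges of $T$ exactly twice, so implementing one group $S_i$ costs at most $2(k-1)$ nonlocal SWAPs, i.e.\ $O(k)$ nonlocal two-qubit gates. Summing over the $n-1$ groups yields the claimed $O(kn)$ bound. The only real thing to check is that a single ancilla qubit per processor is enough: at any moment during the processing of $S_i$, only one qubit is in transit along the tree, so at most one processor's ancilla ever holds the traveling target while all others sit in $\ket{0}$; the $m=1$ ancilla regime therefore suffices.

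The main obstacle is essentially bookkeeping rather than a conceptual hurdle: one must verify that the local controlled rotations applied while the target resides in an ancilla of some processor $v$ really equal the intended gates of $S_i$ with controls on $v$ (which follows because controlled rotations sharing a target commute and SWAP merely relabels the physical location of the target logical qubit), and that the backtracking SWAPs cleanly restore the ancillas. Neither introduces additional nonlocal cost, so the $O(kn)$ bound goes through for every connected topology $G$.
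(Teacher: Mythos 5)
Your proposal is correct and is in essence the same construction as the paper's: build a spanning tree, route the traveling target qubit along a DFS-based walk of that tree using one ancilla per processor, and apply the controlled rotations of each group $S_i$ locally at whichever processor currently holds the target. The only differences are bookkeeping. The paper relabels processors by DFS preorder, assigns the data qubits contiguously to $P_1,\dots,P_k$ in that order, and bounds the routing cost of each $S_i$ by $2\sum_{j} d_j$ where $d_j$ is the tree distance between preorder-consecutive processors, invoking the standard fact $\sum_j d_j \le 2k-2$; in contrast, you run a full DFS Euler tour of the tree starting at the processor holding the target, so each of the $k-1$ tree edges is crossed exactly twice, giving $2(k-1)$ SWAPs per group directly without needing the preorder-contiguous assignment or the distance-sum bound. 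Your version is a little cleaner and, incidentally, works for an arbitrary balanced partition (the paper only needs some partition, since $\CC_1^G$ minimizes over partitions, so both suffice). One tiny point worth making explicit in a final write-up: before the first SWAP you should also apply the rotations of $S_i$ whose controls live on the root processor, and the commutation of controlled rotations sharing a target (which you already note) is what justifies reordering the gates of $S_i$ to match the DFS visit order.
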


\begin{proof}
Let $H$ be a spanning tree of $G$. Relabel the processors so that $1,2,\ldots,k$ is a depth-first-search (DFS) preorder traversal of $H$. 
For $i\in[k-1]$, let $d_i$ be the graph distance in $H$ between processors $i$ and $i{+}1$.  
A standard property of DFS on trees gives
\(
\sum_{i=1}^{k-1} d_i \le 2k-2 = O(k).
\)
Assign data qubits $q_{(\ell-1)(n/k)+1},\ldots,q_{\ell(n/k)}$ to processor $P_\ell$ for $\ell\in[k]$.

It suffices to show that each gate group $S_i$ in \Cref{fig:qft} can be implemented over $H$ using $O(k)$ nonlocal two-qubit gates, assuming one ancilla qubit per processor.  
Let the target of $S_i$ reside on $P_\ell$, where $\ell=\lceil i/(n/k)\rceil$.  
Starting at $P_\ell$, move (via SWAP) the target state along the DFS order
$P_\ell \to P_{\ell+1} \to \cdots \to P_k$, keeping it on the local ancilla at each visited processor and, at each step, executing all controlled rotations in $S_i$ whose controls are local.  
Then route the target back to its original location at $P_\ell$.

Between consecutive processors $P_j$ and $P_{j+1}$ in the DFS order, routing the target requires $d_j$ nonlocal SWAPs (one per edge on the unique path in $H$).  
The forward-and-back traversal therefore uses 
\(
2\sum_{j=1}^{k-1} d_j = O(k)
\)
nonlocal SWAPs per group $S_i$.  
Since there are $O(n)$ groups, the total communication is $O(nk)$.

\end{proof}

\subsection{Lower Bound for Exact QFT}

We prove that linear communication is necessary to exactly implement $\QFT_n$
over two processors even with unlimited ancillas, so the construction in
\Cref{lem:qftup} is optimal up to a constant factor.

\begin{lemma}[Lower bound] \label{lem:qftlow}
    $\CC_\infty(\QFT_{2n})=\Omega(n)$.
\end{lemma}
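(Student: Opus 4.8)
The plan is to lower-bound the communication by the \emph{operator Schmidt rank} of $\QFT_{2n}$ across the bipartition between the two processors, exploiting the fact that this rank grows by at most a factor of $4$ with each nonlocal two-qubit gate and is unaffected by local gates or by the presence of ancillas.

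Concretely, fix a balanced partition $\pi$ (processor $A$ holds the data qubits in a set $S$) and suppose a circuit with $\ell$ nonlocal two-qubit gates and any number of ancillas realizes $\QFT_{2n}$ up to a fixed permutation $\Pi$ of the data-qubit positions. Local gates preserve the operator Schmidt rank across the processor cut, and each nonlocal two-qubit gate multiplies it by at most $4$, so the unitary $\tilde U$ implemented on $(\text{data}+\text{ancillas})$ has operator Schmidt rank at most $4^{\ell}$; projecting the ancillas onto their fixed input state $\ket 0$ and output state $\ket g$ then shows that the induced operator on the data, namely $\Pi\,\QFT_{2n}$, has operator Schmidt rank at most that of $\tilde U$. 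Hence $4^{\ell}\ge \mathrm{osr}(\Pi\,\QFT_{2n})=\mathrm{osr}\!\bigl(\QFT_{2n};\,\Pi^{-1}(S)\mid\Pi^{-1}(\bar S)\bigr)$, so it suffices to show that $\QFT_{2n}$ has operator Schmidt rank $4^{\Omega(n)}$ across \emph{every} balanced bipartition of its $2n$ qubits.

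To estimate this I would reshuffle $\langle w\mid\QFT_{2n}\mid z\rangle = 2^{-n}\,e^{2\pi i w z/2^{2n}}$ into the matrix whose rank equals the operator Schmidt rank. For the aligned ``top $n$ $\mid$ bottom $n$'' cut this is clean: writing $w=2^n w_h+w_l$ and $z=2^n z_h+z_l$ gives $e^{2\pi i wz/2^{2n}}=e^{2\pi i w_h z_l/2^n}\,e^{2\pi i w_l z_h/2^n}\,e^{2\pi i w_l z_l/2^{2n}}$, where the last factor depends only on Bob's indices and is thus a column rescaling, and after reordering the remaining matrix is the Kronecker product of two $2^n\times2^n$ discrete Fourier matrices, hence has rank $4^n$; this already yields $\ell\ge n$ for this partition. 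The main obstacle is to push the rank computation through for an \emph{arbitrary} balanced bipartition, where the ``cross'' phase terms no longer separate so neatly — one must show that the resulting phase matrix still has rank $2^{\Omega(n)}$ in each block, i.e., that $\QFT_{2n}$ is ``maximally nonlocal'' across every balanced cut — and this uniform-over-partitions rank bound is, I expect, the crux of the argument. (An alternative, closer in spirit to the $U_{\mathrm{SIP}}$ proof, would be to reduce the $n$-bit inner-product function of \Cref{lem:ip} to a single distributed call of $\QFT_{2n}$ with $O(1)$ extra communication; the difficulty there is that the output position at which a ``carry'' of the inner product would naturally appear already depends on all inputs, so the encoding must spread the inner product across several output positions or use phase kickback compatibly with an arbitrary bipartition.) Either route gives $\CC_\infty(\QFT_{2n})=\Omega(n)$, matching the upper bound of \Cref{lem:qftup} and completing the $\Theta(n)$ characterization of \Cref{thm:qft}.
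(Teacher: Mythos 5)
Your operator-Schmidt-rank framework is sound and genuinely different from the paper's argument: local gates preserve the rank across the processor cut, each nonlocal two-qubit gate multiplies it by at most $4$, and the ancilla reduction can be made rigorous by noting that $\tilde U(\mathbbm{I}\otimes\ket{0_{\rm anc}})=\Pi\,\QFT_{2n}\otimes\ket{g}$ has operator Schmidt rank $\mathrm{rank}(\Pi\,\QFT_{2n})\cdot\mathrm{SR}(\ket{g})\ge \mathrm{rank}(\Pi\,\QFT_{2n})$ across the cut (your phrase ``project the ancillas onto $\ket g$'' is slightly off when $\ket g$ is entangled across the cut, but the tensor-product form repairs it). Your computation for the aligned ``top $n$ versus bottom $n$'' cut is also correct and gives the clean bound $\ell\ge n$.

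The genuine gap is exactly the one you flag. \Cref{def:commtwo} defines $\CC_\infty$ as a minimum over \emph{all} balanced partitions, so the aligned-cut computation alone does not prove the lemma: you must show that the operator Schmidt rank is $4^{\Omega(n)}$ for every balanced cut, and for a general cut the reshuffled matrix is a submatrix of the $2^{2n}$-point Fourier matrix indexed by two combinatorial subcubes of bit positions, for which your Kronecker factorization no longer applies and whose rank is not obviously large. The paper circumvents this with a different rank argument. It runs a concrete protocol --- Alice loads a uniformly random string into $\Omega(n)$ of her qubits lying in one significance-half, the circuit applies $\QFT_{2n}$, and Bob applies transversal $H$ and measures $\Omega(n)$ of his qubits in the other half (such subsets always exist by averaging) --- and lower-bounds the rank of the joint output-probability matrix via \Cref{lem:rank} ($c\ge\tfrac12\log\mathrm{rank}(M)$). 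The entries of that matrix are products of Chebyshev polynomials $T_{2^i}(\cos a_x)$, so the columns are polynomials of pairwise distinct degrees evaluated at exponentially many distinct points; this distinct-degree argument is insensitive to \emph{which} subset of positions each party ends up holding, which is precisely the robustness your Fourier-submatrix computation lacks. To complete your route you would need an analogous uniform statement, e.g.\ that $\bigl(e^{2\pi i ZW/2^{2n}}\bigr)$ has rank $2^{\Omega(n)}$ whenever $Z$ and $W$ range over subcubes supported on $\Omega(n)$ low and $\Omega(n)$ high bit positions respectively; your parenthetical reduction from \Cref{lem:ip} is likewise left unrealized. As written, the proposal establishes the lower bound only for one partition.
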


To prove the lower bound, we use the following lemma, which links the number of
communication qubits to the rank of the joint output probability matrix. This
rank method goes back to Theorem 4.2 of \cite{zhang2012quantum}, which was
proved with constant factor $1/4$. Here, we provide a tightened version of
constant $1/2$, of which a self-contained proof is given in \Cref{appx:rank}.

\begin{lemma} \label{lem:rank}
    Suppose two parties $A$ and $B$ communicate $c$ qubits. Let $x$ and $y$ denote Alice and Bob's outputs respectively. Define matrix \[M_{x,y}:=\Pr(A\text{ outputs }x, B\text{ outputs }y).\] 
    Then $c\geq \frac{1}{2}\log {\rm rank}(M).$
\end{lemma}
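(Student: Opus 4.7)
The plan is to reduce to a purified, deferred-measurement normal form of the protocol and then show $\mathrm{rank}(M)\le 2^{2c}$ via a Schmidt-rank argument. This gives $c \ge \tfrac{1}{2}\log\mathrm{rank}(M)$ immediately.

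\emph{Step 1: Normalize the protocol.} By the principle of deferred measurement, and by purifying any shared or private randomness and replacing POVMs with Naimark dilations on local ancillas, I may assume without loss of generality that the protocol consists of local unitaries (acting on private registers enlarged by ancillas) alternating with qubit exchanges, and that $x$ and $y$ are obtained at the very end by measuring Alice's and Bob's full registers in the computational basis. The general case where outputs are functions of the measurement outcomes corresponds to coarse-graining (summing blocks of rows and columns of) the fine-grained joint-outcome matrix, which cannot increase its rank; so it suffices to bound the rank in the fine-grained case. Writing the final joint pure state as $\ket{\psi}_{AB}$ and $\alpha_{x,y}:=\bra{x}_A\bra{y}_B\ket{\psi}$, we have $M_{x,y}=|\alpha_{x,y}|^{2}$.

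\emph{Step 2: Bound the Schmidt rank by $2^{c}$.} Starting from a product initial state (Schmidt rank $1$) and inducting on protocol steps, local unitaries (with local ancillas initialized to $\ket{0}$) preserve the Schmidt rank across the $A \mid B$ cut. For a transmission step in which Alice sends a qubit $Q$ to Bob, write the current state in Schmidt form with respect to $A=A'Q \mid B$ as $\sum_{i=1}^{r}\sqrt{\lambda_i}\ket{u_i}_{A'Q}\ket{v_i}_B$ and expand $\ket{u_i}_{A'Q}=\sum_{s\in\{0,1\}}\ket{u_i^{(s)}}_{A'}\ket{s}_Q$. With respect to the new cut $A'\mid QB$ we obtain
\[
\ket{\psi}\;=\;\sum_{i=1}^{r}\sum_{s\in\{0,1\}}\sqrt{\lambda_i}\,\ket{u_i^{(s)}}_{A'}\otimes\bigl(\ket{s}_Q\ket{v_i}_B\bigr),
\]
which is a decomposition with at most $2r$ product terms. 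Hence each transmitted qubit at most doubles the Schmidt rank, so after $c$ exchanges it is at most $2^{c}$.

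\emph{Step 3: From Schmidt rank to $\mathrm{rank}(M)$.} Using the Schmidt decomposition $\ket{\psi}=\sum_{i=1}^{r}\sqrt{\lambda_i}\ket{u_i}_A\ket{v_i}_B$ with $r\le 2^{c}$, we obtain the rank-$r$ factorization $\alpha_{x,y}=\sum_i \sqrt{\lambda_i}\langle x|u_i\rangle\langle y|v_i\rangle$, so $\mathrm{rank}(\alpha)\le 2^{c}$. Since $M=\alpha\circ\overline{\alpha}$ is the Schur (entrywise) product, the standard bound $\mathrm{rank}(A\circ B)\le \mathrm{rank}(A)\cdot\mathrm{rank}(B)$ yields $\mathrm{rank}(M)\le\mathrm{rank}(\alpha)^{2}\le 2^{2c}$, which rearranges to $c\ge \tfrac{1}{2}\log\mathrm{rank}(M)$. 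The main obstacle I anticipate is Step 1: fully justifying that arbitrary quantum protocols with adaptive measurements, shared randomness, POVMs, and local ancillas reduce to the pure-state/unitary-only form without inflating $c$ or altering the induced $M$. Once this normalization is in place, Steps 2 and 3 are routine applications of Schmidt decomposition and Schur's rank inequality; the improved constant $1/2$ (over the earlier $1/4$ in \cite{zhang2012quantum}) comes precisely from invoking Schur's inequality rather than a looser estimate on the rank of $|\alpha|^{2}$.
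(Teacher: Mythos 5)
Your proof is correct and follows essentially the same route as the paper's: purify to a pure final state with Schmidt rank at most $2^{c}$ across the $A\mid B$ cut, then deduce $\mathrm{rank}(M)\le 2^{2c}$. The only cosmetic difference is that you invoke Schur's product rank inequality ($\mathrm{rank}(\alpha\circ\overline{\alpha})\le\mathrm{rank}(\alpha)^2$) where the paper expands the double sum $\sum_{i,i'}$ into $2^{2c}$ explicit rank-one matrices and uses subadditivity of rank---the same computation---while you additionally spell out the deferred-measurement normalization and the Schmidt-rank induction that the paper takes as given.
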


\paragraph{Special case: fixed partition.}
To illustrate the idea behind the proof of \Cref{lem:qftlow}, we first consider the special case where the qubit partition is fixed as 
\[
\pi^*(i):=\begin{cases}
    1 &\text{if } 1\leq i\leq n \\
    2 &\text{if } n < i \leq 2n
\end{cases}.
\]
Formally, we claim that
\begin{lemma} \label{lem:pistar}    
    $\CC_\infty(\QFT_{2n}\mid\pi^*)\geq \frac{n-1}{4}$.
\end{lemma}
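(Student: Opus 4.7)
The plan is to apply \Cref{lem:rank} to a protocol built around the hypothesized implementation of $\QFT_{2n}$ under the partition $\pi^*$. Suppose $\QFT_{2n}$ can be realized with $\ell$ nonlocal two-qubit gates, with Alice holding qubits $1,\dots,n$ (the high bits) and Bob holding qubits $n{+}1,\dots,2n$ (the low bits). I would give Alice an extra $n$-qubit reference register $R_A$ (permitted since $m=\infty$) while Bob has only his input register $B$. Alice locally prepares $\frac{1}{\sqrt{2^n}}\sum_{x_A}\ket{x_A}_{R_A}\ket{x_A}_A$ using Hadamards and CNOTs; Bob leaves $B$ in $\ket{0}^{\otimes n}$. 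The parties then run the given $\QFT_{2n}$ circuit on $AB$; afterwards Alice applies a local $\QFT_n^{-1}$ on $R_A$; finally both measure in the computational basis, with Alice outputting $(z,y_A)$ and Bob outputting $y_B$.

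The central calculation is short. Since the integer representation of $\ket{x_A,0}_{AB}$ is $x_A\cdot 2^n$, a direct expansion gives
\[
\QFT_{2n}\ket{x_A,0}_{AB}=\frac{1}{2^n}\sum_{y_A,y_B} e^{2\pi i x_A y_B/2^n}\ket{y_A,y_B},
\]
because the factor $e^{2\pi i x_A y_A}$ trivializes for $x_A y_A\in\mathbb{Z}$. Hence the global state after the $\QFT_{2n}$ step is
\[
\frac{1}{2^{3n/2}}\sum_{x_A,y_A,y_B} e^{2\pi i x_A y_B/2^n}\ket{x_A}_{R_A}\ket{y_A,y_B}_{AB},
\]
and Alice's local $\QFT_n^{-1}$ on $R_A$ collapses the $x_A$-sum into $2^n\cdot\mathbf{1}[z=y_B]$, producing the state $\frac{1}{2^n}\sum_{y_A,y_B}\ket{y_B}_{R_A}\ket{y_A,y_B}_{AB}$. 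Measuring yields the joint output distribution $M_{(z,y_A),\,y_B}=\frac{1}{4^n}\mathbf{1}[z=y_B]$, a $4^n\times 2^n$ matrix whose columns have pairwise disjoint supports, so $\rank(M)=2^n$. Plugging into \Cref{lem:rank} gives $c\ge \tfrac{1}{2}\log 2^n=n/2$ qubits of communication. Since all communication in the protocol is inside the $\QFT_{2n}$ subroutine (the preparation on $R_A$, Alice's $\QFT_n^{-1}$, and all measurements are local), and each nonlocal two-qubit gate contributes only $O(1)$ qubits of communication (e.g.\ via gate teleportation), this yields $\ell=\Omega(n)$.

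The step I expect to require the most care is not the rank computation itself but the conversion of $c\ge n/2$ into the specific constant $(n-1)/4$. This conversion relies on \Cref{lem:info} (each nonlocal two-qubit gate increases $I(A{:}B)$ by at most $4$, i.e.\ is worth at most two qubits of communication), possibly with a small off-by-one in the treatment of the reference register or the local basis-change operations. The choice of protocol itself is rigid and appears precisely engineered for the task: the maximally entangled preparation on $R_A A$ followed by Alice's local inverse QFT after $\QFT_{2n}$ is exactly what is needed to convert the phases $e^{2\pi i x_A y_B/2^n}$ imprinted on Bob's register by $\QFT_{2n}$ into perfectly correlated computational-basis outcomes between the two parties, thereby saturating the rank of $M$.
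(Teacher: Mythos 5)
Your proposal is correct and takes a genuinely different route to the rank bound. Both proofs invoke \Cref{lem:rank} and convert qubits of communication into nonlocal two-qubit gates at a factor of $2$, but the way you lower bound $\rank(M)$ is not the paper's. The paper's proof is entirely "classical on Alice's side": Alice samples $x$ uniformly, feeds $\ket{x}$ into her half, and Bob applies transversal $H$ to his half and measures; the row $M_{x,\star}$ then factors as a tensor product of $[\cos^2\theta,\sin^2\theta]$ entries, and the rank is lower bounded via a Chebyshev-polynomial argument (\Cref{lem:m}), which yields $\rank(M)\ge 2^{n-1}$ — hence the $(n-1)/4$ constant. Your proof instead uses the freedom $m=\infty$ to give Alice a coherent reference register $R_A$ maximally entangled with her data register, then uses a local $\QFT_n^{-1}$ on $R_A$ to exactly undo the phase $e^{2\pi i x_A y_B/2^n}$ that $\QFT_{2n}$ imprints on Bob's low bits; the resulting output matrix $M_{(z,y_A),y_B}=\tfrac{1}{4^n}\mathbf{1}[z=y_B]$ has columns with pairwise disjoint supports, so $\rank(M)=2^n$ by inspection. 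This bypasses the polynomial-interpolation machinery entirely and even gives a marginally stronger constant, $\ell\ge n/4$ rather than $(n-1)/4$, so your worry about an off-by-one is unfounded — there is none. (Also, the conversion from qubits to gates is the observation in the paper's own proof of \Cref{lem:pistar}, not \Cref{lem:info}.) Two small caveats worth flagging: first, your central expansion $\QFT_{2n}\ket{x_A,0}=\frac{1}{2^n}\sum e^{2\pi i x_A y_B/2^n}\ket{y_A,y_B}$ is the formula for the QFT \emph{with} final bit reversal, whereas the paper defines $\QFT_n$ without it; under the paper's stated convention the nontrivial phases land on the other half, so the roles of $A$ and $B$ (and hence the side carrying $R_A$) should be swapped — this is a convention-level adjustment that the paper itself glosses over (cf.\ its remark on $\QFT_n^{\rm rev}$), and it does not affect the validity of the argument under the fixed partition $\pi^*$. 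Second, note that the paper's Chebyshev argument, while less slick, has the virtue of working with a purely classical input distribution and no coherent reference, which may be preferable if one wants to adapt the argument to a setting where Alice's side is further constrained.
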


Given a implementation of $\QFT_{2n}$ under $\pi^*$, let two parties Alice and Bob
run the following experiment, and output $n$-bit strings $x$ and $y$ respectively.
\begin{enumerate}
    \item Alice samples a uniform $x\in\{0,1\}^n$. 
    \item Bob holds $\ket{0^n}$, Alice holds $\ket{x}$. Apply ${\rm QFT}_{2n}$ and Bob gets
   $$
   \bigotimes_{i=1}^n \frac{\ket{0}+e^{2\pi i\frac{x}{2^{2n-i}}}\ket{1}}{\sqrt{2}}.
   $$
   \item Bob applies transversal $H$ on its register and gets
   $$
   \bigotimes_{i=1}^n \left(\frac{1+e^{2\pi i\frac{x}{2^{2n-i}}}}{2}\ket{0}+\frac{1-e^{2\pi i\frac{x}{2^{2n-i}}}}{2}\ket{1}\right).
   $$
   \item Bob measures all qubits and obtains $y\in\{0,1\}^n$:
   $$
   \Pr(y_i=0)=\left|\frac{1+e^{2\pi i\frac{x}{2^{2n-i}}}}{2}\right|^2=\cos^2 \frac{\pi x}{2^{2n-i}}.
   $$
\end{enumerate}
Define matrix $M_{x,y}:=\Pr(A\text{ outputs }x,\, B\text{ outputs }y)$.
By the above protocol, the $x$-th row of $M_{x,y}$ is
\[
M_{x,\star}=\frac{1}{2^n}\bigotimes_{i=1}^n \left[\cos^2 \frac{\pi x}{2^{2n-i}}, \sin^2 \frac{\pi x}{2^{2n-i}}\right].
\]
Then we claim that 
\begin{lemma} \label{lem:m}
    ${\rm rank}(M)\geq 2^{n-1}$.
\end{lemma}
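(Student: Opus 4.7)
The plan is to transform the row structure of $M$ via a Hadamard change of basis, recognize a hidden Chebyshev structure, and reduce the rank computation to a polynomial-evaluation dimension.

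First, using $\cos^2\theta = \tfrac{1+\cos 2\theta}{2}$ and $\sin^2\theta = \tfrac{1-\cos 2\theta}{2}$, each two-dimensional factor of the tensor product decomposes as $\tfrac{1}{2}[1,1]+\tfrac{\cos 2\theta}{2}[1,-1]$. Expanding the tensor product over subsets $S\subseteq[n]$ and collecting the resulting sign pattern as a Hadamard character yields the factorization
\[
M \;=\; \frac{1}{4^n}\,N H,\qquad H_{S,y}=(-1)^{S\cdot y},\qquad N_{x,S}=\prod_{i\in S}\cos 2\theta_i(x),
\]
where I write $\theta_i(x)=\pi x/2^{2n-i}$. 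Since the Hadamard matrix $H$ is invertible, $\mathrm{rank}(M)=\mathrm{rank}(N)$, reducing the task to bounding $\mathrm{rank}(N)$.

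Second, I would exploit the doubling relation $2\theta_i(x) = 2^{i-1}\cdot 2\theta_1(x)$ together with $\cos(k\phi)=T_k(\cos\phi)$ to write $\cos 2\theta_i(x) = T_{2^{i-1}}(u(x))$, where $u(x):=\cos(\pi x/2^{2n-2})$. Thus column $S$ of $N$ is the polynomial $P_S(u):=\prod_{i\in S}T_{2^{i-1}}(u)$ evaluated at $u(x)$, and $P_S$ has degree exactly $k(S):=\sum_{i\in S}2^{i-1}$ because the Chebyshev leading coefficients are nonzero. Since $S\mapsto k(S)$ is a bijection between $2^{[n]}$ and $\{0,1,\ldots,2^n-1\}$, the family $\{P_S\}_S$ has all distinct degrees $0,1,\ldots,2^n-1$ and hence forms a basis of the space of polynomials of degree at most $2^n-1$. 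Consequently the columns of $N$ span exactly the polynomial-evaluation subspace $E:=\{(p(u(x)))_x : \deg p\le 2^n-1\}$, so $\mathrm{rank}(N)=\dim E$, which equals the number of distinct values taken by $u(x)$ over $x\in\{0,1,\ldots,2^n-1\}$.

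Finally, I would count distinct $u$-values by restricting to $x\in\{0,1,\ldots,2^{n-1}-1\}$: the arguments $\pi x/2^{2n-2}$ then lie in $[0,\pi/2^{n-1})\subseteq[0,\pi)$, an interval on which cosine is strictly decreasing, so these $2^{n-1}$ values of $u(x)$ are pairwise distinct. This yields $\mathrm{rank}(M)=\mathrm{rank}(N)\ge 2^{n-1}$ as desired. The main bookkeeping concern is tracking the rank correctly through the Hadamard change of basis and justifying that the Chebyshev-product polynomials form a basis; once these are in hand the rest is a one-line cosine injectivity check.
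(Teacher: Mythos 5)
Your proof is correct and follows essentially the same route as the paper's: factor out the Hadamard tensor, recognize the Chebyshev structure, and use monotonicity of cosine to count distinct evaluation points. Your normalization of the base variable (taking $u(x)=\cos(2\pi x/2^{2n-1})$ so that the degrees $k(S)$ hit exactly $\{0,\ldots,2^n-1\}$) is marginally cleaner than the paper's choice of $a_x=2\pi x/2^{2n}$, and your dimension-of-evaluation-space phrasing replaces the paper's linear-independence-by-contradiction step, but the substance is identical.
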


\begin{proof}
By the fact that
$[\cos^2 \theta,\sin^2\theta]=[1, \cos2\theta]\begin{bmatrix}
1/2&1/2\\1/2&-1/2
\end{bmatrix},
$
we have
$$
M_{x,\star}=\left(\bigotimes_{i=1}^n \left[1, \cos \frac{2\pi x}{2^{2n-i}}\right]\right)\cdot \frac{1}{2^n}\begin{bmatrix}
1/2&1/2\\1/2&-1/2
\end{bmatrix}^{\otimes n}.
$$
Since the latter is invertible, ${\rm rank}(M)$ equals the rank of the matrix
$$
M'_{x,\star}:=\bigotimes_{i=1}^n \left[1, \cos \frac{2\pi x}{2^{2n-i}}\right].
$$
Let $a_x=2\pi x/2^{2n}$, and $T_k$ be the Chebyshev polynomial of degree $K$ \cite{mason2002chebyshev}. By the fact that $\cos(k\alpha)=T_k(\cos \alpha)$, we have
$$
M'_{x,y}=\prod_{i=1}^n\cos^{y_i}\left(\frac{2\pi x}{2^{2n-i}}\right)=\prod_{i=1}^n\cos^{y_i}\left(2^ia_x\right)=\prod_{i=1}^n T_{2^i}(\cos a_x)^{y_i}=f_y(\cos a_x)
$$
where $f_y$ is some degree-$y$ polynomial. 

Thus the $y$-th column $M'_{\star,y}$ is the evaluation of $f_y$ on $2^n$ points
$$
\{\cos a_x\}=\left\{\cos(0), \cos\left(2\pi\frac{1}{2^n}\right), \ldots, \cos\left(2\pi\frac{2^n-1}{2^n}\right)\right\}
$$
which contains $\geq 2^n/2$ distinct points (since $\cos \theta$ is monotone for $\theta\in[0,\pi]$).

Finally we show that the first $2^{n-1}$ columns $M'_{\star,0}, \ldots, M'_{\star,2^{n-1}-1}$ are linearly independent by contradiction. Assume there exists $\gamma_1,\ldots,\gamma_k\not=0$ and $0\leq y^{(1)}<y^{(2)}<\cdots<y^{(k)}<2^{n-1}$ such that 
$$
\sum_{i=1}^k \gamma_i M'_{\star,y^{(i)}}=\mathbf{0}.
$$
Then the degree-$y^{(k)}$ polynomial
$$
F:=\sum_{i=1}^k \gamma_i f_{y^{(i)}}
$$
evaluates $0$ on $2^n/2$ distinct points. Thus $F\equiv 0$ which implies $\gamma_k=0$, contradicting with the fact that $\gamma_k\not=0$. Thus ${\rm rank}(M)={\rm rank}(M')\geq 2^{n-1}$.
\end{proof}

Now we conclude lower bound for the special case.
\begin{proof}[Proof of \Cref{lem:pistar}]
    By \Cref{lem:rank} and \Cref{lem:m}, the number of communication qubits 
    $$
    c\geq \frac{1}{2}\log{\rm rank}(M)\geq \frac{n-1}{2}.
    $$
    By the fact that an nonlocal qubit gate can be implemented by communication two qubits, we have that $\CC_m(\QFT_{2n}\mid\pi^*)\geq \frac{n-1}{4}$ regardless of $m$.
\end{proof}

\paragraph{General case.} The above proof can be easily generalized to arbitrary balanced partition.

\begin{proof}[Proof of \Cref{lem:qftlow}]
    Given any balanced paration $\pi$ of $2n$ input qubits, by an averaging
    argument, we can assume Alice holds a set $X$ of $n/2$ qubits among
    $\{n+1,\ldots,2n\}$, and Bob holds a set $Y$ of $n/2-O(1)$ qubits among
    $\{1, \ldots, n\}$. The experiment protocol is almost the same:
    \begin{enumerate}
        \item Alice samples $x\in\{0,1\}^{n/2}$.
        \item Alice initializes $X$ as $\ket{x}$, and all other qubits are set to $\ket{0}$. Then they apply $\QFT_{2n}$.
        \item Bob applies transversal $H$ and measurement on $Y$,  obtaining $y\in\{0,1\}^{n/2-O(1)}$.
    \end{enumerate}
    Define matrix $M_{x,y}:=\Pr(A\text{ outputs }x,\, B\text{ outputs }y)$.
    By the same argument, we can show ${\rm rank}(M)=\Omega(2^n)$, which implies $\CC_m(\QFT_{2n})=\Omega(n)$ for any $m$.
\end{proof}

Finally, we conclude \Cref{thm:qft}.
\begin{proof}[Proof of \Cref{thm:qft}]
    Combine \Cref{lem:qftup} and \Cref{lem:qftlow}.
\end{proof}

\subsection{Approximate Communication Complexity of QFT}

We have proved that \emph{exactly} implementing $\QFT_n$ requires strictly
linear communications. However, if we allow \emph{approximate} implementation,
the communication cost can be exponentially reduced, as characterized by \Cref{lem:aqft-log}, which we restate below.

\AQFTLogLemma*

\begin{proof}
By using approximate QFT, we explicitly construct a two-processor implementation using one ancilla qubit per side that
(i) uses $O(\log (n/\epsilon))$ nonlocal two-qubit gates and (ii) approximates $\QFT_n$ with error $\epsilon$.

Consider the standard implementation of $\QFT_n$ as a sequence of $n$ single-qubit Hadamard gates
and controlled rotation gate ${\rm CR}_{2\pi/2^{d}}$ between qubits with $d\in\{1,\dots,n-1\}$.
For a parameter $b\ge 1$, define $U_b$ by keeping all Hadamards and those ${\rm CR}$ gates with $d\le b$, and dropping all ${\rm CR}$ with $d>b$. Let $A$ and $B$ hold the first and second half of the input qubits respectively. 
Similar to \Cref{fig:qft}, we group controlled-rotations by their target qubit into $n-1$ sets $S_1,\ldots,S_{n-1}$, where $S_i$ contains all the controlled rotation gates with target qubit $i$. 
Observe that each group $S_i$ contains only $b$ CR gates, whose control qubits range consecutively from $i+1$ to $i+b$. Thus at most $b$ group $S_i$'s are nonlocal (i.e., contain nonlocal gates). Each such nonlocal $S_i$ can be implemented using two nonlocal SWAP operations and one ancilla qubit, following the strategy illustrated in \Cref{fig:si}. Thus at most $2b$ 
nonlocal SWAP gates are required to implement $U_b$ across $A$ and $B$.

Then we analyze the approximation error. Let the exact circuit be $\prod_t G_t$ and the truncated one $\prod_t \widetilde G_t$, where $\widetilde G_t\in\{G_t,\mathbbm{I}\}$. Write the gates in the same order as $G_L\cdots G_1$ and $\widetilde G_L\cdots \widetilde G_1$. By the telescoping identity,
\[
G_L\cdots G_1-\widetilde G_L\cdots \widetilde G_1
=\sum_{s=1}^L G_L\cdots G_{s+1}(G_s-\widetilde G_s)\widetilde G_{s-1}\cdots \widetilde G_1.
\]
Taking spectral norm and using unitary invariance ($\|UXV\|_2=\|X\|_2$ for unitaries $U,V$), we obtain
\[
\|\QFT_n-U_b\|_2
\le \sum_t \|G_t-\widetilde G_t\|_2
= \sum_{\text{dropped }{\rm CR}_\theta}\|{\rm CR}_\theta-\mathbbm{I}\|_2.
\]
For a dropped controlled rotation ${\rm CR}_{\theta}=\operatorname{diag}(1,1,1,e^{i\theta})$ with $\theta=2\pi/2^d$, we have
\[
\|{\rm CR}_{\theta}-\mathbbm{I}\|_2=|e^{i\theta}-1|\le \theta = \frac{2\pi}{2^d}.
\]
There are exactly $n-d$ controlled rotations with angle $2\pi/2^d$, so
\[
\|\QFT_n-U_b\|_2
\le\ \sum_{d>b}(n-d)\,\frac{2\pi}{2^{d}}
\le\ n\sum_{d>b}\frac{2\pi}{2^d}
\le 2\pi n\,2^{-b}.
\]
To achieve approximation error $\epsilon$, we need $2\pi n 2^{-b} \leq
\epsilon$, which implies $b=O(\log (n/\epsilon))$. Thus $\CC_1(\QFT_n;\epsilon)=O(b)=O(\log (n/\epsilon))$. \qedhere

\end{proof}

The above construction can be easily generalized to $k$ processors:
\begin{corollary} \label{coro:aqft}
Given any connected undirected graph $G=([k],E)$, we have $\CC_{1}^{G}(\QFT_n;\epsilon)=O(k\log(n/\epsilon))$.

\end{corollary}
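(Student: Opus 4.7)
The plan is to combine the approximate-QFT truncation of \Cref{lem:aqft-log} with the DFS routing strategy of \Cref{coro:qft}. Reusing the truncated circuit $U_b$ that drops every $\mathrm{CR}_{2\pi/2^d}$ with $d>b$, we still have $\|\QFT_n - U_b\|_2 \leq 2\pi n\, 2^{-b}$, so $b=O(\log(n/\epsilon))$ already ensures $\epsilon$-approximation; this error bound is topology-independent and carries over unchanged.

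For the distributed layout, I would mirror the setup of \Cref{coro:qft}: fix a spanning tree $H$ of $G$, relabel processors in DFS preorder as $P_1,\ldots,P_k$ so that $\sum_{j=1}^{k-1} d_j \leq 2(k-1)$, where $d_j$ is the $H$-distance between $P_j$ and $P_{j+1}$, and assign the consecutive data qubits $q_{(\ell-1)(n/k)+1},\ldots,q_{\ell(n/k)}$ to $P_\ell$. Grouping the CR gates of $U_b$ by target qubit into $S_1,\ldots,S_{n-1}$, each $S_i$ now contains at most $b$ gates whose controls occupy the consecutive qubits $i+1,\ldots,i+b$.

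Each $S_i$ is implemented by the same DFS routing as in \Cref{coro:qft}: starting at the processor holding target $i$, SWAP the target along the DFS ordering through exactly the processors that host its controls, executing the local CRs at each stop using the local ancilla, and finally route it back. The key accounting step is that a DFS edge $(P_j,P_{j+1})$ contributes to $S_i$ only if the target sits on some $P_\ell$ with $\ell\leq j$ (so $i\leq jn/k$) \emph{and} some control reaches $P_{j+1}$ or later (so $i+b\geq jn/k+1$); together these conditions force $jn/k - b + 1 \leq i \leq jn/k$, so at most $b$ groups traverse any given edge. Since each traversal costs $2d_j$ nonlocal SWAPs, the total communication is
\[
\sum_{j=1}^{k-1} 2 d_j \cdot b \;\leq\; 4(k-1)\, b \;=\; O(k \log(n/\epsilon)).
\]

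The main point requiring care—rather than a real conceptual obstacle—is the regime $b > n/k$, where a single $S_i$ may touch many DFS-consecutive processors along the routing path. One must verify that the DFS routing still applies verbatim (it does, since we simply traverse further along the preorder) and that a single ancilla per processor still suffices (it does, since the routed target parks on only one processor at a time, exactly as in the exact case). With these details handled, the counting above yields the claimed $O(k\log(n/\epsilon))$ upper bound.
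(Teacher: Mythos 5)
Your proposal is correct and follows essentially the same approach as the paper: spanning tree, DFS preorder labeling, per-target gate groups $S_i$ of size at most $b$, DFS routing of the target via per-processor ancillas, and the charging argument that at most $b$ groups traverse each DFS-adjacent edge $(P_j,P_{j+1})$. Your explicit inequality $jn/k - b + 1 \leq i \leq jn/k$ makes the edge-charging slightly more precise than the paper's phrasing, but the underlying argument and the resulting $O(bk)$ bound with $b = O(\log(n/\epsilon))$ are identical.
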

\begin{proof}
Let $H$ be a spanning tree of $G$. Relabel the processors so that $1,2,\ldots,k$ is a depth–first–search (DFS) preorder of $H$.  
For $j\in[k-1]$, let $d_j$ be the graph distance in $H$ between processors $j$ and $j{+}1$.  
A standard property of DFS on trees gives
\(
\sum_{j=1}^{k-1} d_j \le 2k-2 = O(k).
\)

We show that the approximate QFT with parameter $b$ (as defined in the proof of \Cref{lem:aqft-log}) can be implemented using $O(bk)$ nonlocal two-qubit gates over $H$.  
Assign qubits $q_{(\ell-1)(n/k)+1},\ldots,q_{\ell(n/k)}$ to processor $P_\ell$ for $\ell\in[k]$. Note that the gate group $S_i$ contains $b$ CR gates, whose target qubit is $q_i$ and control qubit ranges from $q_{i+1}$ to $q_{i+b}$. Thus each $S_i$ spans at most $r:=\lceil bk/n\rceil+1$ consecutive processors in the DFS order. Then use the construction in \Cref{coro:qft} to implement each $S_i$ across the corresponding $r$ consecutive processors.

We analyze the communication cost of our construction: For each adjacent pair $(P_\ell, P_{\ell+1})$, at most $b$ group $S_i$'s cross the pair, and each such $S_i$ contribute $2d_k$ nonlocal SWAPs. Thus the total number of nonlocal SWAPs are $2b\sum_{\ell= 1}^{k-1}d_k=O(bk)$.


Finally, setting $b = \log(n/\epsilon)$ yields the desired $O(k\log(n/\epsilon))$ bound.
\end{proof}

\subsection{Extending to QFT with Bit Reversal}
\label{sec:reversal}

Throughout this section, we have focused on the QFT unitary without the final
bit-reversal step. However, all upper and lower bounds also extend to the
conventional QFT unitary
\[
\QFT_n^{\uparrow}\ket{x}:=
\frac{1}{2^{n/2}}\sum_{y\in\{0,1\}^n} e^{ \frac{2\pi i x y}{2^n}}\ket{y}
=\frac{1}{2^{n/2}}\bigotimes_{j=1}^{n}\left(\ket{0}+e^{2\pi i\,0.x_{n-j+1}x_{n-j+2}\cdots x_n}\ket{1}\right),
\]
whose standard implementation is illustrated in \Cref{fig:qftrev}.

\begin{figure}[htbp]
    \centering
    \begin{tikzpicture}
    \yquantdefinebox{dots3}[inner sep=0pt]{$\dots$}
    \yquantdefinebox{vdotsqft3}[inner sep=0pt]{$\vdots$}    
    \yquantdefinebox{ghostqft3}[inner sep=0pt]{$\quad\quad$}
    \begin{yquant}
        qubit {$q_1$} x1;
        qubit {$q_2$} x2;
        nobit ellipsis;
        vdotsqft3 ellipsis;
        qubit {$q_{n-1}$} xn1;
        qubit {$q_{n}$} xn;

        [this subcircuit box style={dashed, "$\QFT_n$"}]
        subcircuit {
            qubit {} x1;
            qubit {} x2;
            nobit ellipsis;
            qubit {} xn1;
            qubit {} xn;

            h x1;
            
            box {$R_2$} x1 | x2;
            ghostqft3 ellipsis;
            dots3 ellipsis;
            box {$R_{n-1}$} x1 | xn1;
            box {$R_{n}$} x1 | xn;

        h x2;
        [this subcircuit box style={dashed, "$S_2$"}]

        box {$R_{n-2}$} x2 | xn1;
        box {$R_{n-1}$} x2 | xn;

        dots3 ellipsis;
        h xn1;

        box {$R_{2}$} xn1 | xn;

        h xn;

        } (x1, x2, ellipsis, xn1, xn);

        [this subcircuit box style={dashed, "Bit reversal"}]
        subcircuit {
            qubit {} x1;
            qubit {} x2;
            nobit ellipsis;
            qubit {} xn1;
            qubit {} xn;

            swap (x1, xn);
            swap (x2, xn1);
            dots3 ellipsis;

        } (x1, x2, ellipsis, xn1, xn);
    \end{yquant}
\end{tikzpicture}
    \caption{Circuit illustration of $\QFT_n^{\uparrow}$,
    obtained by appending a bit-reversal stage to $\QFT_n$.
    The bit reversal consists of $\lfloor n/2\rfloor$ SWAP gates,
    applied between $(q_1, q_n), (q_2, q_{n-1}), \ldots, (q_{\lfloor n/2\rfloor}, q_{\lceil n/2\rceil})$.
    } \label{fig:qftrev}
\end{figure}

All upper and lower bounds in this section still hold for
$\QFT_n^{\uparrow}$, after minor adjustments to the input partition.
We briefly explain why.
\begin{itemize}
    \item Upper bounds.
    \begin{enumerate}
        \item $\CC_1(\QFT_n^{\uparrow})=O(n)$ and $\CC_1^G(\QFT_n^{\uparrow})=O(kn)$: The bit reversal consists of $O(n)$ SWAP gates, so it contributes at most $O(n)$ communication. Hence the bounds remain unchanged.
        \item $\CC_1(\QFT_n^{\uparrow};\epsilon)=O(\log(n/\epsilon))$: 
            Choose the input partition as 
            \[
            \pi(i):=\begin{cases}
                1 &\text{if } i< n/4 \text{ or } i>3n/4 \\
                2 &\text{if } n/4 \leq i \leq 3n/4
            \end{cases}.
            \]
            Then all SWAPs in the bit-reversal stage are local and require no
            communication. The partition $\pi$ creates two cuts in
            $\{1,\ldots,n\}$, at $n/4$ and $3n/4$. Each cut contributes
            $O(\log(n/\epsilon))$ communication by the same argument as in
            \Cref{lem:aqft-log}, so the total cost is $O(\log(n/\epsilon))$.
        \item $\CC_1^G(\QFT_n^{\uparrow};\epsilon)=O(k\log(n/\epsilon))$: 
        Rearrange the input qubits as follows (which has no cost since we can choose input partition)
        \[
        q_1,\ldots,q_n \gets
        q_1,\ldots,q_{n/4}, q_{3n/4+1},\ldots,q_n, q_{n/4+1},\ldots,q_{3n/4}.
        \]
        Then apply the same argument as in \Cref{coro:aqft}.
    \end{enumerate}
    \item Lower bound $\CC_\infty(\QFT_{2n}^{\uparrow})=\Omega(n)$: Observe that
    the bit-reversal step swaps Alice's and Bob's output registers,
    which merely transposes the output probability matrix $M$ and therefore preserves its rank.
\end{itemize}

\section{Clifford Circuits}\label{sec:Clifford}

In this section, we give a tight characterization for the distributed communication
complexity of Clifford circuits, which are circuits generated only by
$\{H,S,{\rm CNOT}\}$. It is well-known that implementing $n$-qubit Clifford
circuits requires $\tilde{\Theta}(n^2)$ gates \cite{jiang2020optimal}, so
directly partition the circuit into two processors yields $\tilde{O}(n^2)$
nonlocal two-qubit gates. However, we show that $O(n)$ nonlocal two-qubit gates
are sufficient to implement any Clifford circuits, and the bound is tight up to
a constant factor, as stated in the following theorem.

\begin{theorem} \label{thm:cnot}
We have that
    \begin{enumerate} 
        \item[(i)] for any $n$-qubit Clifford circuit $T$, $\CC_1(T)=O(n)$;
        \item[(ii)] there exists $n$-qubit Clifford circuit $T$ such that $\CC_{\infty}(T)=\Omega(n)$.
    \end{enumerate}
\end{theorem}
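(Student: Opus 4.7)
The plan is to treat the two halves of the theorem separately.

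For part (i), I would reduce to CNOT layers via the Aaronson--Gottesman normal form of \Cref{lem:clifford}: any Clifford circuit factorises as $H{-}C{-}S{-}C{-}S{-}C{-}H{-}S{-}C{-}S{-}C$, in which the single-qubit $H$ and $S$ layers are free and the six CNOT layers can each reuse the same pair of ancillas. It therefore suffices to show that any $n$-qubit CNOT circuit, described by an invertible matrix $M \in \mathrm{GL}_n(\mathbb{F}_2)$ and partitioned in $(n/2)\times(n/2)$ blocks as $M = \bigl(\begin{smallmatrix} A_{11} & A_{12} \\ A_{21} & A_{22}\end{smallmatrix}\bigr)$, admits an $O(n)$-nonlocal-CNOT implementation with $m=1$ ancilla per side. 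I would first permute input columns (at most $n/2$ nonlocal SWAPs, hence $O(n)$ nonlocal CNOTs) to make $A_{11}$ invertible, then invoke the block LDU factorisation
\[
M = \begin{pmatrix} I & 0 \\ A_{21}A_{11}^{-1} & I \end{pmatrix}
    \begin{pmatrix} A_{11} & 0 \\ 0 & S \end{pmatrix}
    \begin{pmatrix} I & A_{11}^{-1}A_{12} \\ 0 & I \end{pmatrix},
\]
where $S$ is the Schur complement. The block-diagonal middle factor is local, and each outer factor implements a map of the form $(a,b)\mapsto(a,\,b+Xa)$ (or its transpose) for some $X \in \mathbb{F}_2^{(n/2)\times(n/2)}$ of rank $r \le n/2$.

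For each such factor I would use a streaming sub-protocol that ``pays'' two nonlocal CNOTs per unit of rank. Writing $X = \sum_{k=1}^{r} c_k d_k^{\top}$ as a rank decomposition, iterate for $k=1,\ldots,r$: (1) Alice computes $d_k^{\top}a$ into her ancilla $e_A$ using local CNOTs; (2) one nonlocal CNOT copies $e_A$ into Bob's ancilla $e_B$; (3) Bob XORs $e_B$ into each $b_j$ with $(c_k)_j = 1$ via local CNOTs; (4) a second nonlocal CNOT resets $e_B$; (5) Alice locally uncomputes $e_A$. Each iteration leaves $e_A = e_B = 0$, so the ancillas are available for the next $k$ and the next CNOT layer. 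Exactly $2r \le n$ nonlocal CNOTs per off-diagonal factor give an $O(n)$ bound per CNOT layer and $O(n)$ overall.

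For part (ii), I would exhibit $T_M$ with $M \in \mathrm{GL}_n(\mathbb{F}_2)$ chosen so that, for \emph{every} balanced partition $\pi$, the off-diagonal block $A_{21}(\pi)$ has rank $\Omega(n)$. Existence follows from a probabilistic union bound: a uniformly random $(n/2)\times(n/2)$ matrix over $\mathbb{F}_2$ has rank below $n/4$ with probability $2^{-\Omega(n^2)}$, which dominates the $\binom{n}{n/2} \le 2^n$ balanced partitions. To convert the rank into a communication bound, I would feed $T_M$ the state $\ket{\psi_0} = 2^{-n/4}\sum_a \ket{a}_A\ket{a}_{R_A}\ket{0}_B$, where $R_A$ is a reference on Alice's side that maximally entangles her data register. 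A short Schmidt-decomposition calculation on the post-circuit state $2^{-n/4}\sum_a \ket{A_{11}a}_A\ket{a}_{R_A}\ket{A_{21}a}_B$ yields $S(AR_A)=S(B)=\mathrm{rank}(A_{21})$, so the mutual information across the partition (with all ancillas absorbed into the $A$- or $B$-side) grows from $0$ to $2\,\mathrm{rank}(A_{21})=\Omega(n)$. Applying \Cref{lem:info}, each nonlocal two-qubit gate can raise this mutual information by at most $4$, forcing $\Omega(n)$ nonlocal gates regardless of the ancilla budget $m$.

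The main obstacle will be the upper bound: with only $m=1$ the naive rank decomposition (which would buffer all $r$ transferred bits on Bob) has to be serialised into the streaming/uncomputation loop while keeping the total nonlocal cost at $O(r)$, and the column-pivot step that precedes LDU must be shown to cost only $O(n)$ nonlocal SWAPs under an arbitrary balanced partition.
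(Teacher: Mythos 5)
Your proposal is correct and mirrors the paper's structure closely, but there are enough implementation differences to be worth noting.

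For part (i), both you and the paper start from the Aaronson--Gottesman normal form (\Cref{lem:clifford}) and reduce to CNOT circuits. The paper then takes a PLU decomposition $M=PLU$ and treats each triangular factor via a block split $\bigl(\begin{smallmatrix}M_{AA}&0\\0&I\end{smallmatrix}\bigr)\bigl(\begin{smallmatrix}I&0\\M_{BA}&I\end{smallmatrix}\bigr)\bigl(\begin{smallmatrix}I&0\\0&M_{BB}\end{smallmatrix}\bigr)$; you instead column-pivot to make $A_{11}$ invertible and then do one block LDU directly. These are essentially the same idea. The more substantive difference is in how the off-diagonal block $(a,b)\mapsto(a,\,b\oplus Xa)$ is implemented: the paper iterates over Bob's $n/2$ target rows, computing each parity onto Alice's ancilla and firing \emph{one} nonlocal CNOT from $e_A$ directly into $q_{i+n/2}$ (so no ancilla is needed on Bob's side), giving $n/2$ nonlocal gates per off-diagonal block; you instead iterate over a rank decomposition $X=\sum_k c_k d_k^\top$ and pay two nonlocal CNOTs per rank-1 term to shuttle $d_k^\top a$ into and out of Bob's ancilla $e_B$, giving $2\operatorname{rank}(X)\le n$. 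Both are $O(n)$; your scheme is cheaper when $\operatorname{rank}(X)$ is small, while the paper's is a constant factor tighter in the worst case and uses only one-sided ancillas. Also, the AG normal form has five $\mathrm{C}$ layers, not six; this is cosmetic.

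For part (ii), the probabilistic existence of $M$ with all $(n/2)\times(n/2)$ submatrices having rank $\Omega(n)$ is exactly the paper's \Cref{lem:matrix}. The conversion from rank to communication differs in flavor: the paper (\Cref{lem:cnotlow}) has Alice sample a classical random string $X_A$, applies the circuit, and lower-bounds the classical mutual information $I(X_A:Y_B)=H(Y_B)\ge\operatorname{rank}(M_{BA})$, then invokes Fannes--Audenaert so the bound also survives approximation error $\epsilon<1/4$. You instead purify Alice's input with a reference register $R_A$ and compute the Schmidt rank of the final pure state across the $(A R_A):B$ cut, obtaining $I(AR_A:B)=2\operatorname{rank}(A_{21})$. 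Your argument is cleaner for the exact case and suffices for \Cref{thm:cnot}(ii), but it does not by itself recover the paper's robustness to approximation; if you wanted that you would need a continuity argument analogous to the paper's Fannes--Audenaert step. Both finish by invoking \Cref{lem:info}.
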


The \(O(n)\) construction extends naturally to \(k\) processors, achieving 
\(O(nkD)\) communication with one ancilla qubit per processor, 
where \(D\) denotes the diameter of the topology graph
(\Cref{coro:cnot}). 
As an immediate corollary, Clifford circuits can be implemented with 
\(O(nk)\) communication when the \(k\) processors are pairwise connected. 
In contrast to the QFT, the \(\Omega(n)\) lower bound for Clifford circuits 
is robust under approximation, as shown in \Cref{lem:cnotlow}.

\subsection{Upper Bound for Clifford Circuits}

In this subsection, we design a distributed implementation of any $n$-qubit
Clifford circuit using only $O(n)$ nonlocal two-qubit gates. We first consider
implementing a special kind of Clifford circuits, called \emph{DAG CNOT circuits}.

\begin{definition}[DAG CNOT circuit]
    Given a directed acyclic graph (DAG) $\vec{G}=([n],E)$, it specifies a CNOT
    circuit $T_{\vec{G}}$, where the $n$ vertices correspond to $n$ bits. Each directed
    edge $(i,j)\in E$ represents a CNOT with control $i$ and target $j$, and all
    edges are executed according to a topological ordering of $\vec{G}$.

Although $\vec{G}$ may admit multiple topological orders, CNOT gates that
act on the same control or on the same target commute, so the action of $T_{\vec{G}}$ is
well-defined. 
\end{definition}

Recall that any $n$-qubit CNOT circuit $T$ has a matrix representation $M\in
\mathbb{F}_2^{n\times n}$ such that $T\ket{x}=\ket{Mx}$ for any $x\in\{0,1\}^n$
(see \Cref{sec:circuit}). Given a DAG $\vec{G}=([n],E)$ with topological order
$1,\ldots,n$, let $T_{\vec{G}}$ be its associated DAG CNOT circuit. It is easy to see
that the matrix representation of $T_{\vec{G}}$ is lower triangular. Conversely, given
any lower triangular matrix $M\in\mathbb{F}_2^{n\times n}$, one can construct a
DAG CNOT circuit whose matrix representation is $M$. 

The following lemma constructs any DAG CNOT circuit with linear communication.
\begin{lemma}\label{lem:dagcomm}
Given a DAG $\vec{G}=([n],E)$, let $T_{\vec{G}}$ be its associated DAG CNOT circuit. Then $\CC_{1}(T_{\vec{G}})\le n/2$.
\end{lemma}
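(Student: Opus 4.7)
The plan is to exploit the matrix representation of $T_{\vec{G}}$ discussed in Section~\ref{sec:circuit}. I would fix the qubit partition so that Alice holds the first $n/2$ vertices in a topological order of $\vec G$ and Bob holds the remaining $n/2$. Under this ordering, the matrix $M\in\mathbb{F}_2^{n\times n}$ representing $T_{\vec G}$ is lower triangular with unit diagonal, so it has the block form
\[
M=\begin{pmatrix} M_{AA} & 0 \\ M_{BA} & M_{BB}\end{pmatrix},
\]
with $M_{AA}$ and $M_{BB}$ lower triangular (each realizable by a local DAG CNOT subcircuit on the respective processor) and $M_{BA}\in\mathbb{F}_2^{(n/2)\times(n/2)}$ arbitrary. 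The crucial structural fact is that the top-right block vanishes: under this partition, cross-processor information flows in one direction only, from Alice's inputs into Bob's outputs.

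Given this block decomposition, I would implement $T_{\vec G}$ in three phases. In Phase~1, Bob locally applies the DAG CNOT subcircuit corresponding to $M_{BB}$, transforming $x_B$ into $M_{BB}x_B$. In Phase~2, for each index $j$ of Bob's register, Alice uses her single ancilla to accumulate the parity $\bigoplus_{i} M_{BA}[j,i]\,x_i$ with a sequence of local CNOTs from her data qubits into the ancilla, performs one nonlocal CNOT from the ancilla to Bob's $j$-th qubit, and then uncomputes the parity with the reverse sequence of local CNOTs, returning the ancilla to $\ket{0}$. In Phase~3, Alice locally applies the DAG CNOT subcircuit for $M_{AA}$. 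Counting nonlocal two-qubit gates, Phase~2 uses exactly one nonlocal CNOT per Bob qubit, yielding at most $n/2$ total, while Phases~1 and 3 are purely local.

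Correctness is verified by tracking each register: after Phases~1 and~2, Bob's register holds $M_{BB}x_B+M_{BA}x_A$, since each nonlocal CNOT XORs the correct parity of Alice's original data qubits into the corresponding Bob qubit; after Phase~3, Alice's register holds $M_{AA}x_A$. Combining, the joint state is $\ket{Mx}$, matching $T_{\vec G}\ket{x}$. The main subtlety, and the only step that requires care, is the ordering: Phase~2 must use Alice's pre-$M_{AA}$ register (so that the parities are computed from $x_A$ rather than $M_{AA}x_A$), and Phase~1 must precede Phase~2 so that Bob's final state equals $M_{BB}x_B+M_{BA}x_A$ rather than the distorted expression $M_{BB}(x_B+M_{BA}x_A)$. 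Once this ordering is respected, the bound $\CC_{1}(T_{\vec G})\le n/2$ follows immediately.
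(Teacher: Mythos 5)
Your proposal is correct and follows the same approach as the paper's proof: the same topological-order partition, the same block factorization of the lower-triangular $M$ into two local triangular factors and a middle nonlocal block $M_{BA}$, and the same ancilla trick (compute parity locally, apply one nonlocal CNOT, uncompute) giving exactly one nonlocal gate per Bob qubit. Your explicit discussion of the phase ordering is a nice touch but amounts to reading off the factorization $M = D_A L D_B$ applied right-to-left, which the paper's decomposition already implies.
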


\begin{proof}
Let the topological order of $G$ be $q_1,\ldots,q_n$, and let $M$ denote the matrix representation of $T_{\vec{G}}$ under this ordering. Assign $q_1,\ldots,q_{n/2}$ to processor $A$ and $q_{n/2+1},\ldots,q_n$ to processor $B$. Since $M$ is invertible lower triangular, it can be block-decomposed as
\[
M = \begin{pmatrix} M_{AA} & 0\\ M_{BA} & M_{BB}\end{pmatrix}=
\begin{pmatrix} M_{AA} & 0\\ 0 & \mathbbm{I}\end{pmatrix}
\begin{pmatrix} \mathbbm{I} & 0\\ M_{BA} & \mathbbm{I}\end{pmatrix}
\begin{pmatrix} \mathbbm{I} & 0\\ 0 & M_{BB}\end{pmatrix},
\]
where $\mathbbm{I}$ denotes identity matrix of size $\frac{n}{2}\times\frac{n}{2}$.

Note that $\begin{pmatrix} M_{AA} & 0\\ 0 & \mathbbm{I}\end{pmatrix}$ and $\begin{pmatrix} \mathbbm{I} & 0\\ 0 & M_{BB}\end{pmatrix}$ correspond to a local CNOT circuit acting on $A$ and $B$ respectively. Thus it suffices to implement $\begin{pmatrix} \mathbbm{I} & 0\\ M_{BA} & \mathbbm{I}\end{pmatrix}$ with $n/2$ nonlocal CNOT gates: For every $i\in[n/2]$, observe that the net action of $q_1,\ldots,q_{n/2}$ on $q_{i+n/2}$ is
\[
\ket{q_{i+n/2}}\ \longmapsto\ \left|q_{i+n/2}\oplus M_{BA}(i,1)q_1\oplus\cdots\oplus M_{BA}(i,n/2)q_{n/2} \right\rangle,
\]
where $M_{BA}(x,y)$ denote the $(x,y)$-th entry of $M_{BA}$.
It can be implemented as follows:
\begin{enumerate}
    \item[(i)] compute $r=M_{BA}(i,1)q_1\oplus\cdots\oplus M_{BA}(i,n/2)q_{n/2}$ on an ancilla $a$ of $A$ by a local CNOT circuit.
    \item[(ii)] apply a CNOT between $a$ and $q_{i+n/2}$ to add $r$ into $q_{i+n/2}$;
    \item[(iii)] run the circuit in step (i) again to uncompute $a$ back to its initial state. 
\end{enumerate}
Each $i$ requires $1$ nonlocal CNOT gate, hence $n/2$ nonlocal two-qubit gates are used in total.
\end{proof}

Next, we generalize the construction to any CNOT circuit.

\begin{lemma}\label{lem:cnotcomm}
For any $n$-qubit CNOT circuit $T$, one has $\CC_{1}(T)\le 2n$.
\end{lemma}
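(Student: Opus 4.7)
The plan is to reduce the general CNOT circuit to the DAG CNOT case handled by \Cref{lem:dagcomm}. Let $M\in \mathbb{F}_2^{n\times n}$ be the matrix representation of $T$, written in block form $M = \begin{pmatrix} M_{AA} & M_{AB} \\ M_{BA} & M_{BB} \end{pmatrix}$ with respect to a balanced partition where Alice holds qubits $1,\ldots,n/2$ and Bob holds the rest. The main idea is to find $X\in \mathbb{F}_2^{(n/2)\times(n/2)}$ such that, setting $E := \begin{pmatrix} I & X \\ 0 & I \end{pmatrix}$, the top-left block of $EM$ becomes invertible. Because $E^2 = I$ over $\mathbb{F}_2$, we then have $T_M = T_{E}\circ T_{EM}$ (i.e., apply $T_{EM}$ first and then $T_{E}$), and we implement each factor separately.

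To prove that such $X$ exists, let $V\subseteq \mathbb{F}_2^n$ denote the column span of the first $n/2$ columns of $M$, so $\dim V = n/2$ by invertibility of $M$. The top-left block of $EM$ equals $M_{AA}+XM_{BA}$, whose columns are the images of the generating columns of $V$ under the map $(u,v)\mapsto u+Xv$. Hence the top-left block is invertible iff this map is surjective, equivalently, injective, on $V$. Setting $W := V \cap (\mathbb{F}_2^{n/2}\oplus\{0\})$, this condition translates to requiring that for every nonzero $v\in \pi_B(V)$, $Xv$ avoids the coset $u_v + W$, where $(u_v,v)\in V$ is any fixed representative. Passing to the quotient $\mathbb{F}_2^{n/2}/W$, the problem reduces to choosing the induced map $\bar X : \pi_B(V) \to \mathbb{F}_2^{n/2}/W$ so that $\bar X - \bar f$ is invertible, where $\bar f(v) := u_v + W$ is well-defined. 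Since $\dim \pi_B(V) = \dim \mathbb{F}_2^{n/2}/W = n/2 - \dim W$, such $X$ can always be found.

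Given $X$, the invertibility of $A' := M_{AA}+XM_{BA}$ allows the block LDU decomposition
\[
EM \;=\; \begin{pmatrix} I & 0 \\ M_{BA}(A')^{-1} & I \end{pmatrix} \begin{pmatrix} A' & 0 \\ 0 & S \end{pmatrix} \begin{pmatrix} I & (A')^{-1}(EM)_{AB} \\ 0 & I \end{pmatrix},
\]
where $S$ is the Schur complement. The middle factor is block-diagonal and therefore local. Each outer factor is block-unitriangular with a single nontrivial off-diagonal block, and can be realized by the ancilla-assisted technique of \Cref{lem:dagcomm} in $n/2$ nonlocal two-qubit gates. The same technique applied directly to $E$ yields $T_E$ in another $n/2$ nonlocal gates, for a total of $3n/2\le 2n$, matching the stated bound. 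The main obstacle I anticipate is the existence argument for $X$ in the nontrivial case $W \ne \{0\}$, where the naive pointwise construction fails and one must use the quotient trick above; once this is handled, the remainder follows immediately from block linear algebra and the reuse of \Cref{lem:dagcomm}'s construction.
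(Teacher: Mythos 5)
Your proof is correct, and it takes a genuinely different route from the paper's. The paper applies a PLU decomposition $M=PLU$ over $\mathbb{F}_2$: the permutation $P$ costs at most $n$ nonlocal SWAPs, and each of $L$ and $U$, under a suitable contiguous qubit ordering, corresponds to a DAG CNOT circuit implementable with $n/2$ nonlocal gates by \Cref{lem:dagcomm}, for a total of $2n$. You instead avoid the permutation factor entirely by finding a block upper-unitriangular involution $E$ (with a single off-diagonal block $X$) that makes the top-left block of $EM$ invertible, and then applying a block LDU decomposition to $EM$. Implementing $E$ and the two outer LDU factors by the ancilla-assisted technique from the proof of \Cref{lem:dagcomm} costs $n/2$ each, the block-diagonal middle factor is free, and the total is $3n/2$---a slightly tighter constant than the paper's $2n$. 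The price is a nontrivial existence lemma for $X$, which you handle correctly via the quotient argument: since $\dim \pi_B(V) = \dim\bigl(\mathbb{F}_2^{n/2}/W\bigr)$, one can always choose the induced map $\bar{X}$ restricted to $\pi_B(V)$ so that $\bar{X}+\bar{f}$ is an isomorphism, and any lift of an arbitrary extension of this map to all of $\mathbb{F}_2^{n/2}$ gives a valid $X$. By contrast, the paper leans on the entirely standard PLU factorization and simply accepts the looser constant coming from the permutation factor.
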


\begin{proof}
Let $M\in\mathbb{F}_2^{n\times n}$ be the matrix representation of $T$, and take a PLU decomposition $M=PLU$, where $P$ is a permutation matrix, $L$ is lower triangular, and $U$ is upper triangular. Assign qubits $q_1,\ldots,q_{n/2}$ to $A$ and $q_{n/2+1},\ldots,q_n$ to $B$. The permutation $P$ can be implemented with $n$ SWAPs, thus contributing $\leq n$ nonlocal SWAPs. The matrix $L$ corresponds to a DAG CNOT circuit with topological order $1,\ldots,n$, which by \Cref{lem:dagcomm} uses at most $n/2$ nonlocal two-qubit gates; similarly, $U$ corresponds to a DAG CNOT circuit with topological order $n,\ldots,1$, also requiring at most $n/2$ nonlocal two-qubit gates. Hence $T$ can be implemented using at most $2n$ nonlocal two-qubit gates.
\end{proof}

A direct corollary is that any Clifford circuit can be implemented with $O(n)$ communication.
\begin{corollary} \label{coro:cliffordup}
For any $n$-qubit Clifford circuit $T$, one has $\CC_{1}(T)\le 10n$.
\end{corollary}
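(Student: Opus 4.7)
The plan is to combine the canonical eleven-layer form of Clifford circuits given in \Cref{lem:clifford} with the CNOT-circuit bound from \Cref{lem:cnotcomm}. By \Cref{lem:clifford}, any $n$-qubit Clifford $T$ decomposes as the sequence H-C-S-C-S-C-H-S-C-S-C, where each H/S layer consists entirely of single-qubit gates and each C denotes an $n$-qubit CNOT subcircuit. Since single-qubit gates act within a single processor, the H and S layers contribute zero nonlocal two-qubit gates regardless of how the $n$ qubits are partitioned.

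Thus I only need to account for the five CNOT layers. Fix the balanced partition that assigns $q_1,\ldots,q_{n/2}$ to processor $A$ and the remaining qubits to processor $B$, and use \emph{the same} partition throughout all eleven layers; then the composition of the layers reproduces $T$ exactly. For each of the five CNOT subcircuits, invoke \Cref{lem:cnotcomm} to obtain an implementation under this partition using at most $2n$ nonlocal two-qubit gates. Summing yields $5 \cdot 2n = 10n$, giving the claimed bound.

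The one subtle point to verify is the ancilla budget: \Cref{def:commtwo} allows only $m=1$ ancilla per processor, while the proof of \Cref{lem:cnotcomm} (via \Cref{lem:dagcomm}) already uses a single ancilla on each side during its compute–copy–uncompute pattern. Because each CNOT layer cleanly uncomputes its ancilla back to $\ket{0}$ before terminating, the same ancilla qubit can be reused across all five CNOT layers without interference with the intervening H and S layers. Hence the overall distributed implementation respects $m=1$ and uses at most $10n$ nonlocal two-qubit gates, establishing $\CC_1(T)\le 10n$. No step is genuinely hard; the main thing to get right is bookkeeping across the eleven layers so that ancilla reuse and partition consistency are justified.
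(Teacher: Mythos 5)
Your proof is correct and follows the same route as the paper: invoke \Cref{lem:clifford} for the eleven-layer H/S/C canonical form, observe the single-qubit layers cost nothing, and apply \Cref{lem:cnotcomm} to each of the five CNOT layers under a common fixed partition to get $5 \cdot 2n = 10n$. Your additional verification that the ancilla is uncomputed after each CNOT layer (and hence $m=1$ suffices across all layers) is a valid bookkeeping detail that the paper leaves implicit.
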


\begin{proof}
    By \Cref{lem:clifford}, any Clifford circuit $T$ can be implemented by an 11-layer sequence
    H-C-S-C-S-C-H-S-C-S-C, where $\mathrm{H}$ denotes a layer of $H$ gates,
    $\mathrm{S}$ denotes a layer of $S$ gates, and $\mathrm{C}$ denotes a CNOT circuit. Since $H$ and $S$ layers require no nonlocal
    gates, each CNOT circuit requires $2n$ nonlocal gates by \Cref{lem:cnotcomm}, and they share the same partition. Thus $10n$ nonlocal two-qubit gates are required in total.
\end{proof}

We further generalize the construction to $k$ processors with arbitrary topology.
\begin{lemma} \label{coro:cnot}
Let $H$ be an undirected graph specifying the topology of $k$ processors, and $D$ denote the diameter of $H$.  
For any $n$-qubit Clifford circuit $T$, 
we have $\CC_{1}^{H}(T)=O(nkD)$.
\end{lemma}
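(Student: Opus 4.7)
The plan is to extend the strategy of \Cref{coro:cliffordup} to the general topology setting by routing all virtual nonlocal operations along shortest paths in $H$. First, apply \Cref{lem:clifford} to express $T$ as a constant number of layers, where the H and S layers are local (and free) and the remaining at most five layers are $n$-qubit CNOT subcircuits. Thus it suffices to prove $\CC_1^H(T') = O(nkD)$ for any $n$-qubit CNOT circuit $T'$. Take a PLU decomposition of $T'$'s matrix representation over $\mathbb{F}_2$; the permutation factor moves at most $n$ qubits, each over a path of length at most $D$, costing $O(nD) = O(nkD)$ nonlocal SWAPs, while the two remaining factors are DAG CNOT circuits whose matrices are triangular.

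For a lower triangular $n \times n$ matrix $M$, partition the qubits so that qubits $1,\ldots,n/k$ live on processor $P_1$, qubits $n/k+1,\ldots,2n/k$ on $P_2$, and so on, consistent with the triangular structure. Viewed as a $k \times k$ grid of $(n/k) \times (n/k)$ blocks, $M$ factors as $M = B \cdot L$, where $B$ is block diagonal (implementable locally at no nonlocal cost) and $L$ is block unit lower triangular. Further decompose $L = E_1 E_2 \cdots E_{k-1}$, where $E_j$ agrees with $L$ in the $j$-th block column below the diagonal and is the identity elsewhere; a short direct computation verifies that this product reconstructs $L$, since each $E_{j'}$ with $j' > j$ only modifies rows strictly below row $j'$ and hence does not disturb previously installed columns.

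Each $E_j$ can now be implemented by a topology-aware generalization of the technique in \Cref{lem:dagcomm}: for each processor $P_i$ with $i > j$ and each of the $n/k$ target qubits on $P_i$, (i) compute the required $\mathbb{F}_2$-linear combination of $P_j$'s qubits into $P_j$'s ancilla via local CNOTs, (ii) route that ancilla from $P_j$ to $P_i$ using $O(D)$ nonlocal SWAPs along a shortest path in $H$, (iii) XOR it into the target qubit on $P_i$ with a single local CNOT, and (iv) reverse the routing and uncompute the ancilla locally. Each target qubit contributes $O(D)$ nonlocal two-qubit gates, giving $O((k-j)(n/k)D)$ for $E_j$ and
\[
\sum_{j=1}^{k-1} (k-j)\,\frac{n}{k}\,D \;=\; O(nkD)
\]
in total. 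Combined with the $O(1)$ Clifford layers and the $O(nD)$ permutation cost, this yields the claimed bound.

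The main obstacle is checking that the single-ancilla-per-processor budget suffices during routing, since intermediate processors along the path from $P_j$ to $P_i$ must temporarily hold the routed state. Because we process one target qubit at a time and fully uncompute (with a reverse swap chain) before moving on, each processor's ancilla is empty at the start and end of every routing phase, so no conflict arises. A minor subtlety is handling the upper triangular factor from the PLU decomposition, which is done symmetrically by reversing the qubit-to-processor ordering, producing an identical $O(nkD)$ bound.
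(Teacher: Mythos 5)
Your proof is correct and takes essentially the same route as the paper: reduce Clifford to $O(1)$ CNOT layers via the canonical form, take a PLU decomposition, and implement each triangular factor block-by-block with routed CNOTs over the tree/paths of $H$. The two factorizations differ cosmetically --- you pull out the block-diagonal $B$ first and then factor $L$ column-by-column into the $E_j$'s, while the paper interleaves the diagonal blocks $D_t$ with the off-diagonal $L_{i,j}$'s --- but both produce $O(k^2)$ off-diagonal block factors, each costing $O((n/k)D)$ nonlocal gates, for a total of $O(nkD)$.

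Two small things worth noting. First, your count $\sum_{j=1}^{k-1}(k-j)\,\frac{n}{k}\,D = O(nkD)$ is the careful version; the paper's text states ``there are $O(n^2)$ matrices $L_{i,j}$,'' which is a slip (there are $O(k^2)$ such blocks, since $1\le j<i\le k$), so your derivation is actually cleaner here. Second, you explicitly verify that one ancilla per processor suffices during routing because each target qubit is fully uncomputed before the next is processed --- this is a genuine gap the paper leaves implicit, and your observation closes it. Overall this is a sound and slightly more careful version of the paper's argument.
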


\begin{proof}
    By the above analysis, $T$ can be decomposed as constant number of (i) single-qubit gate layers, (ii) permutations, and (iii) DAG CNOT circuits. 
    Item (i) does not need communication. For (ii), each permutation can be realized with at most $n$ SWAPs, and implementing each SWAP needs at most $D$ hops in $H$, contributing $O(nD)$ communication in total. 
    Thus it remains to show that any DAG CNOT circuit can be implemented with $O(nkD)$ nonlocal two-qubit gates.
    
Given any DAG $\vec{G}=([n],E)$, let $T_{\vec{G}}$ be the associated DAG CNOT circuit.
Let the topological order be $q_1,\ldots,q_n$, and let $M$ denote the matrix representation of $T_{\vec{G}}$ in this basis.
Partition the qubits contiguously across $k$ processors: assign $q_{(\ell-1)n/k+1},\ldots,q_{\ell n/k}$ to the $\ell$-th processor $P_\ell$ for $\ell\in[k]$.
With this block layout, write $M_{P_iP_j}$ for the $(i,j)$ block (rows of $P_i$, columns of $P_j$).
Since $M$ is invertible and lower triangular in this ordering, it admits the block factorization
\[
M
=
\left(
\begin{array}{cccc}
M_{P_1P_1} & 0 & \cdots & 0 \\
M_{P_2P_1} & M_{P_2P_2} & \ddots & \vdots \\
\vdots & \ddots & \ddots & 0 \\
M_{P_kP_1} & M_{P_kP_2} & \cdots & M_{P_kP_k}
\end{array}
\right)
=
\left(
    D_1
    \prod_{i=2}^{k} L_{i1}
\right)
\left(
    D_2
    \prod_{i=3}^{k} L_{i2}
\right)
\cdots
\left(
    D_{k-1} L_{k,k-1}
\right)
D_k,
\]
where, for each $t\in[k]$ and $1\le j<i\le k$,

 \[
 D_t=\begin{blockarray}{ccccccc}
P_1 & \cdots & P_{t-1} & P_t & P_{t+1} & \cdots & P_k \\
\begin{block}{(ccccccc)}
\mathbbm{I} &        &        &        &        &        &        \\
             & \ddots &        &        &        &        &        \\
             &        & \mathbbm{I} &        &        &        &        \\
             &        &        & M_{P_tP_t} &        &        &        \\
             &        &        &        & \mathbbm{I} &        &        \\
             &        &        &        &        & \ddots &        \\
             &        &        &        &        &        & \mathbbm{I} \\
\end{block}
\end{blockarray}
,\ L_{i,j}=
\begin{blockarray}{ccccccc}
P_1 & \cdots & P_j & \cdots & P_i & \cdots & P_k \\
\begin{block}{(ccccccc)}
\mathbbm{I} &        &        &        &        &        &        \\
             & \ddots &        &        &        &        &        \\
             &        & \mathbbm{I} &        &        &        &        \\
             &        &        & \ddots &        &        &        \\
             &        &  M_{P_iP_j}  &   & \mathbbm{I} &        &        \\
             &        &        &        &        & \ddots &        \\
             &        &        &        &        &        & \mathbbm{I} \\
\end{block}
\end{blockarray},
 \]
and $\mathbbm{I}$ denotes the identity matrix of size $\frac{n}{k}\times \frac{n}{k}$.

Notice that each $D_t$ can implemented by a local CNOT circuit on $P_t$ with no nonlocal gates. By the construction in \Cref{lem:dagcomm} and the fact that $P_i$ holds $n/k$ input qubits, each $L_{i,j}$ can be implemented with $O(n/k)$ CNOTs between $P_i$ and $P_j$, using 1 ancilla qubit on $P_j$. Note that implementing one CNOT between $P_i$ and $P_j$ needs at most $D$ hops in topology graph $H$, which yields $O(D)$ nonlocal CNOTs. As there are $O(n^2)$ matrices $L_{i,j}$, $O(n/k\cdot D\cdot n^2)=O(nkD)$ nonlocal CNOTs are required in total.

\end{proof}


\subsection{Lower bound for Clifford Circuits}

We show that linear communication are necessary to implement Clifford circuits.
It suffice to prove that there exists an $n$-qubit CNOT circuit $T$ that
requires at least $\Omega(n)$ communication. Even stronger, we prove that it
holds for approximating $T$ up to constant error, as shown in the
following lemma.

\begin{lemma} \label{lem:cnotlow}
Given any constant $0\leq \epsilon < 1/4$, there exists an $n$-qubit CNOT circuit $T$ such that 
\[
\CC_\infty(T;\epsilon)=\Omega(n),
\] 
for all large enough $n$.
\end{lemma}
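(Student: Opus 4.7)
The plan is to prove the lemma by a mutual-information argument: construct a CNOT circuit $T$ whose matrix $M \in \mathbb{F}_2^{n\times n}$ has every balanced off-diagonal block of nearly maximal rank, show that implementing $T$ exactly creates $\Omega(n)$ bits of mutual information across any balanced partition, and then invoke Fannes--Audenaert to show that an $\epsilon$-approximation only destroys $O(\epsilon n)$ of this quantity, leaving $\Omega(n)$ mutual information that must be produced by the nonlocal two-qubit gates (by \Cref{lem:info}).

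For the matrix, I pick $\delta > 0$ with $\delta < 1 - 4\epsilon$ (feasible since $\epsilon < 1/4$). By a standard probabilistic argument, a uniformly random binary matrix is invertible with constant probability, and for any fixed balanced bipartition $[n] = I_A \sqcup I_B$ (with $|I_A|, |I_B| = n/2 + O(1)$) the off-diagonal block $M_{I_B, I_A}$ satisfies $\mathrm{rank}(M_{I_B, I_A}) \ge (1 - \delta)\,n/2$ except with probability $2^{-\Omega(\delta n^2)}$. A union bound over the $\le 2^n$ balanced bipartitions yields an $M$ with both properties for all sufficiently large $n$; let $T$ be the CNOT circuit realizing this $M$.

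Given any circuit $W$ using $m$ ancillas per processor and $\ell$ nonlocal two-qubit gates that implements some $V$ with $\|V - T\|_2 \le \epsilon$, I consider the reference state
\[
|\Psi\rangle \;=\; 2^{-|I_A|/2}\sum_{x \in \{0,1\}^{|I_A|}} |x\rangle_{A'}\,|x\rangle_{A_{\mathrm{data}}}\,|0\rangle_{B_{\mathrm{data}}}\otimes|0\rangle_{\mathrm{anc}},
\]
where $A'$ is a purifying register kept on Alice's side. Let $\rho$ and $\rho_{\mathrm{ideal}}$ denote the reduced states on $A' B_{\mathrm{data}}$ after running $W$ and after running $T$ respectively. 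A direct computation gives $\rho_{\mathrm{ideal}} = \sum_x 2^{-|I_A|}|x\rangle\langle x|_{A'} \otimes |M_{I_B,I_A}x\rangle\langle M_{I_B,I_A}x|_{B}$, a classical--classical state with $I(A' : B_{\mathrm{data}})_{\rho_{\mathrm{ideal}}} = \mathrm{rank}(M_{I_B, I_A}) \ge (1-\delta)\,n/2$. The unitary-perturbation inequality and the contractivity of the trace norm under partial trace give $\|\rho - \rho_{\mathrm{ideal}}\|_1 \le 2\|V-T\|_2 \le 2\epsilon$, so the trace distance is at most $\epsilon$. Applying Fannes--Audenaert (\Cref{lem:fannes}) separately to $S(A')$, $S(B_{\mathrm{data}})$, and $S(A' B_{\mathrm{data}})$, with local dimensions $2^{n/2+O(1)}$, $2^{n/2+O(1)}$, and $2^{n+O(1)}$, bounds the perturbation of the mutual information by $2\epsilon\,n + O(1)$, so $I(A' : B_{\mathrm{data}})_\rho \ge (1/2 - \delta/2 - 2\epsilon)\,n - O(1) = \Omega(n)$. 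Since the initial Alice--Bob mutual information vanishes, local gates do not change it, and each nonlocal two-qubit gate contributes at most $4$ (\Cref{lem:info}), monotonicity under partial trace gives $4\ell \ge I(\mathrm{Alice}:\mathrm{Bob})_\rho \ge I(A' : B_{\mathrm{data}})_\rho = \Omega(n)$, yielding the desired lower bound on $\ell$.

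The main obstacle is the matrix construction: one must push the guaranteed block rank all the way up to nearly $n/2$---a cruder construction yielding only $\mathrm{rank} \ge n/4$ would restrict the argument to $\epsilon < 1/8$---and verify that the probabilistic tail bound dominates the union bound over exponentially many bipartitions. A secondary care is tracking the factors of $2$ that relate $\|V - T\|_2$, the trace distance of the induced states, and the three entropy contributions in Fannes--Audenaert, since the $\epsilon < 1/4$ threshold is tight in this chain.
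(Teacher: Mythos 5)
Your proposal is correct and follows essentially the same approach as the paper's own proof: the probabilistic construction of a matrix $M$ whose off‐diagonal blocks all have near‐maximal rank, the mutual‐information lower bound from the exact implementation, the Fannes--Audenaert perturbation bound for an $\epsilon$‐approximation, and the $4\ell \ge I$ bound from \Cref{lem:info}. The one improvement worth noting is that you make explicit what the paper leaves slightly implicit, namely that the register holding (a purifying copy of) Alice's random input $X_A$ must be counted on Alice's side, and that the Fannes--Audenaert bound should be applied only to fixed‐dimension registers $A'$ and $B_{\mathrm{data}}$ after tracing out the ancillas; this removes any dependence of the entropy‐perturbation term on the ancilla count $m$, which is important since the claim is $\CC_\infty$.

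One small inaccuracy that does not affect the conclusion: you assert that
\[
\rho_{\mathrm{ideal}} \;=\; \sum_x 2^{-|I_A|}\,|x\rangle\langle x|_{A'} \otimes |M_{I_B,I_A}x\rangle\langle M_{I_B,I_A}x|_{B}
\]
is a classical--classical state with mutual information exactly $\rank(M_{I_B,I_A})$. This identity only holds if the block $M_{I_A,I_A}$ (the part of $M$ mapping Alice's inputs to Alice's outputs) is injective; otherwise tracing out $A_{\mathrm{data}}$ leaves coherences between $|x\rangle_{A'}$ and $|x'\rangle_{A'}$ with $M_{I_A,I_A}x = M_{I_A,I_A}x'$. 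The construction of \Cref{lem:matrix} guarantees $\rank(M_{I_A,I_A}) \ge (1-\delta)n/2$ but not full rank. Nevertheless, a direct computation gives
\[
I(A' : B_{\mathrm{data}})_{\rho_{\mathrm{ideal}}} \;=\; \tfrac{n}{2} + \rank(M_{I_B,I_A}) - \rank(M_{I_A,I_A}) \;\ge\; \rank(M_{I_B,I_A}),
\]
so the lower bound you need remains valid; alternatively, replacing the coherent purifying register $A'$ by a classical copy of $X_A$ (as the paper implicitly does) yields a genuine CQ state and the clean equality. Either way your threshold computation with $\delta < 1-4\epsilon$ is correct and matches the paper's choice $\delta = (1-4\epsilon)/2$.
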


\begin{proof}

By Lemma~\ref{lem:matrix}, for all large enough $n$ there exists an invertible
binary matrix $M\in\mathbb F_2^{n\times n}$ with the property that every
$\frac{n}{2}\times \frac{n}{2}$ submatrix has rank at least $(1-\delta)n/2$
($\delta$ is specified later). Let $T$ be the CNOT circuit whose matrix
representation is $M$.

Fix an arbitrary balanced partition of the $n$ input qubits into two
processors $A$ and $B$, each holding $n/2+O(1)$ qubits and $m$ ancillas. Write $x=(x_A,x_B)$ 
and block-decompose $M$ according to the output on $(A,B)$ and
the input on $(A,B)$:
\[
M = \begin{pmatrix} M_{AA} & M_{AB}\\[2pt] M_{BA} & M_{BB}\end{pmatrix}.
\]
Consider the following experiment. Processor $A$ samples uniformally random 
$X_A\in\{0,1\}^{|A|}$ and prepares $\ket{X_A}$ on its input qubits,
while $B$ prepares $\ket{0^{|B|}}$, and all ancillas are set to $\ket{0}$. First
suppose $T$ is applied exactly. As $T$ maps $\ket{x}$ to $\ket{Mx}$, the final state is the classical
string $\ket{Y_A,Y_B}$, where
$Y_B = M_{BA} X_A$.  Observe that by performing Gaussian elimination on $Y_B$, $B$ is able to recover 
$\rank(M_{BA})$ uniformly random bits of $X_A$. Hence the mutual information
\[
I(A:B) = I(X_A:Y_B) = H(Y_B) \geq  \rank(M_{BA}).
\]
By the choice of $M$ and the balanced partition, $\rank(M_{BA})\ge (1-\delta)n/2-O(1)$. 
Thus, for exact implementation, $I(A:B)\geq(1-\delta)n/2-O(1)$.

Now suppose the implemented unitary $V$ $\epsilon$-approximates $T$ in spectral norm, and let $\rho'_{AB}$ be the resulting final
state. For the corresponding ideal final state $\rho_{AB}$ (obtained by $T$),
the input is pure, so for any such input $\ket{\psi}$ we have
$\|V\ket{\psi}-T\ket{\psi}\| \le \|V-T\|_{2}=\epsilon$, which
implies trace distance ${\rm TD}(\rho'_{AB},\rho_{AB})\le \epsilon$. By the
Fannes-Audenaert inequality (\Cref{lem:fannes}), for any system $Q\in\{A,B,AB\}$,
\[
|S(Q)_{\rho'}-S(Q)_{\rho}|
\le \epsilon\log(\dim Q-1) + H(\epsilon, 1-\epsilon)
\leq \epsilon \log \dim Q + 1
\]
where $H$ is the binary entropy. Therefore
\[
\left|I(A:B)_{\rho'} - I(A:B)_{\rho}\right|
\le |S(A)_{\rho'}-S(A)_{\rho}|+|S(B)_{\rho'}-S(B)_{\rho}|
+|S(AB)_{\rho'}-S(AB)_{\rho}|
\leq 2\epsilon n + 3.
\]
By the fact that $I(A:B)_{\rho} = (1-\delta)n/2-O(1)$, we have 
\[
I(A:B)_{\rho'} \geq 
I(A:B)_{\rho} - \left|I(A:B)_{\rho'} - I(A:B)_{\rho}\right| \geq
 \left(\frac{1-\delta}{2} - 2\epsilon\right)n-O(1).
\]
By setting $\delta=(1-4\epsilon)/2$, we have $I(A:B)_{\rho'} \geq (1/4-\epsilon)n-O(1)=\Omega(n)$ since constant $\epsilon<1/4$.

Finally, by \Cref{lem:info}, the number of nonlocal two-qubit gates used is
at least $I(A:B)_{\rho'}/4=\Omega(n)$. As the balanced partition and ancillas were arbitrary,
we have $\CC_m(T;\epsilon)=\Omega(n)$ regardless of $m$.
\end{proof}

\begin{lemma} \label{lem:matrix}
    Given any constant $\delta\in(0,1)$, for all large enough $n$, there exists an matrix $M\in \mathbb{F}_2^{n\times n}$ such that 
    (i) $M$ is invertible, and (ii) for any submatrix $N\in \mathbb{F}^{\frac{n}{2}\times \frac{n}{2}}$ of $M$, we have ${\rm rank}(N) \geq (1-\delta) n / 2$.
\end{lemma}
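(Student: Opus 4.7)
The plan is to prove the lemma by the probabilistic method: sample $M$ uniformly at random from $\mathbb{F}_2^{n\times n}$ and show that both (i) and (ii) hold simultaneously with positive probability.

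For (i), it is a classical fact that a uniformly random matrix in $\mathbb{F}_2^{n\times n}$ is invertible with probability $\prod_{i=1}^{n}(1-2^{-i})$, which converges to a positive absolute constant (approximately $0.288$) as $n\to\infty$. So the probability that (i) fails is bounded away from $1$.

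For (ii), set $k:=n/2$ and fix any $k\times k$ submatrix $N$ of $M$. Its entries are independent uniform bits, so it suffices to bound the probability that a uniformly random $k\times k$ matrix over $\mathbb{F}_2$ has rank strictly below $(1-\delta)k$. The key counting estimate I would use is: a $k\times k$ matrix has rank at most $r$ iff its kernel contains some $(k-r)$-dimensional subspace of $\mathbb{F}_2^k$. The number of such subspaces is the Gaussian binomial coefficient $\binom{k}{k-r}_2 = O(2^{r(k-r)})$, and for each fixed subspace the space of matrices annihilating it has size exactly $2^{kr}$. Hence the number of $k\times k$ matrices of rank at most $r$ is $O(2^{r(2k-r)})$, and dividing by the total $2^{k^2}$ yields failure probability $O(2^{-(k-r)^2})$. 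Choosing the largest integer $r<(1-\delta)k$ gives $k-r\geq \delta k=\delta n/2$, so this probability is at most $O(2^{-\delta^2 n^2/4})$.

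Finally, I would apply a union bound over all $\binom{n}{k}^2\leq 2^{2n}$ choices of $k\times k$ submatrix of $M$. The probability that some submatrix violates (ii) is at most $O(2^{2n-\delta^2 n^2/4})$, which tends to $0$ as $n\to\infty$ for any fixed $\delta>0$. Combined with the constant lower bound on the probability of (i), both properties hold simultaneously with positive probability for all sufficiently large $n$, proving existence. I do not anticipate any serious obstacle: the main quantitative ingredient is the Gaussian binomial bound $\binom{k}{r}_2=O(2^{r(k-r)})$, which is standard, and the only point worth monitoring is that the super-exponential decay $2^{-\delta^2 n^2/4}$ comfortably dominates the $2^{2n}$ growth of the union, which it does for any fixed $\delta$ once $n$ is large.
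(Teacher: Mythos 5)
Your proposal is correct and follows essentially the same approach as the paper: sample $M$ uniformly at random, use the standard $\prod_{i=1}^n(1-2^{-i})>1/4$ bound for invertibility, bound the probability that a fixed $\frac{n}{2}\times\frac{n}{2}$ submatrix has low rank, and union-bound over the $\le 2^{2n}$ choices of submatrix. The only cosmetic difference is that you count low-rank matrices via kernel subspaces and Gaussian binomials, while the paper directly uses the estimate $\#\{A:\rank(A)=r\}\le 2^{(2k-r)r}$; both yield the same $2^{-(k-r)^2}$ failure probability.
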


\begin{proof}
    Prove by randomly pick a $M\in \mathbb{F}^{n\times n}$. First, consider the probability of $M$ being invertible. The number of invertible $n\times n$ matrices over $\mathbb{F}$ is
    \(
    |{\rm GL}(n,2)|=\prod_{i=0}^{n-1} (2^n-2^i)
    \) 
    Thus we have
    \[
    \Pr[M \text{ invertible}]=
    \frac{|{\rm GL}(n,2)|}{2^{n^2}} = \prod_{i=1}^{n} (1-2^{-i}) = 
    \frac{1}{2}\prod_{i=2}^{n} \left(1-2^{-i}\right) \geq \frac{1}{2} \left(1-\sum_{i=2}^{n} 2^{-i}\right) > \frac{1}{4}.
    \]
    Next, a standard counting bound says
    \(\#\{A\in \mathbb F_2^{k\times k}:\ \operatorname{rank}(A)=r\}\le 2^{(2k-r)r}\). 
    Hence, for any fixed choice of submatrix $N\in \mathbb{F}^{\frac{n}{2}\times \frac{n}{2}}$, we have
    \[
    \Pr [{\rank}(N) < r] \leq \frac{1}{2^{n^2/4}}\sum_{t< r} 2^{(n-r)r} \leq (r+1) 2^{(n-r)r - n^2/4}.
    \]
    By setting $r=(1-\delta) n/2$, we have $\Pr [{\rank}(N) < r]\leq n
    2^{-\delta^2 n^2/4}$. Note that there are at most $\binom{n}{n/2}^2 \leq 2^{2n}$
    choices of $N$. By the union bound, the probability
    \[
    \Pr [\exists N: {\rank}(N) < (1-\delta)n/2] \leq 2^{2n} \cdot n 2^{-\delta^2 n^2/4} = 2^{-\Omega(\delta^2 n^2)}.
    \]
    Thus the probability that $M$ satisfies both (i) and (ii) is at least $1/4 - 2^{-\Omega(\delta^2 n^2)} > 0$ for large enough $n$, 
    which completes the proof.
\end{proof}

Finally, we conclude \Cref{thm:cnot}.
\begin{proof}[Proof of \Cref{thm:cnot}]
    Combine \Cref{coro:cliffordup} and \Cref{lem:cnotlow}.
\end{proof}

\section*{Acknowledgements}
We thank Ziheng Chen for helpful discussions. This work was supported by the Innovation Program for Quantum Science and Technology under Grant No.~2024ZD0300500.

\bibliography{ref}
\bibliographystyle{alpha}

\appendix
\section{Proof of \Cref{coro:commtopo}} \label{app:commtopo}

\begin{proof}

First consider the case with no ancillas ($m=0$). We design a synthesis
algorithm below that uses $O\left(4^{(1-1/k)n}\right)$ nonlocal two-qubit gates to
distribute $U$ on topology $G$. 

Fix a spanning tree $T$ of $G$ and repeat the following for $k-1$ rounds. Let
$\mathcal{U}_i$ be a set of $(n-(i-1)n/k)$-qubit unitaries with
$|\mathcal{U}_i|=4^{(i-1)n/k}$; initially $\mathcal{U}_1=\{U\}$. At round $i$, 
\begin{enumerate}
  \item Pick a leaf $v$ of $T$. For each $V\in\mathcal{U}_i$, apply \Cref{lem:decomponetopo} to
  decompose $V$ between $v$ and $T\setminus\{v\}$. This produces
  $|\mathcal{U}_i|\times 4^{n/k} = 4^{in/k}$ many $(n-in/k)$-qubit unitaries, forming
  $\mathcal{U}_{i+1}$; and it adds
  \begin{align} 
    \notag&|\mathcal{U}_i|\cdot\left(6\times 4^{n/k}+6\times 2^{n-(i-1)n/k}+12(k-i)\times 2^{(1-i/k)n}\right) \\
    \label{eq:topo}=&6\times 4^{in/k}+6\times 2^{n+(i-1)n/k}+12(k-i)\times 2^{n+(i-2)n/k}
  \end{align}
  nonlocal two-qubit gates. Compared to the clique topology, the third term in
  \eqref{eq:topo} is the extra overhead caused by routing on $T$.
  \item Remove $v$ from $T$.
\end{enumerate}
Summing the topology overhead (the third term in \eqref{eq:topo}) over $i=1$ to $k-1$ gives
\[
\sum_{i=1}^{k-1} 12(k-i)\times 2^{n+(i-2)n/k}
\le 48\left(2^{2n-3n/k}+k\cdot 2^{\,n-3n/k}\right)
=O\left(4^{(1-3/(2k))n}\right),
\]
which does not exceed $O\left(4^{(1-1/k)n}\right)$, so the
overall nonlocal gate count remains $O\left(4^{(1-1/k)n}\right)$.

Finally, extend to arbitrary $m\ge 0$ ancillas. As in the proof of \Cref{lem:upperk}, we
move the $n$ input qubits from the original $k$ processors onto the first
$k' = n/(n/k+m)$ processors. Moving each qubit along $G$ costs at most $D$ nonlocal SWAPs, so
the total moving cost is $O(Dn)$. Moreover, a similar analysis will show that the additional 
topology overhead of item (i) and (ii) in the proof of \Cref{lem:upperk} does not exceed
$O(4^{(1-1/k)n-m})$. This completes the proof.
\end{proof}

\begin{lemma}\label{lem:decomponetopo}
Let $k$ processors be connected by a tree $T$, and let $v$ be a leaf of $T$. Consider
implementing an $n$-qubit unitary $U$ over $T$, with each processor holding $n/k$ input qubits.
Then $U$ can be decomposed into
\begin{enumerate}
  \item $6\times 4^{n/k} + 6\times 2^{n} + 12(k-1)\times 2^{(1-1/k)n}$ two-qubit gates between 
        $v$ and $T\setminus\{v\}$; and
  \item $4^{n/k}$ unitaries on the remaining $k-1$ processors acting on $(n-n/k)$ qubits.
\end{enumerate}
\end{lemma}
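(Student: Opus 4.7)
The plan is to mimic the proof of \Cref{lem:decompone} (the clique version) and carefully track the extra routing costs forced by the tree topology $T$. Let $u$ denote the unique neighbor of $v$ in $T$, and let $T':=T\setminus\{v\}$, a tree on $k-1$ processors. I first apply \Cref{lem:decompU} with $r=n/k$, viewing the $n/k$ qubits on $v$ as the ``first $r$'' qubits. This produces (a) $6\cdot 4^{n/k}$ two-qubit gates arising as target-moving SWAPs and subsequent CNOTs, (b) $3\cdot 4^{n/k}$ UCRs on the $(1-1/k)n$ qubits of $T'$, and (c) $4^{n/k}$ arbitrary unitaries on the $(1-1/k)n$ qubits of $T'$. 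Item (c) is precisely item~2 of the lemma, so it only remains to account for the nonlocal two-qubit gates generated by (a) and (b).

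For (a), I always select the ``target processor'' used in \Cref{lem:decompU}'s proof to be $u$. Since $(v,u)$ is the unique edge incident to $v$ in $T$, every target-moving SWAP and every resulting CNOT in (a) sits on this single edge. Hence (a) contributes exactly $6\cdot 4^{n/k}$ nonlocal two-qubit gates, with no routing overhead beyond the clique case.

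For (b), I decompose each of the $3\cdot 4^{n/k}$ UCRs over $T'$ by recursively applying \Cref{lem:ucdecomp} with $a=n/k$, exactly as in the proof of \Cref{lem:ucdist}. I fix a peeling order $v'_1,\ldots,v'_{k-2}$ coming from a DFS traversal of $T'$, and let $u'_j$ denote $v'_j$'s neighbor in the tree that remains at step $j$. By the standard DFS-on-trees argument invoked in \Cref{coro:qft}, the total distance $\sum_j d(u'_{j-1},u'_j)$ between consecutive ``current neighbors'' is $O(k)$. At step $j$, I move the UCR's target (shared by all $N_j=2^{(j-1)n/k}$ UCRs active at that step, since the recursive \Cref{lem:ucdecomp} preserves the target across subdivisions) from $u'_{j-1}$ to $u'_j$ at cost $d(u'_{j-1},u'_j)$ nonlocal SWAPs, and then apply \Cref{lem:ucdecomp} to each of the $N_j$ UCRs, contributing $N_j\cdot 2^{n/k}$ CNOTs on the edge $(v'_j,u'_j)$ and producing $N_j\cdot 2^{n/k}$ sub-UCRs for step $j+1$. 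Per original UCR, the CNOT total matches \Cref{lem:ucdist}'s clique bound $\le 2^{(1-2/k)n+1}-2$, while the movement total is $O(k)$ (including restoring the target to its origin before the next UCR). Aggregated over all $3\cdot 4^{n/k}$ UCRs, the CNOT contributions sum to at most $6\cdot 2^n$ and the movement contributions sum to at most $12(k-1)\cdot 4^{n/k}$, which is in turn at most $12(k-1)\cdot 2^{(1-1/k)n}$ because $4^{n/k}\le 2^{(1-1/k)n}$ whenever $k\ge 3$ (the case $k=2$ is vacuous, as $T'$ has a single processor and no peeling is performed). Adding the contributions from (a) and (b) gives
\[
6\cdot 4^{n/k}+6\cdot 2^{n}+12(k-1)\cdot 2^{(1-1/k)n},
\]
which is item~1 of the lemma.

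The main obstacle will be the movement accounting in step (b): naively bounding each peeling step's movement by the diameter $\le k-2$ yields an $O(k^2\cdot 4^{n/k})$ total, which exceeds the claimed bound once $k$ is moderately large. Using a DFS-based peeling order is crucial for collapsing the per-UCR movement cost from $O(k^2)$ down to $O(k)$, which is exactly what is needed for the $12(k-1)\cdot 2^{(1-1/k)n}$ term to hold uniformly in $k$.
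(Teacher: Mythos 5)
Your overall skeleton (apply \Cref{lem:decompU} with $r=n/k$ on the leaf $v$, send CNOTs across the unique edge $(v,u)$, then distribute the $3\cdot 4^{n/k}$ leftover UCRs over $T'=T\setminus\{v\}$) is the same as the paper's. However, the accounting in step~(b) has a genuine gap, and the ``DFS is crucial'' motivation is based on a misreading of the cost structure.

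The problematic claim is that at peeling step $j$ you can move the target \emph{once} (at cost $d(u'_{j-1},u'_j)$) for all $N_j$ UCRs active at that step. This is not possible: in the expansion produced by recursively applying \Cref{lem:ucdecomp}, the $N_j$ sub-UCRs at level $j$ are \emph{interleaved} with the level-$(j-1)$ CNOTs in the circuit (see the alternating form in Eq.~\eqref{eq:ucr}), and those CNOTs act on a qubit of $v'_{j-1}$ \emph{and the target}, hence require the target to reside at $u'_{j-1}$, not $u'_j$. Moreover, both the level-$j$ sub-UCRs and the level-$(j-1)$ CNOTs touch the target, so they do not commute and cannot be reordered into contiguous blocks. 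Consequently the target must be shuttled between $u'_{j-1}$ and $u'_j$ (or the CNOTs must be rerouted) on the order of $N_j$ times during level $j$, so the movement overhead per original UCR is $\Theta\bigl(\sum_j N_j\,d(u'_{j-1},u'_j)\bigr)$, not $\Theta\bigl(\sum_j d(u'_{j-1},u'_j)\bigr)=O(k)$. On a path graph, for instance, this is $\Omega\bigl(N_{k-2}\bigr)=\Omega\bigl(2^{(k-3)n/k}\bigr)$ even with DFS peeling, which is exponentially larger than the $O(k)$ you assert.

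This is exactly the accounting carried out in the paper's \Cref{lem:ucdisttopo}: the target is moved once \emph{per UCR per round} at a crude cost of at most $2(k-1)$ SWAPs, giving $\sum_{i=0}^{k-2} 2^{in/k}\bigl(2^{n/k}+2(k-1)\bigr)\le 2^{n-n/k+1}+(k-1)\,2^{n-2n/k+2}$, and it is this geometric sum (dominated by the last level) that produces the $12(k-1)\cdot 2^{(1-1/k)n}$ term once multiplied by the $3\cdot 4^{n/k}$ UCRs. Note that no DFS ordering is used or needed; your worry that the naive diameter bound blows up to $O(k^2 4^{n/k})$ comes from the incorrect once-per-level accounting, and in the correct once-per-UCR accounting, the diameter bound is already absorbed because $\bigl(\sum_j N_j\bigr)\cdot 2(k-1)$ is at most the same order as the CNOT cost up to the $(k-1)$ factor that appears explicitly in the lemma's statement. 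Your final formula happens to match the lemma only because you overbounded $12(k-1)\,4^{n/k}$ by $12(k-1)\,2^{(1-1/k)n}$; the correctly accounted movement cost is already $\Theta\bigl((k-1)\,2^{(1-1/k)n}\bigr)$.
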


\begin{proof}
Let $u$ be the neighbor of $v$. Starting from the qubits on $v$, apply the quantum Shannon
decomposition to $U$ a total of $n/k$ times to obtain:
\begin{enumerate}
  \item $4^{n/k}$ unitaries acting on the $(n-n/k)$ qubits in $T\setminus\{v\}$; and
  \item for each $0\le i<n/k$, a collection of $3\times 4^{i}$ UCRs on $n-i$ wires.
\end{enumerate}
For each UCR $R$ from item (2), first use two nonlocal SWAPs (at the beginning and end) to move
the target of $R$ from $\pi(t)$ to $u$, then apply \Cref{lem:ucdecomp} with $a=n/k-i$ to decompose
$R$ into CNOTs and $(n-n/k)$-qubit UCRs. In total each $R$ yields
\begin{enumerate}
  \item[(i)] 
  $\sum_{i=0}^{n/k-1} 3\times 4^i(2^{n/k-i}+2)\le 6\times 4^{n/k}$ two-qubit gates across
  between $v$ and $T\setminus\{v\}$; and
  \item[(ii)]
  $\sum_{i=0}^{n/k-1} 3\times 4^i 2^{n/k-i}\le 3\times 4^{n/k}$ UCRs acting only on
  $T\setminus\{v\}$.
\end{enumerate}
Then apply \Cref{lem:ucdisttopo} to distribute all UCRs from (ii) over the remaining $k-1$
processors, which requires
\[
3\times 4^{n/k}\times\left(2^{(1-2/k)n+1}+(k-1)\times 2^{(1-3/k)n+2}\right)
=6\times 2^{n} + 12(k-1)\times 2^{(1-1/k)n}
\]
additional nonlocal two-qubit gates. Therefore, after the above steps, $U$ is expressed as
$4^{n/k}$ unitaries on $T\setminus\{v\}$ each acting on $(1-1/k)n$ qubits, together with
\(
6\times 4^{n/k} + 6\times 2^{n} + 12(k-1)\times 2^{(1-1/k)n}
\)
nonlocal two-qubit gates, as claimed.
\end{proof}

\begin{lemma}\label{lem:ucdisttopo}
Let $k$ processors be connected by a tree $T$. Consider implementing an $n$-qubit UCR 
gate $R$ over $T$, with each processor holding $n/k$ input qubits. Then for any
qubit assignment $\pi$, $R$ can be implemented using
\(
2^{n-n/k+1} + (k-1)\times 2^{n-2n/k+2}
\)
nonlocal two-qubit gates.
\end{lemma}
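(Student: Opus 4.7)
The plan is to imitate the clique-case proof of \Cref{lem:ucdist}, which repeatedly applies \Cref{lem:ucdecomp} with $a=n/k$ to peel off one processor's controls per round, while physically relocating the target qubit $t$ to the currently peeled processor so that all produced CNOTs remain local. On the tree, routing $t$ between consecutive peeled processors costs at most $D\le k-1$ nonlocal SWAPs, but as long as each transfer is paid once per round (not per CNOT) the overhead stays well controlled.

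Concretely, let $p^*$ be the processor that initially holds $t$ according to $\pi$. Choose an ordering $q_1,\dots,q_{k-1},q_k=p^*$ of the processors such that $q_{k-1}$ is a neighbor of $q_k$ in $T$, set $q_0:=p^*$, and let $d_i$ denote the distance in $T$ from $q_{i-1}$ to $q_i$; then $d_k=1$ and $d_i\le k-1$ for every $i$. At round $i$ (for $1\le i\le k-1$) I apply \Cref{lem:ucdecomp} to each surviving UCR with $a=n/k$ and the controls drawn from $q_i$. I wrap the round in a SWAP chain that transports $t$ from its resting location $q_{i-1}$ to $q_i$ (cost $d_i$) and restores it at the end (another $d_i$), so the $2^{n/k}$ produced CNOTs are purely local on $q_i$ and free, while each of the $2^{n/k}$ smaller UCRs is handled by a subcall that inherits the invariant ``$t$ starts and ends on $q_i$''. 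The base level $C(k)$ similarly moves $t$ onto $q_k$, applies the residual $(n/k)$-qubit UCR locally, and returns $t$ to $q_{k-1}$, contributing $2d_k=2$. This yields the recurrence
\[
C(i) \;=\; 2d_i + 2^{n/k}\,C(i+1), \qquad C(k)=2d_k,
\]
which unrolls to $C(1)=2\sum_{i=1}^{k}d_i\,2^{(i-1)n/k}$.

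Plugging in $d_k=1$, $d_i\le k-1$ for $i<k$, and the geometric-sum estimate $\sum_{i=1}^{k-1}2^{(i-1)n/k}\le 2\cdot 2^{(k-2)n/k}$ (valid for $n/k\ge 1$), I obtain
\[
C(1)\;\le\; 2\cdot 2^{(k-1)n/k} + 2(k-1)\cdot 2\cdot 2^{(k-2)n/k}
\;=\; 2^{n-n/k+1} + (k-1)\cdot 2^{n-2n/k+2},
\]
matching the stated bound. The main subtlety will be respecting the interleaved CNOT/UCR pattern produced by \Cref{lem:ucdecomp}: the transport cost shows up as the additive term $2d_i$ in the recurrence, rather than being multiplied by $2^{n/k}$ (the number of CNOTs in the round), precisely because each recursive subcall ends with $t$ back on $q_i$, so every parent-level CNOT between subcalls remains local without any intermediate re-routing.
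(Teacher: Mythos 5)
Your proof is correct and follows essentially the same recursive peel-off strategy as the paper's, built on \Cref{lem:ucdecomp} with $a=n/k$, paying $O(k)$ SWAPs per subcall to route the target and summing the resulting geometric series. The one genuine (if minor) difference is in where the target $t$ is parked: you route $t$ all the way onto the currently peeled processor $q_i$ so that all $2^{n/k}$ CNOTs of that round become local and free, whereas the paper parks $t$ on the \emph{neighbor} $u$ of a leaf $v$, so its CNOTs are one-hop nonlocal gates between $v$ and $u$; the cost you save on CNOTs reappears as the $2d_k\cdot 2^{(k-1)n/k}$ base-case SWAP term, and the two accountings coincide to give the same bound. Your explicit choice $q_k=p^*$ with $q_{k-1}$ adjacent to $q_k$, and the invariant that each subcall returns $t$ to its entry processor, is a slightly cleaner way to handle the interleaved CNOT/UCR pattern than the paper's phrasing, and both correctly charge the routing only once per round rather than per CNOT.
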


\begin{proof}
Pick an arbitrary leaf $v$ of $T$ and let $u$ be its neighbor. Use SWAPs at the beginning and
end of the subcircuit to move the target qubit $t$ of $R$ from its current processor
$\pi(t)$ to $u$, which costs at most $2(k-1)$ nonlocal SWAPs. Then apply \Cref{lem:ucdecomp} to
decompose $R$, producing
\begin{enumerate}
  \item $2^{n/k}$ CNOTs between $v$ and $u$; and
  \item $2^{n/k}$ UCRs on the $(n-n/k)$ qubits in $T\setminus\{v\}$.
\end{enumerate}
Thus the cost so far is $2^{n/k}+2(k-1)$ nonlocal two-qubit gates.

Repeat this decomposition for $k-1$ rounds. The total nonlocal gate count is
\begin{align*}
\sum_{i=0}^{k-2} 2^{in/k}\left(2^{\,n/k}+2(k-1)\right)
= \frac{(2^n-2^{n/k})(2^{n/k}+2(k-1))}{4^{n/k}-2^{n/k}}
\le 2^{n-n/k+1} + (k-1)\times 2^{n-2n/k+2}.
\end{align*}
\end{proof}

\section{An Alternative Proof for Space-Bounded Lower Bound} \label{appx:simple}

\begin{lemma}[Weaker version of \Cref{cor:experror}] 
    Given integers $n > 0, m\geq 0$, and $k\geq 2$, there exists an $n$-qubit unitary $U$ such that
    \(\CC_m^{(k)}(U)=\Omega\left(
    4^{(1-1/k)n-m}\right).\)
\end{lemma}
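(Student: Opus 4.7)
The plan is to prove the bound via a straightforward parameter-counting (dimension) argument, bypassing the covering-number estimates used in the proof of \Cref{lem:lowapprox}. The intuition is that a distributed circuit with $\ell$ nonlocal two-qubit gates depends on only $O((k+\ell)R^2)$ real parameters, whereas $U(2^n)$ is a real manifold of dimension $4^n$, so exactly realizing every unitary forces $\ell$ to be exponentially large in $n$.

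More concretely, I would proceed by contradiction and assume that every $n$-qubit unitary admits a $k$-party distributed implementation with at most $\ell$ nonlocal two-qubit gates. For each fixed balanced partition $\pi:[n]\to[k]$ and ordered sequence $P=((p_1,q_1),\ldots,(p_\ell,q_\ell))$, any such implementation can be cast in the alternating normal form of \eqref{eq:normal-form}, which contains exactly $k + 2\ell$ single-processor unitary factors ($k$ from $U_0$ and two from each $U_i$, $i>0$), each living in $U(R)$ with $R = 2^{n/k + m + O(1)}$. Since $U(R)$ is a real Lie group of dimension $R^2$, the realizable set $S(\pi,P)\subseteq U(2^n)$ is the smooth image of a source manifold of real dimension at most $(k+2\ell)R^2$.

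I would then invoke Sard's theorem: the image of a smooth map whose source has dimension strictly less than the target manifold's dimension has Haar measure zero. Applied to $U(2^n)$ (which has real dimension $4^n$), this gives that $S(\pi,P)$ is Haar-null whenever $(k+2\ell)R^2 < 4^n$. Because there are only finitely many choices of $(\pi,P)$, namely at most $k^{n+2\ell}$, the finite union $\bigcup_{\pi,P}S(\pi,P)$ is also Haar-null and therefore cannot equal $U(2^n)$, contradicting the assumption. Rearranging $(k+2\ell)R^2 \geq 4^n$ yields $\ell = \Omega(4^n/R^2) = \Omega(4^{(1-1/k)n - m})$, as claimed.

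There is no serious obstacle to carrying this out: the normal form and parameter bookkeeping are already established in the proof of \Cref{lem:lowapprox}, and Sard's theorem cleanly replaces the metric-geometric estimates \Cref{lem:suupper,lem:cnupper} used there. The essential simplification is that exact implementation is a much stronger requirement than $\epsilon$-approximation, so a coarse dimension count suffices in place of a quantitative covering-number argument—correspondingly, the proof no longer needs to track the $\log(1/\epsilon)/n$ factor appearing in \Cref{lem:lowapprox}.
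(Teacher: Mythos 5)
Your proposal is correct and follows essentially the same route as the paper's own proof in Appendix~\ref{appx:simple}: both fix the alternating normal form of Eq.~\eqref{eq:normal-form}, count the $(k+2\ell)R^2$ real parameters of the local factors, invoke Sard's theorem to conclude that the realizable set is Haar-null in the $4^n$-dimensional group when $(k+2\ell)R^2 < 4^n$, and close by noting that a finite union over choices of $(\pi,P)$ remains measure zero. The only cosmetic difference is that you work with $U(\cdot)$ (dimension $R^2$, $4^n$) while the paper works with ${\rm SU}(\cdot)$ (dimension $R^2-1$, $4^n-1$), which is immaterial to the asymptotics.
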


\begin{proof}

Assume that every $n$-qubit unitary can be implemented on $k$ processors, each
holding $n/k$ input qubits and $m$ ancilla qubits, using only $\ell$ nonlocal
two-qubit gates. Let $p_i,q_i\in[k]$ denote the two processors participating in
the $i$-th nonlocal gate.
Fix a qubit partition $\pi:[n]\to [k]$ and a fixed sequence
$(p_1,q_1),\ldots,(p_\ell,q_\ell)$. Without loss of generality, we assume all
nonlocal gates to be CNOT gates and are applied to the first qubit of the
processors. Then any realizable unitary can be expressed as 
\begin{equation} \label{eq:ell}
    U_0 \cdot\mathrm{CNOT}_{p_1,q_1} \cdot U_1 \cdot \mathrm{CNOT}_{p_2,q_2} \cdots
U_{\ell-1} \cdot \mathrm{CNOT}_{p_\ell,q_\ell} \cdot U_\ell, 
\end{equation} 
where each $U_i$ is a tensor product of local unitaries and $\mathrm{CNOT}_{p_i,q_i}$
is a fixed nonlocal CNOT between the first qubit on $p_i, q_i$. Now count the
real parameters in $U_i$:
\begin{enumerate}
\item $U_0=\bigotimes_{j=1}^k U_0^{(j)}$ with $U_0^{(j)}\in {\rm SU}(2^{\,n/k+m+O(1)})$ on processor $j$,
so $U_0$ contributes $k\times 4^{\,n/k+m+O(1)}$ real parameters (ignoring additive $-1$’s in big-$O$).
\item For $i>0$, $U_i$ acts nontrivially only on processors $p_i$ and $q_i$, hence contributes
$2\times 4^{\,n/k+m+O(1)}$ real parameters.
\end{enumerate}
Each nonlocal CNOT has no free parameter. Therefore, the whole family obtained
with the fixed $(\pi,(p_i,q_i))$ contains at most $(k+2\ell)\times
4^{\,n/k+m+O(1)}$ real degrees of freedom.
On the other hand, ${\rm SU}(2^n)$ has real dimension $4^n-1$. By Sard's theorem
\cite{sard1942measure}, covering all of ${\rm SU}(2^n)$ requires
\[
(k+2\ell)\times 4^{\,n/k+m+O(1)}\ \ge\ 4^n-1
\Longrightarrow
\ell\ \ge\ 4^{\,(1-1/k)n-m-O(1)}\ =\ \Omega\!\left(4^{(1-1/k)n-m}\right).
\]
Otherwise, unitaries of the form Eq.~\eqref{eq:ell} is a measure-zero subset of ${\rm SU}(2^n)$.

Finally, the number of choices of $\pi$ and $(p_i,q_i)$ is finite. Since a
finite union of measure-zero sets is still measure zero, the lower bound on
$\ell$ holds even when allowing different choices of $\pi$ and $(p_i,q_i)$.
Therefore $\ell=\Omega\left(4^{(1-1/k)n-m}\right)$. 
\end{proof}

\section{Proof of \Cref{lem:rank}} \label{appx:rank}

\begin{proof}
    After exchanging $c$ qubits, the joint pure state of $A$ and $B$ before final measurement has Schmidt rank $\le 2^c$, which can be written as
    \[ \ket{\Psi}_{AB}=\sum_{i=1}^{2^{c}} \alpha_i\ket{\varphi_i}_{A}\otimes\ket{\psi_i}_{B}. \]
    Classical outputs $x, y$ are obtained by measuring $\ket{\Psi}_{AB}$ in computational basis. We have
\begin{align*}
M_{x,y}
&= \Pr(A\text{ outputs }x,\, B\text{ outputs }y)\\
&= \braket{x,y|\Psi}\,\braket{\Psi|x,y}\\
&= \bra{x,y}\left(\sum_{i,i'} {\alpha_i\alpha^\dagger_{i'}\ket{\varphi_i,\psi_i}\bra{\varphi_{i'},\psi_{i'}}}\right)\ket{x,y}\\
&= \sum_{i,i'} \alpha_i\alpha_{i'}^{\dagger}\;
   \braket{x|\varphi_i}\braket{\varphi_{i'}|x}\cdot
   \braket{y|\psi_i}\braket{\psi_{i'}|y}\,.
\end{align*}
Define vectors ${\bf p}_{i,i'}=(\braket{x|\varphi_i}\braket{\varphi_{i'}|x})_x$ and ${\bf q}_{i,i'}=(\braket{y|\psi_i}\braket{\psi_{i'}|y})_y$.
Then
$$
M=\sum_{1\leq i,i'\leq 2^c} M_i\quad\text{where }M_i:=\alpha_i\alpha_i^\dagger\ {\bf p}_{i,i'}^T\cdot {\bf q}_{i,i'}.
$$
By ${\rm rank}(M_i)=1$ and sub-additivity of matrix rank, we conclude ${\rm rank}(M)\leq 2^{2c}$, which implies $c\geq \frac{1}{2}\log {\rm rank}(M)$.
\end{proof}

\end{document}